\colorlet{LinkColor}{purple}
\colorlet{CiteColor}{Aquamarine}
\colorlet{URLColor}{magenta}
\definecolor{yqlPurple}{HTML}{8346FF}
\pgfplotsset{compat=1.18}
\newcommand{%
  \def\svgwidth{\columnwidth}
  \import{./figs/}{.pdf_tex}
}[1]{%
  \def\svgwidth{\columnwidth}
  \import{./figs/}{#1.pdf_tex}
}
\theoremstyle{plain}
\newtheorem{theorem}{Theorem}[section]
\newaliascnt{lemma}{theorem}
\newtheorem{lemma}[lemma]{Lemma}
\newaliascnt{proposition}{theorem}
\newaliascnt{corollary}{theorem}
\newtheorem{corollary}[corollary]{Corollary}
\newaliascnt{claim}{theorem}
\newaliascnt{assumption}{theorem}
\theoremstyle{definition}
\newaliascnt{definition}{theorem}
\newtheorem{definition}[definition]{Definition}
\newaliascnt{example}{theorem}
\newtheorem{example}[example]{Example}
\theoremstyle{remark}
\newaliascnt{remark}{theorem}
\newtheorem{remark}[remark]{Remark}
\DeclareMathOperator*{\argmax}{arg\,max}
\Crefname{theorem}{Theorem}{Theorems}
\Crefname{lemma}{Lemma}{Lemmas}
\Crefname{proposition}{Proposition}{Propositions}
\Crefname{corollary}{Corollary}{Corollaries}
\Crefname{conjecture}{Conjecture}{Conjectures}
\Crefname{observation}{Observation}{Observations}
\Crefname{definition}{Definition}{Definitions}
\Crefname{assumption}{Assumption}{Assumptions}
\Crefname{notation}{Notation}{Notations}
\Crefname{remark}{Remark}{Remarks}
\Crefname{table}{Table}{Tables}
\Crefname{appendix}{Appendix}{Appendices}
\newcommand{\programCref}[1]{%
  {\crefname{equation}{program}{programs}%
   \Crefname{equation}{Program}{Programs}%
   \Cref{#1}}} 
\def\@fnsymbol#1{\ensuremath{\ifcase#1\or \dagger\or \ddagger\or
   \mathsection\or \mathparagraph\or \|\or **\or \dagger\dagger
   \or \ddagger\ddagger \else\@ctrerr\fi}}
\newcommand*\samethanks[1][\value{footnote}]{\footnotemark[#1]}
\begin{document}

    \title{
        Deliberation via Matching\thanks{This work is supported by NSF grant IIS-2402823.}
    }
    \author{
        Kamesh Munagala\thanks{Department of Computer Science, Duke University, Durham, NC 27708-0129, USA. Email: {\tt \{munagala,\;\;\allowbreak qilin.ye,\allowbreak ian.zhang\}@duke.edu}.}
        \and Qilin Ye\samethanks[2]
        \and Ian Zhang\samethanks[2]
    }

    \maketitle
    \begin{abstract}

We study deliberative social choice, where voters engage in small-group discussions to output collective preferences that are then aggregated by a social choice rule.

We introduce a simple \textit{deliberation via matching} protocol. In this protocol, for each pair of candidates, we form a maximum matching among voters who disagree on that pair, and have each matched pair deliberate. We then aggregate the resulting individual and deliberative preferences using the weighted uncovered set tournament rule.

We show that this protocol has a tight distortion bound of $3$ within the metric distortion framework. In the absence of deliberation, general deterministic social choice rules can achieve this distortion, whereas deterministic tournament rules face a strictly larger lower bound of $3.11$. Our result closes this gap: Pairwise deliberation allows a tournament-based rule to attain distortion $3$. Conceptually, this shows that tournament rules can match the power of general deterministic social choice rules once they are given the minimal added power of pairwise deliberations.

We prove this bound via a novel bilinear relaxation of the non-linear program capturing optimal distortion, whose vertices we can explicitly enumerate, leading to an analytic proof. Loosely speaking, our key technical insight is that the distortion objective, as a function of metric distances to any three alternatives, is both supermodular and convex. This characterization therefore provides a new analytical tool for studying the distortion of deliberative protocols, and may be of independent interest.

Finally, although our analysis is for the full protocol, we show that this mechanism also admits a lightweight sampling-based implementation, yielding a high-probability approximation to the deterministic guarantee with arbitrary accuracy and low per-voter complexity.

\end{abstract}

    \newpage

    \setcounter{tocdepth}{2}
    \tableofcontents

    \thispagestyle{empty}

    \newpage
    \setcounter{page}{1}

    \section{Introduction}

Collective decision-making lies at the core of both democratic governance and algorithmic social choice. Classical models aggregate individual voter preferences over a set of alternatives using a social choice rule. In practice, individuals deliberate and exchange arguments, often resulting in the emergence of collective preferences. A large body of research in deliberative democracy, most notably ``deliberative polling'' and ``citizens’ assemblies,'' shows that when individuals are given balanced information and structured opportunities for deliberation, they can indeed perform such preference aggregation within the deliberating group \cite{fishkin1991democracy,Ingham18}. This underscores the normative intuition that collective decisions should emerge from public reasoning rather than isolated votes.

At the same time, empirical work indicates that deliberation is most effective in small groups. Large assemblies or unstructured online forums often suffer from coordination challenges, conformity pressures, and polarization effects, where participants reinforce existing biases \cite{Asch,Deutsch1955,Price}. In contrast, small, balanced groups promote reasoned exchange of opinions \cite{Ingham18,FlaniganPW23}, while maintaining manageability and diversity of perspectives. Beyond these empirical considerations, small-group deliberation is also more practical in large-scale settings: it can be implemented in parallel, either through many simultaneous discussions among small groups of participants, or via automated or AI-assisted mediators \cite{ChrisBail,ChatBot,stanfordd,LLM,ashkinaze2024}. These advantages motivate theoretical models that capture the benefits of structured, small-group deliberation rather than full-group discussion.

Recent theoretical work has begun to formalize this intuition \cite{FainGMS17,GoelGM25}. In these models, voters engage in local deliberations that aggregate their ordinal preferences, and a social choice rule is applied to the resulting rankings to find the winning outcome. Such frameworks allow us to study the fundamental algorithmic question of whether structured, small-group deliberation provably improves the efficiency of collective decisions.

We study this question using the \emph{metric distortion} framework \cite{AnshelevichBEPS18}, a quantitative model for evaluating the efficiency of social choice rules. In this framework, both voters and alternatives are embedded in an unknown metric space that captures their underlying preferences: Voters prefer alternatives that are closer to them in this latent metric. A social choice rule then selects a single winner, based only on the voters’ ordinal rankings over alternatives. The distortion of a rule measures how far the chosen winner can be from the voters in terms of total distance in the worst case, compared to the welfare-optimal alternative that minimizes the total distance to all voters had the latent metric been revealed. Thus, a smaller distortion indicates a decision rule that better preserves social welfare despite having only ordinal information.

\noindent\paragraph{Tournament Rules.} Within this setting, it is known that any deterministic rule must incur a distortion of at least $3$ \cite{AnshelevichBEPS18,MunagalaW19,Gkatzelis0020,Kizilkaya022}. A prominent and widely studied subclass of such rules is the class of \emph{tournament rules}, which base their decision on the outcomes of pairwise contests between alternatives. These rules are classical, dating back to Ramon Llull in the 1300s \cite{voting-book}, and further, are the simplest type of rules with bounded metric distortion. Tournament rules are appealing for several reasons. First, they have low representation complexity: We only need the vote counts for quadratically many pairwise comparisons between candidates, rather than counts for all possible rankings, which could be exponentially large. Second, as discussed in \cite{CharikarRT025}, such pairwise comparisons are also often a practical constraint, for instance, arising in recent applications like RLHF \cite{christiano2023,zhu2024}, where humans are asked to compare two outputs of a generative model. Finally, tournament rules also admit low-information implementations: pairwise contests can be estimated accurately from voter samples, so each voter need only compare a small number of alternatives.

The simplicity of tournament rules comes with a drawback: Any tournament rule (that only uses pairwise ranking information about candidates) has a lower bound of $3.11$ on distortion \cite{CharikarRT025}, which is worse than the deterministic optimum of $3$.\footnote{The best known upper bound on the distortion of tournament rules is much larger, around $3.93$ \cite{CharikarRT025}.} This motivates the following question:

\begin{quote}
    \emph{Can small-group deliberation, where voters pairwise aggregate preferences through deliberation, improve the distortion of tournament rules while preserving their simplicity?}
\end{quote}

The recent work of \cite{GoelGM25} provided the first affirmative answer in the case of {\em three-person deliberation}. In their model, every group of three voters deliberates between every pair of alternatives, and each group collectively chooses between any two alternatives by favoring the one with the smaller \emph{average distance} in the latent metric, that is, the alternative closer to the group’s barycenter. When the resulting tournament graph is aggregated through the well-known Copeland tournament rule \cite{voting-book,landau}, the authors showed that the distortion of such a three-person deliberation protocol is strictly better than $3$, thereby surpassing the lower bounds for both tournament and general social choice rules without deliberation. This result established that structured local deliberation can provably improve social welfare. However, their analysis relied on solving a high-dimensional non-convex program numerically, and their protocol achieves a distortion of $4.414$ for two-person deliberations. This leaves open the important question of whether pairwise deliberation also strictly improves the distortion of tournament and general social choice rules.

\subsection{Our Contribution: Deliberation via Matching}

In this paper, we propose a novel and natural protocol for deliberation based on \emph{pairwise discussions} (groups of size 2). Our protocol, called \emph{deliberation via matching}, proceeds as follows. For every pair of candidates, we form an arbitrary maximum matching among voters who disagree on their relative ranking, and each matched pair deliberates.\footnote{We later show, in \Cref{subsect:matching}, that the distortion guarantee is independent of the choice of maximum matching, so it is without loss to begin with an arbitrary one.} The result of each deliberation is a function of their pairwise preferences, as captured by the sum of the latent distances in the underlying metric. These pairwise preferences and individual rankings are then aggregated using the $\lambda$-\emph{weighted uncovered set} tournament rule \cite{MunagalaW19,Kempe_2020}, where $\lambda \in [0.5,1]$ is a parameter controlling the strength of dominance required in pairwise comparisons. A scalar parameter $w \geqslant 0$ controls the influence of deliberation: each matched pair contributes weight $w$ to its joint outcome, while unmatched individual votes retain unit weight.

This protocol differs from prior work \cite{GoelGM25}, which required all groups of voters of a fixed size to deliberate, in that (i) the protocol is more natural to state and is simpler to analyze, (ii) it is parsimonious in requiring only one deliberation per voter for each pair of candidates, and (iii) it allows precise control over how individual votes and pairwise deliberations between voters with opposing preferences are weighted when constructing the tournament graph.

\noindent\paragraph{Distortion Bound.} Within the metric distortion framework, we prove the following main theorem:

\begin{theorem}[Main Theorem, proved in \Cref{sec:general}]
    \label{thm:main-informal}
    The deliberation-via-matching protocol with pairwise deliberation achieves a metric distortion of $3$ for an appropriate choice of $(\lambda,w)$.
\end{theorem}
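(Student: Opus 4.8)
The plan is to follow the standard template for metric-distortion upper bounds. Let $W$ be the candidate returned from the $\lambda$-weighted uncovered set and let $O$ be the welfare-optimal candidate; writing $\mathrm{cost}(A)=\sum_i d(i,A)$, the goal is to show $\mathrm{cost}(W)\le 3\,\mathrm{cost}(O)$ for a suitable choice of $(\lambda,w)$. Since $W$ lies in the uncovered set, it is not $\lambda$-covered by $O$, which forces one of two combinatorial situations: either (a) $O$ fails to $\lambda$-beat $W$ in the deliberation-augmented tournament, or (b) $O$ does $\lambda$-beat $W$ but there is a witness candidate $C$ such that $W$ beats $C$ while $O$ does not $\lambda$-beat $C$. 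In either case only the candidates $\{W,O,C\}$ (just $\{W,O\}$ in case (a)) are relevant, so it suffices to upper bound $\mathrm{cost}(W)/\mathrm{cost}(O)$ over all instances restricted to these candidates that realize the corresponding pattern.

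Once the candidates are fixed, the state of a voter $i$ is the triple $(d(i,W),d(i,O),d(i,C))$, and every quantity we care about is an explicit function of these triples: each voter's contributions to $\mathrm{cost}(W)$ and $\mathrm{cost}(O)$; the pairwise preference of $i$ on each candidate pair; and, for a matched pair $(i,j)$ on a pair $(X,Y)$, the deliberation outcome, which compares $d(i,X)+d(j,X)$ with $d(i,Y)+d(j,Y)$. Because the protocol only promises \emph{some} maximum matching on each candidate pair, the bound must hold for the worst choice, so the matchings are themselves part of the adversary's program. I would therefore encode the worst-case distortion as a single non-linear, non-convex optimization: maximize $\mathrm{cost}(W)/\mathrm{cost}(O)$ over voter triples obeying the triangle inequalities, over preference sign patterns consistent with those triples, over maximum matchings on each candidate pair, and over the membership constraints on $W$ coming from case (a) or (b). Normalizing $\mathrm{cost}(O)=1$ makes the objective linear, so all the difficulty lives in the feasible region.

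The engine that makes this tractable is the structural fact advertised in the abstract: for a fixed combinatorial type (fixed preference signs, fixed pattern of which matched pairs deliberate which way, fixed case), the per-voter contribution to the distortion objective, viewed as a function of that voter's three distances, is \emph{convex}, and the contribution of a matched pair, viewed as a function of the two voters' distance vectors jointly, is \emph{supermodular}. Convexity lets me replace each voter's triple by an extreme point of the polyhedron defining its type, collapsing to finitely many canonical voter configurations; supermodularity controls the pooling inside the deliberation constraints, forcing the adversary to match voters of complementary extreme types and, crucially, never to benefit from correlating the matchings across the three candidate pairs. Together these reduce the program to a \emph{bilinear} relaxation -- linear in the vector of how much voter mass sits at each extreme configuration, and linear in the handful of remaining free distance parameters -- whose feasible region is a product of polytopes. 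Its vertices are pairs of vertices of the two blocks and can be enumerated explicitly, producing a finite list of closed-form expressions in $(\lambda,w)$; one then picks $(\lambda,w)$ so that the largest of these equals exactly $3$, matching the known lower bound of $3$.

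The step I expect to be the main obstacle is establishing the supermodularity--convexity structure cleanly and, relatedly, proving that the bilinear relaxation is \emph{tight} -- that replacing the integral maximum-matching constraints on the three candidate pairs by independent fractional "mass splits" does not push the optimum above $3$. The adversarial maximum matching couples the pairs, since a voter matched one way on $(W,O)$ is simultaneously someone's partner on $(W,C)$ and on $(O,C)$, so decoupling the per-pair relaxations requires exactly the supermodularity argument above, together with a careful check that the relaxed optimum is realized by a genuine metric that is consistent with the assumed preference pattern, produces the assumed deliberation outcomes, and keeps $W$ uncovered. The remaining work -- enumerating vertices and verifying the finitely many resulting inequalities in $(\lambda,w)$ -- is routine but tedious, and is where the concrete values of $\lambda$ and $w$ get pinned down.
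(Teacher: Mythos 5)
Your outline reproduces the paper's high-level strategy (reduce to three candidates $A,B,C$ via the $\lambda$-WUS witness condition, exploit convexity and supermodularity to collapse the non-convex program into a bilinear one, enumerate vertices, then tune $(\lambda,w)$), but it stops at the roadmap level: the steps you yourself flag as "the main obstacle" are exactly the technical content of the proof, so as it stands there is a genuine gap. Concretely, the missing engine is the change of variables $X(v)=d(v,C)-d(v,A)$, $Y(v)=d(v,B)-d(v,C)$, $Z(v)=d(v,C)$, under which the objective becomes the linear functional $\mathbb{E}X+(R+1)\mathbb{E}Y+R\,\mathbb{E}Z$ and \emph{all} of the nonlinearity is isolated in the metric-feasibility envelope $Z_{\min}(v)=\max\bigl\{\tfrac{\|X\|_\infty+X(v)}{2},\,\tfrac{\|Y\|_\infty-Y(v)}{2},\,\tfrac{\|X+Y\|_\infty+X(v)-Y(v)}{2}\bigr\}$; proving this is necessary \emph{and} sufficient is also what settles your realizability worry. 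The convexity and supermodularity you invoke are properties of this specific function of $(X(v),Y(v))$, not the properties you state: the per-voter contribution to $\mathrm{cost}(W)-R'\,\mathrm{cost}(O)$ as a function of the three raw distances is linear (so "convexity of the per-voter contribution" carries no content without the envelope), and supermodularity is not about a matched pair's joint contribution. In the paper it is used in an exchange argument \emph{across voters} to show the worst case couples the marginals of $X$ and $Y$ counter-monotonically, while the structure of the worst-case matchings is obtained by a separate, elementary prefix-exchange argument; your concern about matchings being correlated across candidate pairs dissolves because the $(A,C)$ matching is a function of $X$ alone and the $(C,B)$ matching of $Y$ alone.

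Two further ingredients you do not supply are needed before "enumerate the vertices" is even well posed. First, one must argue (via a continuity/shifting argument that only decreases the objective) that the constraints can be tightened to $f(AC)=1-\lambda$ and $f(CB)=\lambda$; with the specific parameters $\lambda^*=(3-\sqrt3)/2$, $w^*=\sqrt3-1$ this forces $|AC|+W_{AC}=|BC|+W_{BC}=1/2$, which is what makes the mass polytope a triangle with three vertices in each of the two cases. Second, after the counter-monotone coupling and prefix structure, Jensen's inequality (convexity of $Z_{\min}$ with the norms held fixed, plus a check that averaging cannot increase $\|X\|_\infty,\|Y\|_\infty,\|X+Y\|_\infty$ or break the deliberation constraints) is what collapses each interval to a single weighted voter, yielding the finite bilinear program whose six vertex LPs are then certified by explicit duals. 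None of these steps is routine bookkeeping, and your proposal neither states them precisely nor indicates how to prove them, so the argument as written does not yet establish the distortion bound of $3$.
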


This breaks the $3.11$ lower bound for tournament rules without deliberation and matches the deterministic optimum of $3$ for any social choice rule without deliberation. Conceptually, this shows that in the metric distortion framework, {\em tournament rules are just as powerful as general social choice rules, provided the former rules are given the minimal added power of pairwise deliberations}.

\noindent\paragraph{Per-Voter Complexity.} Our protocol also admits a lightweight sampling-based variant. In \Cref{sec:sampling}, we show that for each pair of candidates $(A,B)$, the corresponding preference strength can be estimated to arbitrary accuracy with high probability from a small, random sample of voters. The required sample size is quadratic in $m$, the number of candidates, and independent of $n$, the electorate's population size. In particular, when $n \gg m^2 \log m$, any given voter participates in ranking \textit{at most one} pair of candidates and in \textit{at most one} pairwise deliberation with high probability.\footnote{This yields an approximation to the deterministic distortion bound with high probability, and is therefore different from the expected distortion bounds for randomized social choice rules in \cite{AnshelevichP17,CharikarWRW24}.} This means the {\em cognitive load} on a voter is small, showcasing the advantage of such a tournament rule (even with deliberations) over general social choice rules (even without deliberations) that require voters to output \textit{full} rankings or make multiple comparisons.

This additionally justifies why we assume that voter preferences remain unchanged during the deliberation step in our model: Even if they were to change for the deliberating sample, the preferences in the population remain unchanged assuming the population is large. In effect, we therefore assume that our deliberation protocol and tournament rule implicitly elicit the strength of voter preferences in the underlying metric space without changing the individual preferences for the population.

\noindent\paragraph{Lower Bounds.} We complement this positive result with several lower bounds in \Cref{sec:2-candidates,sec:main_lb}. In \Cref{sec:2-candidates}, we show that {\em any deterministic social choice rule} that uses the outcomes of individual votes and pairwise deliberations has distortion at least $2$, even for $m = 2$ candidates. We further show that for $m = 2$ candidates, this bound is tight, and the deliberation-via-matching protocol yields the optimal distortion of $2$.

We finally show in \Cref{sec:main_lb} that the bound of $3$ is optimal for the deliberation-via-matching protocol, that is, for any $(\lambda, w)$, there is an instance with distortion at least $3$. This shows that \Cref{thm:main-informal} cannot be improved for this protocol\footnote{The lower bound also applies when the maximum matching is chosen in a benign fashion as opposed to adversarially.}, though we leave open the question of improving the bound via a different protocol that uses pairwise deliberations (but possibly not a tournament social choice rule).

\noindent\paragraph{Technical Contribution: Bilinear Forms, Supermodularity, and Convexity.} Beyond the quantitative bounds, our main technical contribution is to develop a novel analytical method for studying deliberations. As pointed out in \cite{GoelGM25}, the key difficulty in analyzing deliberative protocols is that the distortion objective is the solution to a non-linear, non-convex program over the distribution of voter–candidate distances, often with unbounded support. This contrasts with classical social choice, where distortion typically arises as the solution to a linear program \cite{Kempe_2020}. The resulting non-linearity severely limits both the classes of deliberative protocols that can be analyzed and the intuition one can draw from such analyses.

Our main contribution in \Cref{sec:general} is to show that, for the deliberation-via-matching protocol, this non-linear program can be reformulated as a \emph{bilinear optimization problem}, where the two linear components correspond to voter masses and metric distances, respectively, each with its own constraint set. This reformulation, which relies crucially on the ``matching'' structure of the protocol, enables both an analytic proof of the distortion bound and a clear characterization of the structure of worst-case instances.

In more detail, our first key observation in the reduction is that this objective has a {\em supermodular} structure in the metric distances. This structure allows us to show that the worst-case instance has voter preferences in a {\em monotonic} order, where the relative strengths of preferences to three given candidates are monotonic. We next show that our specific way of writing the objective function is also {\em convex} in the metric variables, which allows us to use Jensen's inequality to create a small collection of groups of voters based on how they are matched in deliberations, and what the outcomes of these matchings are. We collapse these groups into singleton weighted voters, yielding a bilinear objective with a small number of probability masses, and separate linear constraints on these masses and the metric distances. We then enumerate all vertices of the former polytope (at most six) and solve the resulting linear programs to show our distortion bound. We explicitly produce the corresponding dual certificates, yielding a fully analytic proof of the distortion bound.

As a warm-up, we analyze the special case with only two alternatives in \Cref{sec:2-candidates}. In this setting, the deliberation-via-matching rule admits a simpler and more direct analysis: by pairing voters who disagree and letting each pair support the alternative with the smaller total distance, we show that any winner must be backed, in effect, by at least two-thirds of the electorate. This immediately yields an optimal distortion bound of $2$, improving upon the classic bound of $3$ for deterministic rules without deliberation in this case.\footnote{This also yields a bound of $4$ for the multiple candidate case via standard arguments \cite{AnshelevichBEPS18}.} The two-candidate analysis captures the essential geometry of deliberation and serves as the basis for our general multi-candidate distortion bound in \Cref{sec:general}.

Taken together, our results suggest that pairwise deliberation can be both \emph{powerful} and \emph{tractable}: Even minimal pairwise interactions suffice to make the well-studied tournament rules match the distortion bounds of general social choice rules. More broadly, our bilinear form characterization provides a new method for analyzing deliberative extensions of social choice mechanisms.

\subsection{Related Work}

Our work lies at the intersection of metric distortion, deliberative social choice, and sampling-based decision mechanisms. We briefly touch on the most relevant lines of research.

\noindent\paragraph{Metric Distortion and Tournament Rules.} The metric distortion framework was introduced by Anshelevich et al.\ \cite{AnshelevichBEPS18}, building on earlier work by Procaccia and Rosenschein \cite{ProcacciaR06}, to study how well deterministic voting rules can approximate the social optimum when only ordinal information is available. They showed that the Copeland rule has distortion at most $5$, and that no deterministic rule can achieve distortion below $3$. Later work tightened the upper bound to $3$ via novel social choice rules such as the \emph{matching uncovered set} \cite{Gkatzelis0020,MunagalaW19} and \emph{plurality veto} \cite{Kizilkaya022}. For randomized voting rules, the work of \cite{AnshelevichP17} showed a lower bound of $2$. This lower bound was subsequently improved to $2.11$ by \cite{CharikarR22}. An upper bound of $3$ follows from random dictatorship \cite{AnshelevichP17}, and this was improved to $2.74$ in \cite{CharikarWRW24}. We refer the reader to \cite{AnshelevichFSV21} for a survey.

Tournament rules, which use only pairwise majority contests between candidates, are a central subclass of deterministic voting rules. On the positive side, Munagala and Wang \cite{MunagalaW19} and Kempe \cite{Kempe_2020} defined \textit{weighted tournament rules}, a generalization of the Copeland rule, with distortion at most $2 + \sqrt{5} \approx 4.236$. Subsequently, Charikar et al.\ \cite{CharikarRT025} improved the upper bound to $3.93$ but proved a lower bound of approximately $3.11$ for every deterministic tournament rule, strengthening the lower bound of $3$ from \cite{GoelKM17}. Thus, in the absence of deliberation, tournament rules cannot match the distortion $3$ achievable by general deterministic mechanisms. These rules are fundamentally limited despite their simplicity and centuries-long history \cite{voting-book}.

Our work revisits this barrier using pairwise deliberations. We show that by allowing pairs of voters to aggregate their preferences, a tournament-based rule can in fact achieve a distortion bound of $3$, thereby escaping the $3.11$ bound for non-deliberative tournaments.

\noindent\paragraph{Deliberative Social Choice: From Sortition to Dyads.} The idea that deliberation can improve collective decisions has a long pedigree in political philosophy and deliberative democracy, for example through deliberative polling and citizens’ assemblies \cite{fishkin1991democracy,Ingham18}. Many deliberative systems in practice use \emph{sortition}, which is the random sampling of participants into discussion bodies, to reduce biases and improve legitimacy.

As mentioned before, our deliberation model can be viewed as an implicit mechanism to elicit strength of voter preferences between pairs of candidates. Several theoretical models of eliciting such strengths within the metric distortion framework have been recently proposed \cite{Strength,Distributed,goyal2025metricdistortionprobabilisticvoting}. Focusing on deliberation, Caragiannis et al.\ \cite{CaragiannisM024} examine models of sortition where a large random sample of voters deliberates to compute a consensus or median point, achieving a logarithmic (in the number of alternatives) bound on the sample size required to attain distortion arbitrarily close to one. However, this assumes a single large deliberative body, which raises issues of coordination and bias in practice.

In contrast, our focus is on \emph{small-group deliberation} (more specifically, pairwise deliberations) rather than sortition. Here, Fain et al.\ \cite{FainGMS17} studied a two-person bargaining model under metric preferences, while Goel, Goyal, and Munagala \cite{GoelGM25} studied deliberation by all groups of up to $k$ voters, with the resulting tournament aggregated via the Copeland rule. Their work showed that $k=3$ can beat the deterministic distortion bound, while for $k=2$ they obtained distortion $4.414$. Our protocol improves this to $3$ while requiring one deliberation per voter for each candidate pair.

We note that the protocol in \cite{GoelGM25} required all pairs of voters to deliberate between a pair of alternatives, while our protocol requires only one deliberation per voter for a pair of alternatives.

Finally, focusing on pairwise deliberation is natural, since most discussions unfold through back-and-forth exchanges between pairs of participants. Communication research models such dyadic interactions as the basic unit of conversational dynamics \cite{atomic-dyad,dyadic}, and many deliberative settings, such as in-person debates and online replies, can be viewed as networks of such exchanges \cite{discourse-acts}. Modeling this atomic form of deliberation allows us to capture the core mechanism by which deliberation allows for preference aggregation.

\subsection*{Roadmap}

We formalize the metric distortion framework and deliberation model in \Cref{sec:prelim}, and define the deliberation-via-matching protocol in \Cref{sec:protocol}. We begin with a warm-up analysis of the 2-candidate case in \Cref{sec:2-candidates}. In the technical core of \Cref{sec:general}, we present the analysis of the general case and prove an (upper) bound of $3$ on the distortion of our protocol, along with a matching lower bound in \Cref{sec:main_lb}. In \Cref{sec:sampling}, we describe the sampling-based implementation and show that its sample complexity, measured in the number of sampled voters per candidate pair, is logarithmic in the number of candidates.
    \section{Preliminaries}
\label{sec:prelim}

We begin by reviewing the metric distortion framework and the class of tournament rules used in our analysis, following \cite{AnshelevichBEPS18,MunagalaW19,GoelGM25}.

\noindent\paragraph{Metric Distortion Framework.} Let $\mathcal{C} = \{c_1, \dots, c_m\}$ denote a finite set of $m$ candidates (alternatives), and let $V$ denote a finite set of $n$ voters. Each voter $v \in V$ has a ranking over the candidates that is \emph{consistent} with an underlying latent metric space $(\mathcal{M}, d)$ that contains both voters and candidates as points. If $v$ ranks candidate $\Psi$ higher than $\Upsilon$, then $d(v,\Psi) \leqslant d(v,\Upsilon)$. The metric $d$ is not known to the social planner, who only observes the ordinal rankings induced by it. For any two candidates $\Psi,\Upsilon \in \mathcal{C}$, let $\Psi\Upsilon$ denote the set of voters who prefer $\Psi$ to $\Upsilon$, with cardinality $\lvert \Psi\Upsilon\rvert$. Should ties exist, i.e., $d(v,\Psi) = d(v,\Upsilon)$, we handle them in any consistent way that counts each tied voter toward exactly one of $\Psi\Upsilon, \Upsilon\Psi$. We let $\sigma$ be the profile of preference orderings over candidates for each voter.

For any candidate $\Psi \in \mathcal{C}$, we define its \emph{social cost} with respect to a metric $d$ to be
\[
    SC(\Psi, d) \;=\; \sum_{v\in V} d(v,\Psi).
\]

When the metric $d$ is clear from context, we simply write $SC(\Psi)$. Let $\Psi^\ast = \arg\min_{\Psi \in \mathcal{C}} SC(\Psi)$ denote the socially optimal (1-median) alternative. Given a social choice rule $\mathcal{S}$ that maps the profile of rankings to a winning candidate $\mathcal{S}(\sigma)$, the \emph{distortion} of $\mathcal{S}$ is defined as
\[
    \mathrm{Distortion}(\mathcal{S})
    \;=\;
    \sup_{\sigma}
    \sup_{d \text{ consistent with }\sigma}
    \frac{SC(\mathcal{S}(\sigma), d)}{SC(\Psi^\ast, d)}.
\]

A smaller distortion indicates that $\mathcal{S}$ achieves better welfare despite only knowing ordinal information.

\noindent\paragraph{Tournament Rules.} A \emph{tournament graph} on the candidates is a complete directed graph, with weights $f(\Psi\Upsilon) \in [0,1]$ for each directed edge $\Psi \rightarrow \Upsilon$, so that for every pair of candidates $(\Psi,\Upsilon)$, we have $f(\Psi\Upsilon) + f(\Upsilon\Psi) = 1$. In the setting without deliberation, $f(\Psi\Upsilon)$ represents the fraction of voters that prefer $\Psi$ over $\Upsilon$; however, the weights we construct later will also reflect the outcome of deliberation. A tournament rule takes such a weighted graph as input and outputs the winning candidate.

\label{prelim:lambda-WUS}
Among many tournament-based social choice rules, we focus on the \emph{$\lambda$-weighted uncovered set (WUS)} rule of \cite{MunagalaW19,Kempe_2020}, which builds on the classic uncovered set rules \cite{landau}. Given a tournament with edge weights $f(\Psi\Upsilon) \in [0,1]$, a candidate $\Psi$ is in the \emph{$\lambda$-weighted uncovered set} $\mathrm{WUS}_\lambda$ if for every other candidate $\Upsilon$, either

\begin{enumerate}
    \item $f(\Psi\Upsilon) \geqslant 1-\lambda$, or
    \item there exists a third candidate $\Delta$ such that $f(\Psi\Delta) \geqslant 1-\lambda$ and $f(\Delta\Upsilon) \geqslant \lambda$.
\end{enumerate}

It is known that for $\lambda \in [1/2,1]$, $\mathrm{WUS}_\lambda$ is nonempty \cite{MunagalaW19}. Furthermore, for $\lambda = (\sqrt{5}-1)/2 \approx 0.618$, the rule selecting any candidate from $\mathrm{WUS}_\lambda$ achieves distortion at most $2+\sqrt{5} \approx 4.236$ \cite{MunagalaW19,Kempe_2020}. The special case where $\lambda = 1/2$ is the standard notion of uncovered set \cite{landau}; the classic Copeland rule due to Llull \cite{voting-book} that chooses any candidate that beats the greatest number of others in simple majority voting between them chooses an outcome that belongs to this set.

\noindent\paragraph{Small-Group Deliberation.} We next recall the pairwise deliberation model with \emph{averaging} introduced in \cite{GoelGM25}. A deliberation involves two voters $u,v$ and a pair of candidates $(\Psi,\Upsilon)$. Under the \emph{averaging model}, the pair collectively supports the alternative with smaller total distance, or equivalently,
\[
    \Psi \text{ wins against } \Upsilon
    \quad\text{if }\quad
    d(u,\Psi) + d(v,\Psi) < d(u,\Upsilon) + d(v,\Upsilon).
\]

If equality holds, we handle the deliberation tie in any consistent way that assigns the pair to exactly one of the two outcomes. Importantly, deliberation in our model does not change individual voter preferences or their locations in the metric space; instead, each carefully chosen deliberating group outputs a collective comparison between two candidates, which will be used in our protocol as described in \Cref{sec:protocol}.

    \section{The Deliberation via Matching Protocol}
\label{sec:protocol}

We now describe our main protocol, \emph{Deliberation via Matching}, which implements pairwise deliberation between voters who disagree on a pair of candidates. The protocol defines a weighted tournament over candidates, parameterized by a deliberation weight $w \geqslant 0$ and the $\lambda$-weighted uncovered set parameter $\lambda \in [1/2,1]$. These parameters will be optimized later.

\noindent\paragraph{The Matching Step.} Fix two distinct candidates $\Psi,\Upsilon \in \mathcal{C}$. Let $\Psi\Upsilon$ denote the set of voters who prefer $\Psi$ to $\Upsilon$, and $\Upsilon\Psi$ denote those who prefer $\Upsilon$ to $\Psi$.

Form an arbitrary maximum cardinality matching $M_{\Psi\Upsilon}$ between voters in $\Psi\Upsilon$ and voters in $\Upsilon\Psi$; that is, select $\lvert M_{\Psi\Upsilon}\rvert = \min\{\lvert \Psi\Upsilon\rvert, \lvert \Upsilon\Psi\rvert\}$ disjoint pairs $(u_i,v_i)$ with $u_i \in \Psi\Upsilon$ and $v_i \in \Upsilon\Psi$ for $i=1,\cdots,\lvert M_{\Psi\Upsilon}\rvert$. Each pair $(u_i,v_i)$ represents a pairwise deliberation between two voters with opposing preferences on $(\Psi,\Upsilon)$. Any remaining voters (those not matched) are said to be \emph{unmatched}. Note that all unmatched voters must have the same preference: either they all prefer $\Psi$ (if $\lvert \Psi\Upsilon\rvert \geqslant \lvert \Upsilon\Psi\rvert$) or all prefer $\Upsilon$ (if $\lvert \Psi\Upsilon\rvert < \lvert \Upsilon\Psi\rvert$).

In the averaging model of deliberation, let $W_{\Psi\Upsilon}$ denote the number of matched pairs that favor $\Psi$, and $W_{\Upsilon\Psi} = \lvert M_{\Psi\Upsilon}\rvert - W_{\Psi\Upsilon}$ the number that favor $\Upsilon$.

\noindent\paragraph{The Aggregation Step.} We define the \emph{weighted pairwise score} of $\Psi$ against $\Upsilon$ as
\[
    \mathrm{score}(\Psi\Upsilon;w) = \frac{\lvert \Psi\Upsilon\rvert + w \cdot W_{\Psi\Upsilon}}{n},
\]
and symmetrically $\mathrm{score}(\Upsilon\Psi;w) = (\lvert \Upsilon\Psi\rvert + w \cdot W_{\Upsilon\Psi}) /n$. We divide by $n$ so that the $\mathrm{score}()$ function is independent of $n$, the number of voters. The total score for the pair $(\Psi,\Upsilon)$ is therefore $\mathrm{score}(\Psi\Upsilon;w) + \mathrm{score}(\Upsilon\Psi;w) = 1 + w \cdot \lvert M_{\Psi\Upsilon}\rvert /n$. We define the normalized score to be

\begin{equation}
    f(\Psi\Upsilon;w) = \frac{\mathrm{score}(\Psi\Upsilon;w)}{\mathrm{score}(\Psi\Upsilon;w) + \mathrm{score}(\Upsilon\Psi;w)}
    \label{eq:f}
\end{equation}

and define $f(\Upsilon\Psi;w)$ likewise so that $f(\Psi\Upsilon;w) + f(\Upsilon\Psi;w) = 1$. When the context is clear (e.g. $w$ is a prescribed constant), we may simply write $f(\Psi\Upsilon; w)$ and $\mathrm{score}(\Psi\Upsilon;w)$ as $f(\Psi\Upsilon)$ and $\mathrm{score}(\Psi\Upsilon)$.

Applying this procedure to every ordered pair of candidates $(\Psi,\Upsilon)$ defines a weighted tournament graph on $\mathcal{C}$ where the weight on edge $(\Psi,\Upsilon)$ is $f(\Psi\Upsilon; w)$. The final collective decision is obtained by applying the $\lambda$-weighted uncovered set rule $\mathrm{WUS}_\lambda$ (as defined in \Cref{sec:prelim}) to this tournament.

\noindent\paragraph{The Parameters.} The protocol is governed by two parameters:

\begin{itemize}
    \item the \emph{deliberation weight} $w \geqslant 0$, controlling the relative influence of pairwise deliberation outcomes versus individual preferences, and
    \item the \emph{uncovering parameter} $\lambda \in [1/2,1]$, which determines the strength of the dominance condition used in the $\lambda$-weighted uncovered set rule.
\end{itemize}

When $w = 0$, the protocol reduces to a standard tournament rule without deliberation. As $w$ increases, the outcomes of matched deliberations receive greater emphasis, interpolating smoothly between non-deliberative aggregation and fully deliberative pairwise refinement.

    \section{Warm-up: The Copeland Rule and Some Lower Bounds}
\label{sec:2-candidates}

We first consider the setting in deliberation-via-matching where we set $\lambda = 0.5$ and $w = 1$. This means the deliberation outcomes are given the same importance as individual votes, and we run the Copeland rule to aggregate the tournament into a winner. In the Copeland rule, candidate $A$ beats $B$ if $f(AB;1) \geqslant 0.5$. The rule outputs any candidate that beats the greatest number of other candidates. We show that this protocol has distortion exactly $4$.

Towards this end, we start by analyzing the setting with only $m=2$ candidates and show a distortion of $2$. Since the Copeland winner lies in the uncovered set \cite{AnshelevichBEPS18}, a standard argument shows that the distortion for any $m \geqslant 2$ candidates will be at most the square of the distortion for two candidates, showing an upper bound of $4$ for general number of candidates. Despite the simplicity of this analysis, we show that the bound of $4$ is tight for this setting of $(\lambda, w)$.

For $m=2$, in the absence of deliberation, it is well known that any deterministic social choice rule has a worst-case distortion of $3$ \cite{AnshelevichBEPS18}. Therefore, our deliberation-via-matching protocol offers a major improvement under this setting. We further show that this bound is tight for $m = 2$ candidates {\em regardless} of the deterministic social choice rule used, or the way pairwise deliberations are constructed. This provides an unconditional lower bound for metric distortion with pairwise deliberations. Similarly, we show a lower bound of $1.5$ for randomized social choice rules.

The setting of $(\lambda,w)$ in this section isolates the geometric effect of pairwise deliberation without the additional complexity of tournament aggregation. It therefore acts as a warm-up for the more general analysis in the following section, where we extend the same reasoning to find the optimal $(\lambda, w)$ that yields distortion $3$.

\subsection*{Warm-up: Preliminaries}

Since this section mainly focuses on the $m=2$ candidate case, we specialize the notation to this setting. Let the candidates be $A$ and $B$, separated by distance $d(A,B)$ in the latent metric. Let $AB$ (respectively $BA$) denote the set of voters who prefer $A$ (respectively $B$), so that $\lvert AB\rvert + \lvert BA\rvert = n$ is the total number of voters. Let $M$ denote the arbitrary matching formed between voters in $AB$ and those in $BA$ according to the deliberation-via-matching protocol. Each matched pair $(u,v) \in M \subseteq AB\times BA$ deliberates between $A$ and $B$ and supports the alternative with the smaller total distance to the pair. Define

\begin{align*}
    M_A &= \{(u,v)\in M: A\text{ wins}\} = \{(u,v)\in M: d(u,A) + d(v,A) < d(u,B) + d(v,B)\}, \\
    M_B &= \{(u,v)\in M: B\text{ wins}\} = \{(u,v)\in M: d(u,A) + d(v,A) > d(u,B) + d(v,B)\}.
\end{align*}

(Recall that we handle ties between $AB,BA$ and between $M_A, M_B$ in any consistent way, meaning each voter prefers exactly one in $\{A,B\}$, and so does each matched pair in $M$.) Observe $M_A, M_B$ partition $M$, and recall that the number of $A$-wins pairs (resp. $B$-win pairs) are $W_A = \lvert M_A\rvert$ (resp. $W_B = \lvert M_B\rvert$) by definition. The electorate now splits into three types of voters: (i) Those that contribute to $A$-wins, grouped as pairs from $AB\times BA$; (ii) those that contribute to $B$-wins, also grouped as pairs; and (iii) unmatched voters, all of whom belong to $AB$ if $\lvert AB\rvert \geqslant \lvert BA\rvert$ and $BA$ otherwise. Ties can be apportioned in any way as long as every tie pair is counted once.

Following the protocol in \Cref{sec:protocol}, we will set $\lambda = 0.5$ and $w = 1$. This means we set
\[
    \mathrm{score}(AB) =\frac{\lvert AB\rvert + W_{A}}{n},
\]
and apply the Copeland rule with $f(AB) = \mathrm{score}(AB) / (\mathrm{score}(AB) + \mathrm{score}(BA))$, so that $A$ is the winner if $\mathrm{score}(AB) \geqslant \mathrm{score}(BA)$, and $B$ is the winner otherwise.

We note that the classic Copeland rule declares $A$ as the winner if and only if $\lvert AB\rvert \geqslant \lvert BA\rvert$; it is well known that this rule, as well as any other deterministic rule relying solely on ordinal information, has distortion $\geqslant 3$ even on two candidates \cite{AnshelevichBEPS18}. With deliberation, we instead declare $A$ as the winner if and only if $\lvert AB\rvert + W_A \geqslant \lvert BA\rvert + W_B$, and we show this simple change leads to an improved distortion of $2$.

\subsection{Analysis of the Copeland Rule for Two Candidates}

Assume, without loss of generality, that $A$ is the winner. To bound the distortion, we aim to maximize $SC(A)/SC(B)$, where $SC(\boldsymbol\cdot)$ denotes the social cost.

\noindent\paragraph{Upper-bounding $SC(A)$.} For every voter $v$, we have by triangle inequality

\begin{equation}
    d(v,A) \leqslant d(v,B) + \mathbf{1}[v\in BA]\cdot d(A,B) =
    \begin{cases}
        d(v,B) & \text{ if } v\in AB \\
        d(v,B) + d(B,A) & \text{ if } v\in BA.
    \end{cases}
    \label{eq:sc-a-helper}
\end{equation}

Based on the outcomes of the matching, we split $SC(A)$ into three sums and analyze them separately:
\[
    SC(A) = \sum_{(u,v)\in M_A}^{} [d(u,A) + d(v,A)] + \sum_{(u,v)\in M_B}^{} [d(u,A) + d(v,A)] +  \sum_{v\text{ unmatched}} d(v,A).
\]

\begin{itemize}
    \item For $(u,v)\in M_A$: as $A$ wins the deliberation, we have $d(u,A) + d(v,A) \leqslant d(u,B) + d(v,B)$.
    \item For $(u,v)\in M_B$: assume $u \in AB$ and $v \in BA$, so that the corresponding applications of \Cref{eq:sc-a-helper} give $d(u,A) + d(v,A) \leqslant d(u,B) + d(v,B) + d(A,B)$.
    \item \Cref{eq:sc-a-helper} is also directly applicable on the sum over unmatched voters.
\end{itemize}

Observe that the total additional copies of $d(A,B)$ that appear in $SC(A)$ equals $W_B$ plus the number of unmatched $BA$ voters; this is equivalent to $\lvert BA\rvert - W_A$. Hence,

\begin{equation}
    SC(A) \leqslant  SC(B) + (\lvert BA\rvert - W_A)\cdot d(A,B).
    \label{eq:sc-a}
\end{equation}

\noindent\paragraph{Lower-bounding $SC(B)$.} For any pair $(u,v)\in M_A$, the deliberation constraint and triangle inequality imply

\begin{equation*}
    \begin{cases}
        d(u,B) + d(v,B) \geqslant d(u,A) + d(v,A) \\
        d(u,A) + d(u,B) \geqslant d(A,B) \\
        d(v,A) + d(v,B) \geqslant d(A,B)
    \end{cases}
    \qquad \implies \qquad d(u,B) + d(v,B) \geqslant d(A,B).
\end{equation*}

We now lower bound $SC(B)$ as follows:

\begin{itemize}
    \item Each $(u,v)\in M_A$ contributes $d(A,B)$ to $SC(B)$, and there are $W_A$ such pairs.
    \item The remaining $\lvert AB\rvert - W_A$ voters in $AB$ each contribute at least $d(A,B)/2$ to $SC(B)$, since $d(v,A) \leqslant d(v,B)$ and $d(v,A) + d(v,B) \geqslant d(A,B)$, which imply $d(v,B) \geqslant d(A,B)/2$.
\end{itemize}

Therefore,

\begin{equation}
    SC(B) \geqslant W_A \cdot d(A,B) + (\lvert AB\rvert -W_A)\cdot d(A,B) /2 = (\lvert AB\rvert+W_A) /2 \cdot d(A,B). \label{eq:sc-b}
\end{equation}

Combining \Cref{eq:sc-a} and \Cref{eq:sc-b}, we see that

\begin{align}
    \frac{SC(A)}{SC(B)}&\leqslant \frac{SC(B) + (\lvert BA\rvert - W_A)\cdot d(A,B)}{SC(B)} \leqslant 1 + \frac{(\lvert BA\rvert - W_A)\cdot d(A,B)}{(\lvert AB \rvert + W_A) /2 \cdot d(A,B)} \nonumber \\
    &= 1 + \frac{2(\lvert BA\rvert-W_A)}{\lvert AB\rvert + W_A} = \frac{2n}{\lvert AB \rvert + W_A} - 1 = \frac{2}{\mathrm{score}(AB; w=1)} - 1. \label{eq:distortion-2-candidates}
\end{align}

We now bound the distortion of the protocol.

\begin{theorem}
    The metric distortion of the deliberation via matching protocol with the Copeland Rule for any two-candidate instance is bounded by $2$.
    \label{thm:2-candidates}
\end{theorem}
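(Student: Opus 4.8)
The plan is to show that, under the assumption that $A$ is the protocol's winner, the quantity $\mathrm{score}(AB;w=1) = (\lvert AB\rvert + W_A)/n$ is at least $2/3$, and then feed this into the bound $SC(A)/SC(B) \leqslant 2/\mathrm{score}(AB;w=1) - 1$ established in \Cref{eq:distortion-2-candidates}. Plugging $\mathrm{score}(AB;w=1) \geqslant 2/3$ into that expression yields $SC(A)/SC(B) \leqslant 2/(2/3) - 1 = 3 - 1 = 2$, which is exactly the claimed bound.

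\textbf{Step 1: the winner has score at least $2/3$.} Recall that the total score mass is $\mathrm{score}(AB;w=1) + \mathrm{score}(BA;w=1) = 1 + \lvert M\rvert/n$, where $\lvert M\rvert = W_A + W_B$ is the matching size. Since $A$ wins, $\mathrm{score}(AB;w=1) \geqslant \mathrm{score}(BA;w=1)$, hence
\[
\mathrm{score}(AB;w=1) \;\geqslant\; \frac{1 + \lvert M\rvert/n}{2}.
\]
The matching $M$ is a \emph{maximum} matching between $AB$ and $BA$, so $\lvert M\rvert = \min\{\lvert AB\rvert, \lvert BA\rvert\}$. I would argue by a short case analysis (or directly): if $A$ wins then $\lvert AB\rvert + W_A \geqslant \lvert BA\rvert + W_B$. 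Combined with $W_A + W_B = \min\{\lvert AB\rvert,\lvert BA\rvert\}$ and $\lvert AB\rvert + \lvert BA\rvert = n$, this forces $\lvert AB\rvert + W_A$ to be large. A clean way: $\mathrm{score}(AB;w=1) \geqslant \tfrac12(1 + \lvert M\rvert/n)$ already, and separately $\lvert AB\rvert + W_A \geqslant \lvert AB\rvert \geqslant \lvert M\rvert$ trivially; more usefully, if $\lvert AB\rvert \geqslant \lvert BA\rvert$ then $\lvert M\rvert = \lvert BA\rvert = n - \lvert AB\rvert$, so $\mathrm{score}(AB;w=1) = (\lvert AB\rvert + W_A)/n \geqslant \lvert AB\rvert/n \geqslant (n - \lvert M\rvert)/n$; combining $\mathrm{score}(AB;w=1) \geqslant \max\{(1 + \lvert M\rvert/n)/2,\ 1 - \lvert M\rvert/n\}$ and noting the two lower bounds cross when $\lvert M\rvert/n = 1/3$ gives $\mathrm{score}(AB;w=1) \geqslant 2/3$. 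The case $\lvert AB\rvert < \lvert BA\rvert$ (all unmatched voters in $BA$) is handled symmetrically, using $W_A \geqslant$ the deficit implied by the winning condition $\lvert AB\rvert + W_A \geqslant \lvert BA\rvert + W_B$.

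\textbf{Step 2: conclude.} With $\mathrm{score}(AB;w=1) \geqslant 2/3$ in hand, \Cref{eq:distortion-2-candidates} gives $SC(A)/SC(B) \leqslant 2/\mathrm{score}(AB;w=1) - 1 \leqslant 2$. Since $B$ could be the socially optimal candidate (and if $A$ is optimal the ratio is trivially $\leqslant 1$), this bounds the distortion of the protocol by $2$ on any $2$-candidate instance, proving \Cref{thm:2-candidates}.

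\textbf{Main obstacle.} The analytic inequalities in Steps 1 and 2 are routine; the only delicate point is the case analysis establishing $\mathrm{score}(AB;w=1) \geqslant 2/3$, specifically making sure the two regimes (unmatched voters on the $A$ side versus the $B$ side) are both covered and that the "crossing point" argument between the two lower bounds $\tfrac12(1+\lvert M\rvert/n)$ and $1 - \lvert M\rvert/n$ is stated correctly. One must also double-check the degenerate cases where $M = \varnothing$ (no disagreement, so one candidate is unanimously preferred and the bound is immediate) and where ties in deliberations are apportioned — but since $W_A$ only helps the bound and the winning condition is defined with a consistent tie rule, these cause no trouble.
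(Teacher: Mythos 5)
Your proposal is correct and follows essentially the same route as the paper: both reduce the theorem to showing that the winner satisfies $\mathrm{score}(AB;w{=}1)\geqslant 2/3$ (via the winning condition, $W_A+W_B=\min\{\lvert AB\rvert,\lvert BA\rvert\}$, and a case split on which side the unmatched voters lie) and then conclude through \Cref{eq:distortion-2-candidates}. Your ``crossing point'' packaging of the two lower bounds $\tfrac12(1+\lvert M\rvert/n)$ and $1-\lvert M\rvert/n$ is only a cosmetic variant of the paper's case analysis, and the briefly sketched case $\lvert AB\rvert<\lvert BA\rvert$ completes routinely exactly as in the paper (using $W_A\leqslant\lvert AB\rvert$ to get $\lvert AB\rvert\geqslant n/3$).
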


\begin{proof}
    By \Cref{eq:distortion-2-candidates}, it suffices to show that if $A$ wins, then $\mathrm{score}(AB)  \geqslant 2/3$. To prove this claim, we first assume $\lvert AB\rvert \leqslant \lvert BA \lvert$, so that $\lvert AB\rvert = W_A + W_B$. Since $A$ is the winner,
    \[
        \lvert AB\rvert + W_A \geqslant \lvert BA\rvert + W_B = \lvert BA\rvert + (\lvert AB \rvert - W_A) = n - W_A \qquad \Rightarrow \qquad 2 W_A \geqslant n - \lvert AB\rvert.
    \]

    On the other hand, we also have $W_A\leqslant \lvert AB\rvert$, so $\lvert AB\rvert\geqslant n/3$. Hence

    \begin{equation*}
        n\cdot \mathrm{score}(AB) = \lvert AB\rvert + W_A \geqslant \lvert AB\rvert + \frac{n - \lvert AB\rvert}{2} = \frac{n + \lvert AB\rvert}{2} \geqslant \frac{n + (n/3)}{2} = \frac{2n}{3}.
    \end{equation*}

    If instead $\lvert AB\rvert\geqslant \lvert BA\rvert$ so that $\lvert BA\rvert = W_A + W_B$ and $\lvert AB\rvert \geqslant n/2$, then since $A$ is the winner,
    \[
        \lvert AB\rvert+W_A\geqslant \lvert BA\rvert+W_B = \lvert BA \rvert+(\lvert BA\rvert-W_A) = 2\lvert BA\rvert-W_A \qquad \Rightarrow \qquad 2W_A \geqslant 2\lvert BA\rvert-\lvert AB\rvert.
    \]

    If $\lvert AB\rvert\geqslant 2n/3$ there is nothing to show, so we assume $n/2\leqslant \lvert AB\rvert \leqslant 2n/3$. In this case, the above inequality becomes $2W_A \geqslant 2(n-\lvert AB\rvert) - \lvert AB\rvert = 2n-3\lvert AB\rvert$. Then,

    \begin{equation*}
        n\cdot \mathrm{score}(AB) = \lvert AB\rvert + W_A \geqslant \lvert AB\rvert + \frac{2n-3\lvert AB\rvert}{2} = \frac{2n-\lvert AB\rvert}{2} \geqslant \frac{2n - 2n /3}{2} = \frac{2n}{3}. \qedhere
    \end{equation*}
\end{proof}

By the uncovered set property of the Copeland rule, the distortion for any number of candidates is upper bounded by the square of the distortion on two candidates \cite{AnshelevichBEPS18}. This directly implies the following corollary.

\begin{corollary}
    \label{cor:2-candidates}
    For any number $m$ of candidates, the deliberation-via-matching protocol with $\lambda = 0.5$ and $w=1$ has distortion at most $4$.
\end{corollary}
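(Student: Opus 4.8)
The plan is to invoke the well-known reduction from general-$m$ metric distortion to the two-candidate case, which holds for any tournament rule whose output lies in the (ordinary) uncovered set, and in particular for Copeland. Recall the statement: if a rule always returns a candidate $W$ in the uncovered set of the majority tournament, then either $W = X^\ast$ (and distortion $1$), or there is a candidate $Z$ with $f(WZ; \text{majority}) \geqslant 1/2$ and $f(Z X^\ast;\text{majority}) \geqslant 1/2$ — i.e. a majority prefers $W$ to $Z$ and a (weak) majority prefers $Z$ to $X^\ast$. The idea is then to chain two applications of the two-candidate bound: one comparing $W$ against $Z$, one comparing $Z$ against $X^\ast$.

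The subtlety, which is where I expect the main (if modest) obstacle to lie, is that \Cref{thm:2-candidates} is proved for the \emph{two-candidate} protocol, where the matching, the deliberation outcomes $W_A$, and the score are all defined relative to a universe of exactly two candidates. When we restrict a multi-candidate instance to the pair $\{W, Z\}$, the voters' induced preferences on $\{W,Z\}$, the metric distances $d(v,W), d(v,Z), d(W,Z)$, and hence the matching $M_{WZ}$ and the deliberation winners all coincide with what the two-candidate protocol would produce on that restricted instance — because the protocol in \Cref{sec:protocol} constructs the edge $(W,Z)$ using \emph{only} the pair $(W,Z)$ and the voters' preference between them, independent of other candidates. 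So the quantity $\mathrm{score}(WZ; w=1)$ in the multi-candidate run equals exactly the two-candidate score on the restricted instance, and \Cref{eq:distortion-2-candidates} applies verbatim: if $W$ ``wins'' the restricted pair then $SC(W) / SC(Z) \leqslant 2/\mathrm{score}(WZ) - 1 \leqslant 2$ (using $\mathrm{score}(WZ)\geqslant 2/3$ whenever $W$ wins, which is the content of \Cref{thm:2-candidates}'s proof). I would spell out this ``restriction is faithful'' observation explicitly, since it is the only place the multi-candidate structure interacts with the two-candidate lemma.

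With that in hand the chaining is routine. First I would argue that the Copeland winner $W$ of the \emph{weighted} tournament $f(\cdot;w=1)$ lies in the ordinary uncovered set of the underlying majority tournament; this is essentially because Copeland on the weighted graph, with $\lambda = 1/2$, still outputs a candidate maximizing the number of pairs it weakly wins, and such a candidate is uncovered — alternatively one can cite that Copeland winners are always in the uncovered set~\cite{AnshelevichBEPS18}, noting the deliberation only perturbs margins, not the $\geqslant 1/2$ threshold structure. Hence there is $Z$ with $W$ weakly beating $Z$ and $Z$ weakly beating $X^\ast$ in the \emph{deliberation-weighted} sense, meaning $\mathrm{score}(WZ)\geqslant \mathrm{score}(ZW)$ and $\mathrm{score}(ZX^\ast)\geqslant\mathrm{score}(X^\ast Z)$. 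Applying the faithful-restriction observation plus \Cref{thm:2-candidates} twice yields $SC(W)\leqslant 2\,SC(Z)$ and $SC(Z)\leqslant 2\,SC(X^\ast)$, so $SC(W)\leqslant 4\,SC(X^\ast)$, i.e. distortion at most $4$. (If $W = X^\ast$, or $Z = X^\ast$, the bound is trivially better.) I would present this compactly, since the quadratic-composition argument for uncovered-set rules is standard; the only genuinely new ingredient is the remark that the two-candidate analysis transfers unchanged to any restricted pair, which I would state as a one-line lemma before the corollary's proof.
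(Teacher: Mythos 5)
Your proposal is correct and follows essentially the same route as the paper: the Copeland winner is uncovered in the deliberation-weighted tournament, so the two-candidate bound of $2$ (\Cref{thm:2-candidates}), which transfers verbatim to any restricted pair because the protocol constructs each edge pair-locally, is chained twice to give distortion at most $4$. One small imprecision: the winner is uncovered in the \emph{weighted} tournament (where ``beats'' means $f(\cdot\,;1)\geqslant 1/2$), not necessarily in the underlying majority tournament, since deliberation can flip which candidate crosses the $1/2$ threshold — but your chaining correctly uses the weighted relations $\mathrm{score}(WZ)\geqslant\mathrm{score}(ZW)$ and $\mathrm{score}(ZX^\ast)\geqslant\mathrm{score}(X^\ast Z)$, so the argument stands as the paper intends.
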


\subsection{Lower Bounds}
\label{sec:lb}

We first show the following lower bound on the distortion of {\em any} social choice rule that only uses voter preferences and the outcomes of pairwise deliberations.\footnote{In fact, pairwise deliberation outcomes contain individual preferences as a special case: if a voter deliberates with themselves, the joint preference coincides with the voter’s individual preference. Assuming we have complete control over organizing pairwise deliberations, this input model is equivalent to collecting the joint preferences of all unordered pairs $(u,v)$ of voters, one that is strictly richer than pure ordinal rankings by individual voters. See \Cref{sec:conclusion} also.} In particular, this shows that the bound in \Cref{thm:2-candidates} is tight for $m = 2$ candidates, and cannot be improved by either running the deliberations differently or using a different social choice rule.

\begin{theorem}
    \label{thm:lb1}
    Any deterministic social choice rule that uses individual preferences and the outcomes of pairwise deliberations has distortion at least $2$, even with $m = 2$ candidates.
\end{theorem}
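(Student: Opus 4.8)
The plan is to construct a family of two-candidate instances on which any deterministic rule using only individual preferences and pairwise-deliberation outcomes is forced to make a choice whose social cost ratio approaches $2$. The key idea is that the information available to the rule — the multiset of individual preferences, together with, for each matched pair, which candidate won the deliberation — can be made \emph{identical} across two instances whose optimal alternatives are different. Since the rule is deterministic and sees the same input, it picks the same winner on both, and that winner is suboptimal on one of them.

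Concretely, I would take $m = 2$ candidates $A, B$ at distance $1$, and a perfectly balanced electorate: $|AB| = |BA| = n/2$, so the maximum matching $M$ pairs up all voters and there are no unmatched voters. I would design the metric so that in every matched pair the deliberation outcome is, say, a tie (apportioned to a fixed side) or uniformly favors one side — in any case the \emph{reported transcript} is symmetric in a way that does not reveal which of $A$ or $B$ is the $1$-median. For the first instance, place (roughly) the $AB$-voters right at $A$ and the $BA$-voters at distance $1/2$ from both $A$ and $B$ (e.g.\ on the segment between them); then $SC(A) \approx n/4$ while $SC(B) \approx n/2 + n/4 = 3n/4$, but one checks $d(u,A)+d(v,A) = d(u,B)+d(v,B)$ in each pair, so every deliberation ties. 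The mirror instance swaps the roles of $A$ and $B$. Both instances induce the same preference profile (balanced) and the same deliberation transcript (all ties), so the rule outputs the same candidate, say $A$; on the mirror instance this gives $SC(A)/SC(X^\ast) = SC(A)/SC(B) \to 3 \cdot \tfrac{1}{3}$\,—\,I would tune the placement (pushing the winning side's voters exactly onto their candidate and the losing side exactly to the midpoint) so that the ratio is exactly $2$ in the limit, or attained, matching \Cref{thm:2-candidates}.

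The main obstacle is making the transcripts \emph{provably} indistinguishable while simultaneously driving the ratio up to $2$: the deliberation outcome depends on the latent metric, so I must verify that the specific distances chosen make each matched pair's averaged comparison come out the prescribed way (a tie, or a fixed winner) consistently with the ordinal preferences, and that the same holds after the $A \leftrightarrow B$ swap. If exact ties are fragile under the tie-breaking convention (the protocol allows ties to be apportioned arbitrarily), I would instead perturb distances infinitesimally so that, say, $A$ strictly wins every deliberation in the first instance and $B$ strictly wins every one in the mirror — but then the transcripts differ. The cleaner route, which I would pursue, is to keep the electorate balanced and the metric exactly symmetric under $A \leftrightarrow B$ so that the two instances are literally the same up to relabeling; then \emph{any} deterministic rule is forced to name one fixed label, and relabeling exhibits the bad instance. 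This sidesteps tie-breaking entirely and yields the bound of $2$ unconditionally, as claimed in the paragraph preceding the theorem. Finally I would remark that the same symmetric construction, applied to a randomized rule, forces probability at least $1/2$ on the wrong candidate in expectation over the two labelings, giving the stated lower bound of $1.5$.
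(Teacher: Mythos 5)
Your high-level strategy (two instances whose observable inputs, preferences plus deliberation outcomes, coincide but whose optima differ) is exactly the paper's device, but the concrete constructions you propose do not deliver the bound, and the ``cleaner route'' you settle on actually yields nothing. First, in your explicit instance (the $AB$-voters at $A$, the $BA$-voters at the midpoint) the deliberations are not ties: for a matched pair $(u,v)$ the totals are $d(u,A)+d(v,A)=\tfrac12$ versus $d(u,B)+d(v,B)=\tfrac32$, so $A$ wins every deliberation, and the transcript is far from symmetric. Second, the balanced all-ties idea is structurally doomed: with $\lvert AB\rvert=\lvert BA\rvert$ every voter is matched, and summing the tie condition $d(u,A)+d(v,A)=d(u,B)+d(v,B)$ over all pairs gives $SC(A)=SC(B)$ exactly, so any winner has distortion $1$ and no lower bound follows. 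The same collapse kills the ``metric exactly symmetric under $A\leftrightarrow B$'' route: a symmetric instance has $SC(A)=SC(B)$, so relabeling exhibits no bad instance at all. Even if you only require the \emph{transcript} to be label-symmetric (balanced electorate, half the pairs reported for each side), a short optimization over consistent metrics caps the ratio at $5/3$, still short of $2$.

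The paper gets $2$ precisely by abandoning symmetry: take three voters, two preferring $A$ and one preferring $B$, with the single matched deliberation reported in favor of $B$. This transcript is consistent both with the two $A$-voters sitting at $A$ (so the deliberation is a tie apportioned to $B$, and $SC(B)/SC(A)=2$) and with them sitting at the midpoint (so $B$ strictly wins the deliberation, and $SC(A)/SC(B)=2$); a deterministic rule must answer identically on both and loses a factor $2$ on one of them. The essential trick you are missing is this asymmetric ambiguity --- a majority for $A$ whose members may be either at $A$ or at the midpoint, masked by a deliberation outcome that goes against the majority --- rather than any symmetry or tie-based indistinguishability. Your closing remark about the randomized bound of $1.5$ inherits the same gap, since it leans on the symmetric construction.
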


\begin{proof}
    We construct two instances $X$ and $Y$ with two candidates $A$ and $B$, which have the same voter preferences, but $SC(B)/SC(A) = 2$ in $X$ and $SC(A)/SC(B) = 2$ in $Y$. In both instances, the metric is on a line where $A$ is at $-1$ and $B$ is at $1$. For $X$, we place two voters at $A=-1$ and one voter at $B=1$, and set the deliberation between a voter at $-1$ and a voter at $1$ to prefer $B$. For $Y$, we place two voters at $0$ (which prefer $A$) and one voter at $B$.

    The preference profile of the voters and the deliberation profiles are identical for these two instances. Thus no deterministic social choice rule can give distortion better than $2$, regardless of the protocol used for constructing deliberating pairs.
\end{proof}

The same pair of instances shows the following corollary. The proof follows by observing that the best any social choice rule can do on the above instance is randomize equally between $A$ and $B$.

\begin{corollary}
    Any randomized social choice rule that uses individual preferences and the outcomes of pairwise deliberations has distortion at least $1.5$, even with $m = 2$ candidates.
\end{corollary}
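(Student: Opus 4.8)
The plan is to reuse verbatim the two instances $X$ and $Y$ constructed in the proof of \Cref{thm:lb1} and to exploit the fact that they present a (randomized) social choice rule with exactly the same observable data. First I would record the social costs: in $X$, two voters sit at $A$ and one at $B$, so $SC(A) = 2$, $SC(B) = 4$, and $A$ is optimal; in $Y$, two voters sit at the midpoint and one at $B$, so $SC(A) = 4$, $SC(B) = 2$, and $B$ is optimal. I would then re-verify the indistinguishability already noted in \Cref{thm:lb1}: the ordinal profiles agree (two voters prefer $A$, one prefers $B$, with the midpoint tie in $Y$ broken toward $A$ as the framework permits), and the pairwise deliberation outcomes agree (in $Y$ the matched midpoint–$B$ pair has total distance $3$ to $A$ versus $1$ to $B$, so it deliberates for $B$; in $X$ the matched $A$–$B$ pair is tied and we break the tie for $B$ to match).

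Since the output of a deterministic or randomized rule depends only on this common data, any randomized rule must produce the same distribution over $\{A,B\}$ in both instances; write it as selecting $A$ with probability $p$ and $B$ with probability $1-p$. The expected distortion on $X$ is then $(2p + 4(1-p))/SC(A) = 2 - p$, and on $Y$ it is $(4p + 2(1-p))/SC(B) = 1 + p$. The worst case over the two instances is $\max\{\,2 - p,\; 1 + p\,\}$, which is minimized at $p = 1/2$ and equals $3/2$ there. Hence every randomized rule incurs distortion at least $3/2$ on at least one of $X$, $Y$, proving the corollary.

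There is no real technical obstacle: the argument reduces to a one-variable optimization over the single mixing probability $p$. The only point needing care is the one flagged above — ensuring that the entire observable profile (ordinal rankings together with the deliberation outcome produced by the protocol, including tie-breaking) is literally identical in $X$ and $Y$, so that the rule is forced to act identically. Once that is in place, the bound follows, and it is tight in the sense that randomizing equally between $A$ and $B$ ($p = 1/2$) achieves distortion exactly $3/2$ on this pair of instances, which is why no better guarantee is possible from this construction.
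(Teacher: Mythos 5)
Your proposal is correct and follows exactly the paper's approach: it reuses the two indistinguishable instances from \Cref{thm:lb1} and observes that the best any rule can do is mix between $A$ and $B$, with the equal mixture $p=1/2$ giving the worst-case ratio $\max\{2-p,\,1+p\}=3/2$. Your write-up simply makes explicit the cost calculations and tie-breaking details that the paper leaves as a one-line remark.
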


We finally show that for the setting of $\lambda = 0.5, w = 1$, the deliberation-via-matching rule has distortion exactly $4$ for any $m \geqslant 3$ candidates, showing the na\"ive analysis in \Cref{cor:2-candidates} is in fact tight.

\begin{theorem}
    \label{thm:raw-WUS}
    The deliberation-via-matching protocol with $\lambda = 0.5, w = 1$ has distortion at least $4$.
\end{theorem}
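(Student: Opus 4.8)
Since \Cref{cor:2-candidates} gives the matching upper bound of $4$, the plan is to exhibit, for every $\varepsilon>0$, an instance on which the protocol with $\lambda=0.5,w=1$ can be made to output a candidate whose social cost exceeds $4-\varepsilon$ times the optimum. I would use three candidates $A,B,C$, with $C$ the welfare optimum and $A$ the winner. The reason $A$ must be a Copeland winner while being this bad pins down the structure: if $C$ beats $A$ (which it must, since $SC(A)/SC(C)>2$ rules out $A$ beating $C$ by \Cref{thm:2-candidates}), then $A\in\mathrm{WUS}_{1/2}$ forces the existence of $B$ with $A$ beating $B$ and $B$ beating $C$, and then \Cref{thm:2-candidates} gives $SC(A)\leqslant 2\,SC(B)\leqslant 4\,SC(C)$. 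Hence a matching lower bound needs both inequalities essentially tight, i.e.\ a tournament in which $A\to B\to C\to A$ is a \emph{strict} $3$-cycle (so all three candidates beat exactly one other, the Copeland set is $\{A,B,C\}$, and the tie-break can select $A$), with $SC(A)/SC(B)\to 2$ and $SC(B)/SC(C)\to 2$.

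\textbf{Construction via composing two tight two-candidate instances.}
I would take the extremal two-candidate profile behind \Cref{eq:distortion-2-candidates} — the one making $SC(A)/SC(B)=2$, which has $\mathrm{score}(AB)=2/3$ and roughly $2n/3$ of the mass equidistant from the two candidates (so that the matched deliberations tie) together with $n/3$ of the mass at the ``losing'' candidate — and compose two scaled copies of it: one realizing $SC(A)/SC(B)=2$ with $A$ the vote-winner, one realizing $SC(B)/SC(C)=2$ with $B$ the vote-winner, all inside a single metric chosen so that $C$ is the $1$-median of the combined electorate. In the composed instance I would (i) choose the arbitrary maximum matching for the pair $(A,B)$ so that disagreeing voters near the $A$--$B$ ``midpoint'' are paired, making the averaged outcomes favor $A$; likewise for $(B,C)$ favoring $B$; (ii) note that $C$ beats $A$ outright because essentially every voter prefers $C$ to $A$; and (iii) introduce an $\varepsilon$-perturbation — slightly shifting a small cluster of voters, or apportioning co-located voters unevenly — that tips the two borderline comparisons into \emph{strict} wins $A\to B$ and $B\to C$ while changing the two ratios by only $O(\varepsilon)$ (for instance, taking the count of $A$-preferrers to be $(2/3+\varepsilon)n$ rather than exactly $2n/3$). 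Letting $\varepsilon\to 0$ then drives $SC(A)/SC(C)=(2-O(\varepsilon))^2\to 4$.

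\textbf{Main obstacle.}
The hard part is the \emph{simultaneous} realization. The ratio-$2$ two-candidate profile is rigid up to lower-order terms — it concentrates about $2/3$ of the voters on the midpoint of its two candidates and $1/3$ at one candidate — and cramming two such profiles into one metric over-determines the voter locations: on a line with $A,B,C$ in order the two midpoint-clusters are disjoint and their combined mass exceeds $n$, so one must exploit the freedom of the metric (a tree, a $2$D placement with $A,B,C$ not collinear, or an auxiliary cluster) so that the clusters can be partially shared while keeping $C$ the $1$-median. A second subtlety is that ratio exactly $2$ forces the corresponding $f$-value to be exactly $1/2$, i.e.\ a tie, which would leave $A$ \emph{not} a maximizer of the Copeland score; thus $4$ is only approached, and the perturbation has to break all three pairwise outcomes in the correct directions at $O(\varepsilon)$ cost. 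I expect most of the work to lie in checking that one legitimate choice of maximum matching (or, for a robust version, placing enough co-located voters that all maximum matchings are equivalent) yields the strict $3$-cycle $A\to B\to C\to A$ together with $C$'s optimality and the two ratios tending to $2$, all in a single explicit family of instances.
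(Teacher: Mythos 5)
There is a genuine gap: what you have written is a program, not a proof. The entire content of a lower bound of this kind is an explicit instance (or family) together with a verification of its $f$-values, matchings, and social costs, and the one step you yourself flag as ``the hard part'' --- simultaneously realizing the two near-tight two-candidate profiles inside a single metric, with the pairwise outcomes tipped the right way and the optimum sitting where you need it --- is exactly what is left unresolved (``one must exploit the freedom of the metric\dots'', ``I expect most of the work to lie in checking\dots''). Until such an instance is exhibited and checked, the theorem is not established; and the feasibility of your particular composition is not obvious, precisely because (as you note) the tight two-candidate profile you start from over-commits the voter mass.

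The plan is also harder than it needs to be because of a misreading of the rule. The protocol of \Cref{sec:protocol} with $\lambda=0.5,\ w=1$ outputs an \emph{arbitrary} member of $\mathrm{WUS}_{0.5}$, whose defining conditions are non-strict ($f(AC)\geqslant 0.5$ and $f(CB)\geqslant 0.5$ suffice), and both preference ties and tied deliberations may be apportioned adversarially. So no strict $3$-cycle, no Copeland-score maximization, and no $\varepsilon$-perturbation are needed: the bound $4$ is attained exactly, not merely approached. Indeed, the paper's proof is a three-voter instance on a line: $A=0$, $B=1$, $C=2$, two voters at $B$ (equidistant from $A$ and $C$, ties broken toward $A$) and one voter at $C$; the matched $(A,C)$-deliberation favors $C$ outright and the tied $(C,B)$-deliberation is resolved toward $C$, giving $f(AC)=f(CB)=0.5$, hence $A\in\mathrm{WUS}_{0.5}$, while $SC(A)/SC(B)=4/1$. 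Note that this instance also dissolves your ``over-determination'' worry: the two voters at $B$ do double duty, serving as the midpoint cluster for the pair $(A,C)$ and as ordinary $B$-supporters for the pair $(C,B)$, so nothing has to be crammed. If you want to salvage your route, drop the strictness requirement, chain through $C$ rather than through $B$, and let tie-breaking (in both individual preferences and deliberations) do the work.
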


\begin{proof}
    We construct an instance with $3$ candidates $A$, $B$, and $C$, and $3$ voters. The metric is a line and we place $A$ at $0$, $B$ at $1$, and $C$ at $2$. We place two voters at $B=1$ who prefer $A$ over $C$ and one voter at $C=2$. We also set the deliberations between a voter at $B=1$ and a voter at $C=2$ to prefer $C$. When $w=1$, we see that $f(AC) = f(CB) = 0.5$. Thus, candidate $A$ is in the $\lambda$-uncovered set for $\lambda = 0.5$. Since $SC(A)/SC(B) = 4$, the distortion is at least $4$.
\end{proof}

    \section{Optimal Distortion Bound: Proof of
  \texorpdfstring{\Cref{thm:main-informal}}
  {\Cref{thm:main-informal}}
  }
\label{sec:general}

We now prove \Cref{thm:main-informal}: For an appropriate choice of $(\lambda, w)$, the deliberation-via-matching protocol achieves distortion $3$. Then in \Cref{sec:main_lb}, we show that this bound is tight for the protocol, that every choice of $(\lambda, w)$ incurs distortion at least $3$.

The proof proceeds through a sequence of reductions. In \Cref{subsect:wus} we argue it suffices to analyze instances with three candidates, $A,C$, and $B$. In \Cref{subsect:program}, we discuss the difficulty of directly working with the voter-candidate distances $d(v,A), d(v,C)$, and $d(v,B)$, as they are induced by a latent metric hidden from our mechanism, and we have no good control over their intertwined dependencies (due to triangle inequalities). To circumvent this, we introduce a reparameterization that extracts only the information we \textit{truly} need for the protocol and separates it from the rest of the metric data. This removes the troublesome metric coupling present in the raw voter-candidate distances. Under this perspective, the objective becomes one of optimizing a bilinear program outlined in \programCref{eq:program}. The technical cores are \Cref{subsect:coupling,subsect:matching,subsect:LP}. Here, we exploit the reparametrization to simplify the structure of worst-case instances, and we finally reduce the remaining optimization problem to two small, finite programs (\programCref{eq:final_lp,eq:final_lp2}) that yield the desired distortion bound of $3$. We give more detailed explanations in the corresponding sections.

\subsection{The $\lambda$-Weighted Uncovered Set}
\label{subsect:wus}

In this section, we first assume a fixed $(\lambda, w)$ and use them implicitly to ease notation. We later optimally choose these parameters in \Cref{subsect:LP} and analyze the distortion with the chosen parameters. We first briefly recall how the tournament graph is defined. For any ordered pair of candidates $(\Psi,\Upsilon)$, we let

\begin{equation}
    \label{eq:fw}
    \mathrm{score}(\Psi\Upsilon;w) = \frac{\lvert \Psi\Upsilon\rvert + w \cdot W_{\Psi\Upsilon}}{n}, \qquad f(\Psi\Upsilon; w) = \frac{\mathrm{score}(\Psi\Upsilon;w)}{\mathrm{score}(\Psi\Upsilon;w) + \mathrm{score}(\Upsilon\Psi;w)}
\end{equation}

as in \Cref{eq:f}, where $\lvert \Psi\Upsilon\rvert$ is the number of voters preferring $\Psi$ to $\Upsilon$, $W_{\Psi\Upsilon}$ is the number of deliberation pairs that favor $\Psi$, and $w\geqslant 0$ controls the weight placed on the deliberative outcomes. We then select a winner using the $\lambda$-weighted uncovered set rule on this tournament by selecting any candidate in the \textit{$\lambda$-weighted uncovered set} $\mathrm{WUS}_\lambda$ as the winner. Throughout this section, we write $f(\Psi\Upsilon)$ and $\mathrm{score}(\Psi\Upsilon)$, with the $w$-dependence implicit whenever the context is clear.

Using the analysis technique for uncovered set tournament rules in \cite{AnshelevichBEPS18,MunagalaW19}, suppose $B$ is the optimal candidate and $A$ is the outcome of our protocol. Then, by definition, either $f(AB) \geqslant 1-\lambda$ directly, or there exists another candidate $C$ such that $f(AC) \geqslant 1 - \lambda$ and $f(CB) \geqslant \lambda$. It therefore suffices to consider just three candidates $A,B,C$, and the worst-case distortion over these can be expressed via the following:

\begin{equation}
    \begin{aligned}
        \text{Distortion \;\;=} \qquad & \sup\;\; \frac{SC(A)}{SC(B)} \\[0.5em]
        \text{Subject to\phantom{pvp}} \qquad
        & \text{ either } \;(f(AB) \geqslant 1 - \lambda), \\
        & \text{ or } \phantom{pvp} \big( f(AC) \geqslant 1 - \lambda \text{ and } f(CB) \geqslant \lambda \big).
    \end{aligned}
    \label{eq:objective}
\end{equation}

Since the first case $f(AB) \geqslant  1 - \lambda$ is a restriction of the second case with $C = B$, it further suffices to upper bound the distortion in the second case only, with $f(AC) \geqslant 1-\lambda$ and $f(CB) \geqslant \lambda$.\footnotemark

\footnotetext{Indeed, we can transform the first case into the second by introducing a candidate $C$ who clones $B$. More precisely, co-locate $B,C$ in the metric space, so $SC(C) = SC(B)$. Break all ordinal ties between $C$ and $B$ in favor of $C$, and break all ordinal and deliberation ties between $A$ and $C$ (if any) the same way as between $A$ and $B$. It then follows that $f(AC) = f(AB) \geqslant 1-\lambda$, and $f(CB) = 1 \geqslant \lambda$, since everyone belongs to the set $CB$.}

As two maximum matchings ($(A,C)$ on $AC\times CA$ and $(C,B)$ on $CB\times BC$) will occur, we introduce the following notation to avoid ambiguity by highlighting both alternatives involved. We say a voter pair $(u,v)$ is an \textbf{$\boldsymbol{A\succ C}$ pair} if the pair deliberates in the $(A,C)$ matching and favors $A$. Similarly, a $C\succ A$ pair is one that deliberates in the $(A, C)$ matching and favors $C$. The $C\succ B$ and $B\succ C$ pairs are defined analogously.

    \subsection{A Mathematical Program for Distortion}
\label{subsect:program}

In this section, we develop a reparameterization that underpins the sequence of reductions in \Cref{subsect:coupling,subsect:matching,subsect:LP}. This reparameterization has two purposes. First, it avoids working directly with the raw voter-candidate distances, whose feasibility restrictions are difficult to track across reductions. Second, it separates the information used by the protocol from the remaining absolute-distance information needed by the objective and the metric feasibility constraints.

To see the difficulty dealing with raw distances, suppose three candidates $A,B,C$ are fixed in the an (unkown) latent metric space and we try to optimize directly over $d(v,A), d(v,B), d(v,C)$ for each voter $v$. These quantities cannot be varied independently, as changing $d(v,A)$ generally changes which values of $d(v,B)$ and $d(v,C)$ can still be realized for the same voter. Carrying these metric feasibility restrictions through every subsequent reduction would greatly obscure the structure of the program.

The way around this is to separate the information used by the protocol from those needed elsewhere (by the objective function and the metric feasibility constraints).\footnotemark The reduction in \Cref{subsect:wus} leaves us with just two tournament constraints on $f(AC), f(CB)$, along with an objective comparing $SC(A)$ against $SC(B)$. Take $A$ versus $C$ for example. An important observation is that the protocol always uses $d(v,A)$ and $d(v,C)$ \textit{simultaneously}. A voter $v$'s individual preference is determined by the sign of $d(v,C) - d(v,A)$, and a pairwise deliberation is determined by the sign of the sum of this quantity over the two matched voters. Thus, the natural coordinates for the tournament constraints are the two signed differences that define the two pairwise scores. The remaining absolute-distance information, required by computing $SC(A)$ and $SC(B)$, as well as the metric constraints, is recorded separately in a third variable. In \Cref{prop:Zmin} we show that $d(\boldsymbol\cdot, C)$ carries exactly the remaining information we need, and this value can be optimized explicitly.

To sum up, our bilinear reparameterization is useful because under this perspective, the protocol constraints are expressed through two \textit{separately defined} signed differences, while everything else is handled through the optimized residual variable, and their relation can be computed (cleanly) in closed-form. We no longer need to keep track of the latent metric explicitly. We now formalize this notion.

\footnotetext{This reparameterization is similar in spirit to the \textit{biased metric} framework \cite{CharikarR22,CharikarRT025,CharikarWRW24} used in the standard metric distortion problem. Both approaches avoid optimizing directly over the raw voter-candidate distances and instead expose the metric information relevant to the distortion program. However, to our knowledge, the two frameworks are not directly compatible. The biased metric framework is designed specifically for ordinal rankings, and it is unclear how to extend that framework to encode the deliberative constraints we consider.}

\begin{definition}
    \label{prop:marginal}
    Given an instance, define three variables $X,Y,Z$ on the electorate $V$ by
    \begin{equation}
        \label{eq:XYZ}
        X(v) = d(v,C) - d(v,A), \qquad Y(v) = d(v,B) - d(v,C), \qquad Z(v) = d(v,C).
    \end{equation}
    Then $X(v)$ quantifies voter $v$'s relative preference between $A$ and $C$, and $Y(v)$ between $C$ and $B$.
\end{definition}

With these variables, $X$ determines the entire $(A,C)$ comparison: Up to fixed tie-handling, $v\in AC$ if and only if $X(v) \geqslant 0$, and a matched pair $(u,v)$ favors $A$ (over $C$) if and only if $X(u) + X(v) \geqslant 0$. Similarly, $Y$ determines the $(C,B)$ comparison completely. Finally, $Z$ allows us to recover the absolute distances via $d(v,A) = Z(v) - X(v), d(v,B) = Z(v) + Y(v)$, and $d(v,C) = Z(v)$. Thus, we obtain all the variables needed for \programCref{eq:objective}.

\noindent\paragraph{Continuum of Voters and the Objective Function.} In the discussion below, for the worst-case analysis, we allow the electorate to be represented by a probability distribution over a finite metric support, normalized to unit mass. This relaxation cannot reduce distortion, since every finite electorate is a special case. We therefore view the voters as forming a distribution over the metric space. We will write $\rho(v)$ as the density of a voter at $v$ and normalize $\sum_{v \in V}\rho(v)$ into unit mass. The variables $X,Y\colon V\to \mathbb{R}$ are fixed by the instance and determine $f(AC), f(CB)$ through their one-dimensional distributions, denoted $\mathcal{D}_X, \mathcal{D}_Y$.

\noindent\paragraph{Rewriting Distortion via $X,Y$, and $Z$.} From \Cref{prop:marginal}, we have $SC(A) /SC(B) = [\mathbb{E}Z - \mathbb{E}X] /[\mathbb{E}Z + \mathbb{E}Y]$, where the expectation is over the distribution of voters over the underlying metric space. We now transform the objective into a linear form, observing for $R>0$ that if $SC(A) /SC(B) = [\mathbb{E}Z - \mathbb{E}X] /[\mathbb{E}Z + \mathbb{E}Y]> R+1$, then

\begin{equation}
    \label{eq:linear-obj}
    \mathbb{E}X + (R+1)\cdot \mathbb{E}Y + R\cdot \mathbb{E}Z < 0.
\end{equation}

We will choose $R$ appropriately and show that the global minimum of the LHS of \Cref{eq:linear-obj} is at least zero, and this will imply a distortion of at most $R+1$.

Note that this objective is bilinear, since both the values $(X,Y,Z)$ and the voter distribution over these values are variables, and the support of $(X,Y,Z)$ can be unbounded. Our main contribution below is to relax the problem so that this support becomes constant, and the constraints capturing $\lambda$-WUS become linear.

\noindent\paragraph{Remark.} At several points, we will use an exchange argument over pairs of voters; these arguments can be extended to the continuum over voters by shifting the probability mass appropriately, and we omit the simple details. Further, since we assumed the metric space is finite, the optimization problem above will also have finite size, with the variables corresponding to metric distances and voter masses. We will transform this program in several steps below, noting that these steps will preserve the finite nature of the program.

\noindent\paragraph{Simplifying $Z$.} We first show that the worst-case instances will use a specific setting of $Z$ as a function of $(X,Y)$. We subsequently analyze properties of this function. Note from \Cref{eq:linear-obj} that given fixed $X$ and $Y$, we should point-wise minimize $Z$ such that $\{(X(v), Y(v),Z(v))\}_{v\in V}$ is still metric feasible in the sense that \Cref{eq:XYZ} can be realized in some latent metric space. This leads to the following key lemma. In the lemma below, by $\| X\|_{\infty}$, we mean $\max_v \lvert X(v)\rvert$.

\begin{lemma}
    \label{prop:Zmin}
    Fix real-valued functions $X,Y$ on the electorate $V$. For any real-valued function $Z$ on $V$, in order for $(X,Y,Z)$ to be realized by some metric $d$ under \Cref{eq:XYZ}, it is necessary and sufficient that
    \begin{equation}
        Z(v) \geqslant Z_{\min}(v) = \max \Bigg\{ \frac{\|X\|_{\infty} + X(v)}{2}, \frac{\|Y\|_\infty - Y(v)}{2}, \frac{\|X+Y\|_{\infty} +X(v) - Y(v)}{2} \Bigg\} \qquad \text{ for all }v.
        \label{eq:Zmin}
    \end{equation}
\end{lemma}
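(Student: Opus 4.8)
The statement is an iff between metric‑realizability of the triple $(X,Y,Z)$ (via the substitution in \Cref{eq:XYZ}) and the pointwise lower bound $Z(v)\geqslant Z_{\min}(v)$. I would split into necessity and sufficiency. For \textbf{necessity}, suppose $(X,Y,Z)$ is realized by a metric $d$ on $V\cup\{A,B,C\}$. The only metric facts available are the triangle inequalities among each voter $v$ and the three candidates. Writing $a=d(v,A)$, $b=d(v,B)$, $c=d(v,C)$, the substitution gives $X(v)=c-a$, $Y(v)=b-c$, $Z(v)=c$, so $a=c-X(v)$, $b=c+Y(v)$, all of which must be $\geqslant 0$ and must satisfy $d(A,C)\leqslant a+c$, $d(B,C)\leqslant b+c$, $d(A,B)\leqslant a+b$ — but also $d(A,C)\geqslant|a-c|$ etc. The key point is that $d(A,C)$, $d(B,C)$, $d(A,B)$ are \emph{global} constants, the same for every voter; taking the sup over voters of the relevant one‑sided inequalities, $d(A,C)\geqslant\sup_u\,(a_u\!-\!c_u)^+ = \sup_u (-X(u))^+$, i.e. $d(A,C)\geqslant\|X\|_\infty$ once we observe $d(A,C)$ also dominates $\sup_u(c_u-a_u)=\sup_u X(u)$, so in fact $d(A,C)\geqslant\|X\|_\infty$; similarly $d(B,C)\geqslant\|Y\|_\infty$ and $d(A,B)\geqslant\|X+Y\|_\infty$ (since $b-a = X(v)+Y(v)$). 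Then the three triangle inequalities $d(A,C)\leqslant a+c = 2c - X(v)$, $d(B,C)\leqslant b+c = 2c+Y(v)$, and $d(A,B)\leqslant a+b = 2c - X(v)+Y(v)$, combined with the global lower bounds just derived, give exactly $2Z(v)=2c\geqslant\|X\|_\infty+X(v)$, $\geqslant\|Y\|_\infty-Y(v)$, and $\geqslant\|X+Y\|_\infty+X(v)-Y(v)$, which is the claimed bound.

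\textbf{For sufficiency}, I would construct an explicit metric given $(X,Y,Z)$ with $Z\geqslant Z_{\min}$ pointwise. Set the candidate–candidate distances to the extremal values forced above: $d(A,C)=\|X\|_\infty$, $d(B,C)=\|Y\|_\infty$, $d(A,B)=\|X+Y\|_\infty$ (one should first check these three numbers themselves satisfy the triangle inequality among $A,B,C$ — this follows from $\|X+Y\|_\infty\leqslant\|X\|_\infty+\|Y\|_\infty$ and, for the other two orientations, from the fact that $\|X\|_\infty = \|(X+Y)-Y\|_\infty\leqslant\|X+Y\|_\infty+\|Y\|_\infty$, etc.). For each voter $v$ put $d(v,A)=Z(v)-X(v)$, $d(v,B)=Z(v)+Y(v)$, $d(v,C)=Z(v)$; nonnegativity of these follows from $Z\geqslant Z_{\min}\geqslant (\|X\|_\infty+X(v))/2\geqslant X(v)$ (using $\|X\|_\infty\geqslant X(v)$... wait, more carefully $Z_{\min}(v)\geqslant\tfrac12(\|X\|_\infty+X(v))$ gives $2Z-X\geqslant\|X\|_\infty\geqslant0$, so $d(v,A)\geqslant0$; similarly $2Z+Y\geqslant\|Y\|_\infty$ forces... actually for $d(v,B)$ use the second bound $Z\geqslant\tfrac12(\|Y\|_\infty-Y(v))$ so $2Z+Y\geqslant\|Y\|_\infty\geqslant0$; and $d(v,C)=Z\geqslant0$ since each of the three max‑arguments is nonnegative because $\|X\|_\infty\geqslant-X(v)$ etc.). Then I must verify all triangle inequalities. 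The voter–candidate–candidate ones ($d(A,C)\leqslant d(v,A)+d(v,C)$ and its reverse $d(A,C)\geqslant|d(v,A)-d(v,C)|=|X(v)|$) reduce precisely to $2Z(v)-X(v)\geqslant\|X\|_\infty$ (which is the first clause of $Z_{\min}$) and $\|X\|_\infty\geqslant|X(v)|$ (automatic); analogously for the $(B,C)$ and $(A,B)$ pairs, using the second and third clauses of $Z_{\min}$. The voter–voter–candidate triangle inequalities, e.g. $d(u,v)\leqslant d(u,A)+d(v,A)$, I would handle by \emph{defining} $d(u,v)=\max_{W\in\{A,B,C\}}|d(u,W)-d(v,W)|$ — the standard embedding of points into $\ell_\infty^3$ via their distance vectors to the three candidates — which is automatically a pseudometric and makes every voter–voter–candidate triangle inequality hold by construction, while leaving the candidate–candidate and voter–candidate distances as prescribed (one checks the $\ell_\infty$ formula doesn't overwrite those, using the reverse triangle inequalities already verified).

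\textbf{Anticipated main obstacle.} The routine part is the arithmetic matching each of the three clauses of $Z_{\min}$ to one triangle inequality; the subtle part is \emph{sufficiency}, specifically making sure a single metric simultaneously realizes the prescribed voter–candidate distances \emph{and} consistent voter–voter distances \emph{and} the prescribed candidate–candidate distances. The $\ell_\infty$‑embedding trick resolves this cleanly, but I need to double‑check that with $d(A,C)=\|X\|_\infty$ (rather than something larger) the embedding still returns $d(A,C)$ for the pair $A,C$ — i.e. that $\max_W|d(A,W)-d(C,W)|$ equals $\|X\|_\infty$, which needs $|d(A,B)-d(C,B)|=|\,\|X+Y\|_\infty-\|Y\|_\infty\,|\leqslant\|X\|_\infty$; this is again a consequence of the norm triangle inequality. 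A secondary point worth care: the lemma is stated for real‑valued $X,Y$ on the (possibly continuum) electorate $V$, so "$\|X\|_\infty=\max_v|X(v)|$" should be read as a supremum in the continuum case and attained‑or‑approached appropriately; since the ambient program was already reduced to finite support in the preceding remark, I will just invoke finiteness and take genuine maxima.
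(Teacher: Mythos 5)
Your proposal is correct and follows essentially the same route as the paper: the necessity direction matches exactly (sup over voters forces $d(A,C)\geqslant\|X\|_{\infty}$, $d(B,C)\geqslant\|Y\|_{\infty}$, $d(A,B)\geqslant\|X+Y\|_{\infty}$, and the three triangle inequalities give the three clauses of $Z_{\min}$), and the sufficiency direction uses the same extremal candidate--candidate and voter--candidate distances. The only difference is the completion of voter--voter distances, where you use the $\ell_\infty$-embedding $d(u,v)=\max_{W\in\{A,B,C\}}\lvert d(u,W)-d(v,W)\rvert$ while the paper takes shortest-path distances in the candidate--voter graph; both are standard and equally valid.
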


\begin{proof}
    We first prove necessity. Because $d$ is nonnegative, we must have $d(v,C) = Z(v) \geqslant 0, d(v,A) = Z(v)-X(v) \geqslant 0$, and $d(v,B) = Z(v) + Y(v) \geqslant 0$ from \Cref{eq:XYZ}. Triangle inequalities for $(v,A,C)$ imply
    \[
        \lvert d(v,A) - d(v,C)\rvert = \lvert X(v)\rvert \leqslant  d(A,C) \leqslant d(v,A) + d(v,C) = 2 Z(v) - X(v).
    \]
    Taking supremum over the first $\leqslant $ gives $d(A,C) \geqslant \|X\|_{\infty}$; combining with the second $\leqslant $ gives
    \begin{equation}
        \label{eq:Zmin-term1}
        2 Z(v) - X(v) \geqslant \|X\|_{\infty} \qquad \text{ so }\qquad Z(v) \geqslant \frac{\|X\|_{\infty} + X(v)}{2}.
    \end{equation}
    The remaining two terms can be obtained analogously by enforcing triangle inequalities on $(v,B,C)$ and $(v,A,B)$, respectively.

    For sufficiency, assume \Cref{eq:Zmin} and define $d(A,C) = \|X\|_{\infty}, d(B,C) = \|Y\|_\infty$, and $d(A,B) = \|X+Y\|_{\infty}$. Let $d(v,C) = Z(v)$, $d(v,A) = Z(v) - X(v)$, and $d(v,B) = Z(v) + Y(v)$. Then $(A,B,C)$ satisfy triangle inequalities, and for each voter, the inequalities established in the necessity part show triangle inequality: For instance, for $(v,A,C)$, we have
    \[
        \lvert d(v,A) - d(v,C) \rvert = \lvert X(v)\rvert \leqslant \| X \|_{\infty} = d(A,C) \leqslant 2 Z(v) - X(v) = d(v,A) + d(v,C),
    \]
    and likewise for $(v,B,C)$ and $(v,A,B)$, so triangle inequalities also hold among these pairs. Finally, to complete the metric, it remains to specify voter-to-voter distances. Note that the current metric defines a graph on $V \cup \{A, B, C\}$ with edges between every pair of candidates, and between each voter and candidate. Thus for two voters $u \neq v$, we can define $d(u, v)$ to be the distance between $u$ and $v$ in this graph.
\end{proof}

Unless otherwise indicated, given a pair $(X,Y)$ defined on $V$, we will from now on default to defining $Z$ by $Z_{\min}(X,Y)$ as stated in \Cref{eq:Zmin}.

\noindent\paragraph{The Bilinear Objective.} From \Cref{prop:marginal}, the distribution $\mathcal{D}_X$ alone determines the $(A,C)$ matching and thus $f(AC)$; similarly $\mathcal{D}_Y$ determines $f(CB)$. From \Cref{eq:linear-obj}, given $R>0$, the distortion is at most $R+1$ if the following functional is non-negative:
\[
    \Phi_R(X,Y) = \mathbb{E}X + (R+1) \cdot \mathbb{E}Y + R \cdot \mathbb{E}[Z_{\min}(X,Y)].
\]
Combining these observations, we obtain the following mathematical program with bilinear objective:

\begin{equation}
    \begin{array}{rl}
        \text{Minimize} \qquad& \Phi_R(X,Y)=\mathbb{E}X+(R+1)\cdot\mathbb{E}Y+R\cdot\mathbb{E}Z\\
        \text{over} \qquad& X,Y \text{ on } V,\quad Z=Z_{\min}(X,Y)\ \text{ from \Cref{eq:Zmin}}\\[1em]
        \text{Subject to} \qquad& \text{(i) } f(AC)\ \text{is induced by \textit{some} matching determined by }X;\\
        & \text{(ii) } f(CB)\ \text{is induced by \textit{some} matching determined by }Y;\\
        & \text{(iii) } f(AC)\geqslant 1-\lambda,\ \ f(CB)\geqslant \lambda.
    \end{array}
    \label{eq:program}
\end{equation}

Note that the above optimization is both over $(X,Y)$ and the choice of the matchings given $(X,Y)$. We will subsequently show that there exists an optimal matching with a specific form that enables writing the final constraints as a set of linear constraints where the variables capture the distribution of the voters. This will make the entire program bilinear, with separate linear constraints for $(X,Y)$ and for the distribution. In summary, we seek to find the smallest $R^*$ under which the infimum of feasible $\Phi$'s remains non-negative. Define
\[
    \mathrm{OPT}(R) = \inf \{\Phi_R(X,Y): (X,Y) \text{ feasible under \programCref{eq:program}}\}, \qquad R^* = \inf \{R>0: \mathrm{OPT}(R) \geqslant 0\}.
\]
Then the supremum of distortion equals $R^* + 1$.

    \subsection{Super-modularity and Counter-monotone Coupling}
\label{subsect:coupling}

In this section, we show a key structural property of $Z_{\min}$ from \Cref{eq:Zmin}: it is \textit{supermodular} as a function of $X,Y$. The objective (\programCref{eq:program}) is minimized when $Z = Z_{\min}$, and by supermodularity, this happens when $X$ is paired counter-monotonically with $Y$, as defined below.

We define a \textbf{coupling} of $X$ and $Y$ to be any joint assignment $\{ (X(v),Y(v)): v\in V\}$ that preserves the distributions $\mathcal{D}_X, \mathcal{D}_Y$ under $\rho$.

By \Cref{prop:marginal}, the $f(\boldsymbol\cdot)$-constraints are oblivious to the choice of coupling. Because $\mathbb{E}X$ and $\mathbb{E}Y$ are coupling-invariant, in the function $\Phi_R(X,Y)$, only the term $Z = Z_{\min}(X,Y)$ may change as we vary the coupling. Below, we prove that whenever two voters $v_1, v_2$ satisfy $X(v_1) < X(v_2)$ and $Y(v_1) < Y(v_2)$, swapping their $Y$-values weakly decreases $\mathbb{E}Z$ and hence the objective. Consequently, the optimal coupling is counter-monotone: descending $X$ values are paired with ascending $Y$ values.

This shows that it suffices to examine instances whose induced variables $X$ and $Y$ from \Cref{prop:marginal} are coupled in this manner, and we will do so once the following lemma is proven.

\begin{lemma}[Counter-monotone Coupling of $X,Y$]
    \label{prop:CM-coupling}
    Fix distributions $\mathcal{D}_X, \mathcal{D}_Y$ on $V$ and $R>0$. Then, over $X\sim \mathcal{D}_X$ and $Y\sim \mathcal{D}_Y$, the objective $\Phi_R(X,Y)$ is minimized by one where $X$ and $Y$ are coupled counter-monotonically: if $X(v_1) \leqslant X(v_2)$ then $Y(v_1) \geqslant Y(v_2)$.
\end{lemma}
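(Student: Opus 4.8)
The plan is to reduce to a two-voter exchange argument and then appeal to a pointwise supermodularity inequality for $Z_{\min}$. Since $\mathbb{E}X$ and $\mathbb{E}Y$ do not depend on the coupling, and since (by \Cref{prop:marginal}) the $f(\cdot)$-constraints depend only on the one-dimensional distributions $\mathcal{D}_X,\mathcal{D}_Y$ and not on how the values are paired up, the only coupling-dependent term in $\Phi_R$ is $R\cdot\mathbb{E}[Z_{\min}(X,Y)]$, with $R>0$. So it suffices to show that the counter-monotone coupling minimizes $\mathbb{E}[Z_{\min}(X,Y)]$. I would first argue that some minimizing coupling exists (the feasible region of couplings, i.e.\ joint distributions with the prescribed marginals, is compact and $\mathbb{E}[Z_{\min}]$ is continuous in it; in the finite setting this is just optimizing a linear functional over the transportation polytope). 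Then I would show that any coupling that is \emph{not} counter-monotone can be strictly improved, or at least not worsened, by a local swap, and that iterating such swaps (or directly characterizing the optimum of a linear program over the transportation polytope via the Monge/northwest-corner argument) yields a counter-monotone optimum.

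The technical heart is the following claim: fix two voters $v_1,v_2$ with $X(v_1)\le X(v_2)$ and $Y(v_1)\le Y(v_2)$; then swapping the $Y$-values of $v_1$ and $v_2$ (so $v_1$ gets $Y(v_2)$ and $v_2$ gets $Y(v_1)$) does not increase $\sum_v Z_{\min}(v)$. Crucially, such a swap changes only the contributions of $v_1$ and $v_2$ and leaves $\|X\|_\infty$, $\|Y\|_\infty$ unchanged; it can only \emph{decrease} $\|X+Y\|_\infty$ (the swap replaces the pairs $(X(v_1),Y(v_1)),(X(v_2),Y(v_2))$ by $(X(v_1),Y(v_2)),(X(v_2),Y(v_1))$, and $\max\{X(v_1)+Y(v_2),\,X(v_2)+Y(v_1)\}\le\max\{X(v_1)+Y(v_1),\,X(v_2)+Y(v_2)\}$ since $X,Y$ are co-monotone on these two points). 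Writing $a=X(v_1),\,a'=X(v_2),\,b=Y(v_1),\,b'=Y(v_2)$ with $a\le a'$, $b\le b'$, and denoting the three ``global'' constants by $P=\|X\|_\infty$, $Q=\|Y\|_\infty$, $S=\|X+Y\|_\infty$ before the swap (and noting $S$ does not increase after), it suffices to prove the scalar inequality
\begin{align}
&\max\!\Big\{\tfrac{P+a}{2},\tfrac{Q-b}{2},\tfrac{S+a-b}{2}\Big\}
 +\max\!\Big\{\tfrac{P+a'}{2},\tfrac{Q-b'}{2},\tfrac{S+a'-b'}{2}\Big\} \nonumber\\
&\qquad\ge\ \max\!\Big\{\tfrac{P+a}{2},\tfrac{Q-b'}{2},\tfrac{S+a-b'}{2}\Big\}
 +\max\!\Big\{\tfrac{P+a'}{2},\tfrac{Q-b}{2},\tfrac{S+a'-b}{2}\Big\}. \label{eq:cm-scalar}
\end{align}
This is a statement that the function $g(x,y)=\max\{x, -y, x-y\}+(\text{constants})$ — more precisely $h(x,y)=\max\{\tfrac{P+x}{2},\tfrac{Q-y}{2},\tfrac{S+x-y}{2}\}$ — satisfies $h(a,b)+h(a',b')\ge h(a,b')+h(a',b)$ for $a\le a'$, $b\le b'$, i.e.\ $h$ has decreasing differences in $(x,y)$, equivalently $-h$ (hence $Z_{\min}$ viewed through $-h$) is supermodular. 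I would verify \eqref{eq:cm-scalar} by a short case analysis on which term achieves each of the four maxima, using that increasing $y$ weakly decreases every term of $h$ and that the cross term $\tfrac{S+x-y}{2}$ couples $x$ and $y$ additively; the key point is that $h$ is a max of affine functions each of the form $\alpha x+\beta y+\gamma$ with $(\alpha,\beta)\in\{(\tfrac12,0),(0,-\tfrac12),(\tfrac12,-\tfrac12)\}$, and a pointwise maximum of affine functions with $(\alpha,\beta)$ lying on a line (here $\beta$ ranges over an interval at fixed-sign $\alpha$, or one checks the $2\times2$ submodularity of each pair) has decreasing differences.

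The main obstacle I anticipate is making \eqref{eq:cm-scalar} clean: a naive case split has $3\times3=9$ cases for which term wins on each side and is tedious. I would instead isolate the structural reason — that each of the three affine pieces of $h$ is individually submodular (has nonpositive mixed ``second difference''), and that a maximum of affine functions all of which are submodular is itself submodular whenever the family is \emph{totally ordered} in a suitable sense, which here holds because the three slope vectors are $(\tfrac12,0),(\tfrac12,-\tfrac12),(0,-\tfrac12)$, collinear in slope space. Concretely I would prove the general fact: if $h=\max_i(\alpha_i x+\beta_i y+\gamma_i)$ and the points $(\alpha_i,\beta_i)$ are such that $\alpha_i$ nonincreasing implies $\beta_i$ nonincreasing (co-monotone slopes), then $h$ is submodular; apply it here. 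Then the swap inequality and the reduction of the global coupling optimum to the counter-monotone one (via finitely many improving swaps, terminating because the number of ``inversions'' strictly decreases, or because $\mathbb{E}[Z_{\min}]$ is a well-defined minimized linear functional on the transportation polytope attained at a Monge-optimal vertex) complete the proof. The continuum version follows by the mass-shifting remark already flagged in the paper.
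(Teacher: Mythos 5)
Your proposal follows essentially the same route as the paper's proof: reduce to minimizing $\mathbb{E}[Z_{\min}]$ (the only coupling-dependent term), perform local swaps on co-monotonically paired values, freeze the norms $\|X\|_\infty$, $\|Y\|_\infty$, $\|X+Y\|_\infty$, prove an exchange inequality for the resulting maximum of three affine functions, and then observe that a swap cannot increase $\|X+Y\|_\infty$ while $Z_{\min}$ is nondecreasing in that constant, so swaps may be iterated to termination. The paper does exactly this via \Cref{lem:submodular} (submodularity of $\max\{A+x,B+y,C+x+y\}$) followed by the sign flip $y\mapsto -y$.

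Two points in your write-up need repair, though neither changes the strategy. First, your modularity labels are inverted. The displayed exchange inequality $h(a,b)+h(a',b')\geqslant h(a,b')+h(a',b)$ for $a\leqslant a'$, $b\leqslant b'$ is \emph{super}modularity (increasing differences) of $h$, which is indeed what is needed and what holds; yet you then describe it as ``decreasing differences,'' say ``$-h$ is supermodular,'' and state the auxiliary general fact as ``co-monotone slope vectors imply submodular.'' Taken literally, that general fact is false and would yield the reverse of your displayed inequality: for instance with all constants zero, $h(x,y)=\tfrac12\max\{x,-y,x-y\}$ gives $h(-1,0)+h(0,1)=0$ while $h(-1,1)+h(0,0)=-\tfrac12$, violating submodularity. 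The correct statement is that the co-monotone slope vectors $(\tfrac12,0),(\tfrac12,-\tfrac12),(0,-\tfrac12)$ produce a \emph{supermodular} upper envelope, equivalently the paper's route of proving submodularity for slopes $(1,0),(0,1),(1,1)$ and flipping the second argument. Second, your argument that a swap does not increase $\|X+Y\|_\infty$ compares only the maxima of the sums; since the norm is a maximum of \emph{absolute values}, you also need the lower sandwich $X(v_1)+Y(v_1)\leqslant\min\{X(v_1)+Y(v_2),\,X(v_2)+Y(v_1)\}$, so that both new sums lie between the old extremes and hence $\max\{|X(v_1)+Y(v_2)|,|X(v_2)+Y(v_1)|\}\leqslant\max\{|X(v_1)+Y(v_1)|,|X(v_2)+Y(v_2)|\}$, which is how the paper argues. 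With these directions straightened out, your swap-and-terminate (or transportation-polytope) argument is sound and coincides with the paper's proof.
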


Given $X \sim \mathcal{D}_X$ and $Y \sim \mathcal{D}_Y$, their expectations are fixed. Then, as discussed earlier, for any $R$, to minimize $\Phi_R(X,Y)$ in \programCref{eq:program}, it suffices to minimize $\mathbb{E}Z$. We prove this claim via an exchange argument: as long as the coupling involves pairs $(x_1,y_1) = (X(u_1), Y(v_1))$ and $(x_2,y_2) = (X(u_2), Y(v_2))$ with $x_1<x_2$ and $y_1<y_2$, swapping them (pairing $x_1$ with $y_2$ and $x_2$ with $y_1$) does not increase $\mathbb{E}Z$.

This exchange argument, and consequently the entirety of \Cref{prop:CM-coupling}, follows directly from showing submodularity of the associated functions, which we establish now. We define the relevant notions first.

\begin{definition}[Submodular and Supermodular Functions]
    \label[definition]{def:submodular}
    A function $f: \mathbb{R}^2\to \mathbb{R}$ is \textbf{submodular} if for all $x_1 \leqslant x_2, y_1\leqslant y_2$,
    \[
        f(x_1,y_1) + f(x_2,y_2) \leqslant f(x_1, y_2) + f(x_2, y_1).
    \]
    Equivalently, $f$ has decreasing differences in $(x,y)$: for every $x_1 < x_2$, the increment $\Delta_x f(y) = f(x_2,y) - f(x_1, y)$ is nonincreasing in $y$. Analogously, $f$ is \textbf{supermodular} if the inequality holds with $\geqslant$.
\end{definition}

\begin{lemma}
    \label{lem:submodular}
    Fix $A,B,C\in \mathbb{R}$. The function $H(x,y) = \max \{A+x, B+y, C+x+y\}$ is submodular.
\end{lemma}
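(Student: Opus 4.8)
The plan is to verify the equivalent ``decreasing differences'' form of submodularity from \Cref{def:submodular}, after first peeling off a harmless separable part. Observe that $x+y$ is a modular (separable) function, and that adding or subtracting a separable function leaves the sub/super-modularity inequality unchanged, since the separable contributions at $(x_1,y_1),(x_2,y_2)$ and at $(x_1,y_2),(x_2,y_1)$ cancel identically. Hence $H$ is submodular if and only if
\[
  g(x,y) \;:=\; H(x,y) - (x+y) \;=\; \max\{\,A-y,\; B-x,\; C\,\}
\]
is submodular, so it suffices to work with $g$.

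Next I would fix $x_1 \leqslant x_2$ and show the increment $\Delta(y) := g(x_2,y) - g(x_1,y)$ is nonincreasing in $y$. Writing $m(y) := \max\{A-y,\,C\}$, which is nonincreasing in $y$, and $s_i := B - x_i$ (so $s_2 \leqslant s_1$), we have $g(x_i,y) = \max\{s_i,\, m(y)\}$, hence
\[
  \Delta(y) \;=\; \max\{s_2,\,m(y)\} - \max\{s_1,\,m(y)\} \;=\; \eta\bigl(m(y)\bigr),
  \qquad \text{where } \eta(t) := \max\{s_2,t\} - \max\{s_1,t\}.
\]
A one-line case check over $t \leqslant s_2 \leqslant s_1$, then $s_2 \leqslant t \leqslant s_1$, then $t \geqslant s_1$ shows $\eta$ is continuous, piecewise linear, and \emph{nondecreasing} (it rises from $s_2-s_1$ to $0$). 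Composing the nondecreasing $\eta$ with the nonincreasing $m$ shows $\Delta$ is nonincreasing in $y$, which is exactly the decreasing-differences property. Therefore $g$, and hence $H$, is submodular.

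There is no genuine obstacle here; the only thing requiring care is bookkeeping of which of the three affine pieces of the max is active, and the reduction $g = H-(x+y)$ is exactly what collapses this into a single one-dimensional max $m(y)$ plus two scalars $s_1,s_2$. (A fully ``direct'' variant, which I would not prefer only because it is slightly longer, writes $H(x,y) = \max\{\,x + \max\{A, C+y\},\; B+y\,\}$ and splits on the sign of $C+y-A$: on $\{C+y \geqslant A\}$ the $x$- and $y$-dependence separates, so the $x$-increment is constant in $y$; on $\{C+y \leqslant A\}$ the first term loses its $y$-dependence and the increment is again readily seen to be nonincreasing; continuity of $H$ then glues the two halves together.)
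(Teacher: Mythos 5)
Your proof is correct. The modular-shift step is valid: since $(x_1+y_1)+(x_2+y_2)=(x_1+y_2)+(x_2+y_1)$, subtracting the separable term $x+y$ indeed leaves the submodularity inequality untouched, and your reduction to $g(x,y)=\max\{A-y,\,B-x,\,C\}=\max\{s,\,m(y)\}$ with $s=B-x$, followed by the observation that $\eta(t)=\max\{s_2,t\}-\max\{s_1,t\}$ is nondecreasing and $m$ is nonincreasing, cleanly delivers the decreasing-differences property. Your route differs from the paper's: the paper works directly with $H$ as the upper envelope of the three planes $A+x$, $B+y$, $C+x+y$, identifies the threshold $y_0(x)=\min\{A-C,\;x+(A-B)\}$ below which $H(x,\cdot)$ is flat in $y$ and above which it has slope $1$, and uses the monotonicity of $y_0$ in $x$ to bound the $y$-increments. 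Your modular-shift trick collapses the two-dimensional geometry into a one-dimensional monotone-composition argument, which is shorter and avoids tracking which plane is active; the paper's geometric picture, on the other hand, makes visible exactly where each affine piece of the max dominates, which mirrors how $Z_{\min}$ is used elsewhere in the analysis. Either argument fully establishes the lemma.
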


\begin{proof}
    The graph of $H$ is the upper envelope of three planes, $z = A+x, z = B+y$, and $z = C+y+x$. Partition the $(x,y)$-plane into regions where one of these planes is on top. Then, the boundaries are defined by a horizontal line $A+x = C+x+y$, a vertical line $B+y = C+x+y$, and a diagonal line $A+x = B+y$. Observe that along any vertical line with fixed $x$, $H(x, \boldsymbol\cdot )$ as a function of $y$ has the following shape:

    \begin{itemize}
        \item Below the boundary $y_0(x) = \min_{} \{A-C, x+ (A-B)\}$, the top plane is $z = A+x$, which has slope $0$ in the $y$-direction; and
        \item Above $y_0(x)$, the top plane is either $z= B+y$ or $z = C+x+y$, both having slope $1$ in $y$.
    \end{itemize}

    Consequently, on each vertical line, $H$ is flat in $y$ up to a threshold $y_0(x)$; from there, it increases in $y$ with slope $1$. Crucially, the threshold $y_0(x) = \min_{} \{A-C, x + (A-B)\}$ is also nondecreasing in $x$.

    Now fix $y_1 \leqslant y_2$. From the geometric observation above, for $x$,
    \[
        \Delta_y H(x) = H(x, y_2) - H(x, y_1) =
        \begin{cases}
            0 & y_2 \leqslant y_0(x) \\
            y_2 - y_1 & y_1 \geqslant y_0(x) \\
            y_2 - y_0(x) & y_1 < y_0(x) < y_2.
        \end{cases}
    \]

    Because $y_0(x)$ is nondecreasing in $x$, the function $x \mapsto \Delta_y H(x)$ is nonincreasing: when we slide the vertical line to the right, the threshold $y_0(x)$ can only move up, shrinking the portion of $[y_1,y_2]$ above it. Now take $x_1 \leqslant x_2$. The preceding monotonicity gives
    \[
        H(x_2,y_2) - H(x_2,y_1) = \Delta_y H(x_2) \leqslant \Delta_y H(x_1) = H(x_1, y_2) - H(x_1, y_1). \qedhere
    \]
\end{proof}

We are now ready to prove \Cref{prop:CM-coupling}.

\begin{proof}[Proof of \Cref{prop:CM-coupling}]
    The proof consists of two steps. First, given a \textit{frozen} baseline $c = \|X+Y\|_{\infty}$, along with $\|X\|_{\infty}$ and $\|Y\|_{\infty}$ which are fixed by $\mathcal{D}_X$ and $\mathcal{D}_Y$, a local counter-monotone swap never increases $\mathbb{E}Z$. Indeed, by applying \Cref{lem:submodular}, the mapping
    \[
        (x,-y)\mapsto h_c(x,-y) = \max_{} \{{\|X\|_{\infty} + x}, {\|Y\|_{\infty} + (-y)}, {c + x + (-y)} \},
    \]
    as a function of $x$ and $-y$ is submodular. Flipping the sign of the second argument, we see that $(x,y) \mapsto h_c(x,-y)$ as a mapping of $x$ and $y$ is \textit{super}modular. Therefore, when $x_1 < x_2$ and $y_1 < y_2$ (and hence $(-y_1) > (-y_2)$), we have
    \begin{equation}
        h_c(x_1,y_1) + h_c(x_2,y_2) \geqslant h_c(x_1,y_2) + h_c (x_2,y_1),
        \label{eq:local-swap}
    \end{equation}
    so a local counter-monotone swap weakly decreases $\mathbb{E} [h_c(X,Y)]$, conditioned on $\|X+Y\|_{\infty}$ being fixed.

    Second, we claim that local counter-monotone swaps do not worsen (increase) $\|X+Y\|_{\infty}$. To see this, suppose $x_1 < x_2$ and $y_1 < y_2$. Then we have
    \[
        x_1 + y_1 \leqslant \min(x_1 + y_2, x_2 + y_1) \leqslant \max(x_1 + y_2, x_2 + y_1) \leqslant x_2 + y_2.
    \]
    The maximum absolute value of these four terms will therefore be attained at one of the extremes, which implies
    \[
        \max\{\lvert x_1 + y_1\rvert, \lvert x_2 + y_2\rvert\} \geqslant \max \{\lvert x_1 + y_2\rvert, \lvert x_2 + y_1\rvert\}.
    \]
    In particular, this means the swap does not increase $\| X+Y\|_{\infty}$. As $h_c(x,-y)$ is nondecreasing in $c$, this proves that local swaps indeed always help. We therefore iteratively perform local swaps until no swap is available, which happens precisely when the resulting coupling is counter-monotone. Since the objective did not increase in this process, the proof is complete.
\end{proof}

\begin{remark}[\texorpdfstring{Mapping $V$ onto $[0,1]$}{Mapping V onto [0,1]}]
    \label{rmk:XY-interval}
    A useful alternate perspective, which we will frequently use later, is to view $X,Y$ as real-valued functions on $[0,1]$.
\end{remark}

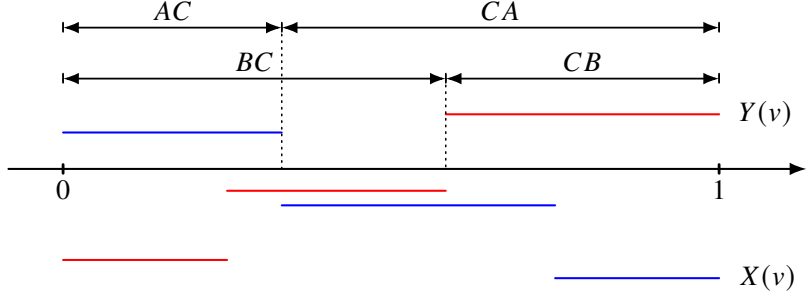
\begin{figure}[htbp]
    \begin{center}
        \resizebox{0.55\linewidth}{!}{\begin{tikzpicture}[>=Latex, line cap=round, xscale=1.5]
    \draw[->, thick] (-0.5,0) -- (6.8,0);
    \draw[line width=1.1pt] (0,-0.06) -- (0,0.06) node[below=2pt] {$0$};
    \draw[line width=1.1pt] (6,-0.06) -- (6,0.06) node[below=2pt] {$1$};
    
    \draw[thick, blue] (0,0.5) -- (2,0.5);
    \draw[dotted, blue] (2,0.5) -- (2,-0.5);
    \draw[thick, blue] (2,-0.5) -- (4.5,-0.5);
    \draw[dotted, blue] (4.5,-0.5) -- (4.5,-1.5);
    \draw[thick, blue] (4.5,-1.5) -- (6,-1.5);
    
    \draw[thick, red] (0,-1.25) -- (1.5,-1.25);
    \draw[dotted, red] (1.5,-1.25) -- (1.5, -0.3);
    \draw[thick, red] (1.5,-0.3) -- (3.5,-0.3);
    \draw[dotted, red] (3.5,-0.3) -- (3.5,0.75);
    \draw[thick, red] (3.5,0.75) -- (6,0.75);
    
    \path (6,-1.5) node[anchor=west, xshift=4pt] {$X(v)$};
    \path (6,0.75) node[anchor=west, xshift=4pt] {$Y(v)$};
    
    \draw[line width=0.6pt] (0,1.89) -- (0,2.01);
    \draw[line width=0.6pt] (2,1.89) -- (2,2.01);
    \draw[line width=0.6pt] (6,1.89) -- (6,2.01);
    \draw[line width=0.6pt] (0,1.88) -- (0,2.00);
    \draw[line width=0.6pt] (2,1.88) -- (2,2.00);
    \draw[line width=0.6pt] (6,1.88) -- (6,2.00);
    \node[above] at (1.0,1.90) {$AC$};
    \node[above] at (4.0,1.90) {$CA$};
    \draw[<->, line width=0.7pt] (0,1.94) -- (2,1.94);
    \draw[<->, line width=0.7pt] (2,1.94) -- (6,1.94);
    
    \draw[line width=0.6pt] (0,1.19) -- (0,1.31);
    \draw[line width=0.6pt] (3.5,1.19) -- (3.5,1.31);
    \draw[line width=0.6pt] (6,1.19) -- (6,1.31);
    \draw[line width=0.6pt] (0,1.18) -- (0,1.30);
    \draw[line width=0.6pt] (3.5,1.18) -- (3.5,1.30);
    \draw[line width=0.6pt] (6,1.18) -- (6,1.30);
    \node[above] at (1.75,1.20) {$BC$};
    \node[above] at (4.75,1.20) {$CB$};
    \draw[<->, line width=0.7pt] (0,1.24) -- (3.5,1.24);
    \draw[<->, line width=0.7pt] (3.5,1.24) -- (6,1.24);
    
    \draw[dotted, line width=0.6pt] (2,1.94) -- (2,0);
    \draw[dotted, line width=0.6pt] (3.5,1.24) -- (3.5,0);
\end{tikzpicture}}
    \end{center}
    \caption{Counter-monotonic coupling of $X$ (in blue) and $Y$ (in red). Note the graphs partition $[0,1]$ into $AC/CA$ (by $X$) and $BC/CB$ (by $Y$).}
    \label{fig:monotonic}
\end{figure}

Recall $V$ is normalized into unit mass so it naturally maps to $[0,1]$. Using \Cref{prop:CM-coupling}, we consider a mapping $V\mapsto [0,1]$ such that $X$ (resp. $Y$) can be viewed as a decreasing (resp. increasing) function on $[0,1]$: for instance, we map voters with largest $X(v)$ (and most negative $Y(v)$) to near $0$ and map voters with most negative $X(v)$ (and largest $Y(v)$) to near $1$. \Cref{fig:monotonic} shows one hypothetical example of $X$ and $Y$. As we assumed that the metric space is finite, $X$ and $Y$ will be piecewise constant step functions.

    \subsection{Tight $f$ Constraints and the Optimal Matchings}

\label{subsect:matching}

A fixed instance $I$ may admit various matchings for the candidate pair $(A,C)$ and thus potentially different values for $W_{AC}$, the number of matchings where the outcome is $A \succ C$. Consequently, the values of $f(AC)$ need not be unique; neither for $f(CB)$. However, we note that for fixed $(X,Y)$ and distribution $\rho$ over voters, the constraints for $f(AC)$ and $f(CB)$ in our mathematical program are made most slack by choosing the matchings with the greatest number of $AC$ wins for $X$ (resp. $CB$ wins for $Y$). Call them the \textbf{$\boldsymbol A$-optimal $\boldsymbol{(A,C)}$ matching} and the \textbf{$\boldsymbol C$-optimal $\boldsymbol{(C,B)}$ matching}, respectively.

We now show two properties of the optimal solution. We first prove that a specific type of optimal matchings pairs ``prefixes'' (most polar voters) of one side with the ``suffixes'' (most indifferent, i.e., least polar) of the other. Next, we prove a ``continuity'' result: that it suffices to tighten the inequalities $f(AC) \geqslant 1-\lambda$ and $f(CB) \geqslant \lambda$ into equalities. These results pave the way to a clean structural reduction (\programCref{eq:program-reduced}) that leads us to \Cref{subsect:LP}.

\subsubsection{Property 1: Prefix Property of Matchings}

Fix the instance $I$ as well as some matching for candidates $(A,C)$. Consider two $A\succ C$ pairs $(u_1,v_1)$ and $(u_2, v_2)$. Suppose $X(u_1) \geqslant X(u_2) \geqslant 0 \geqslant X(v_1) \geqslant X(v_2)$. Then we have $X(u_1) + X(v_1) \geqslant 0$ and $X(u_2) + X(v_2) \geqslant 0$. It is easy to check that $X(u_1) + X(v_2) \geqslant 0$ and $X(u_2) + X(v_1) \geqslant 0$. This means we can replace the matchings with $(u_1,v_2)$ and $(u_2,v_1)$. This means the matchings can be made {\em counter-monotone}. Further, suppose $X(u_1) \geqslant X(u_2) \geqslant 0$ and $u_1$ does not participate in an $A\succ C$ pair, while $u_2$ is matched to $v_2$ in an $A\succ C$ pair. Then we can replace $(u_2,v_2)$ with $(u_1,v_2)$.

Analogously, if $0 \geqslant X(v_1) \geqslant X(v_2)$ and $v_1$ does not participate in an $A\succ C$ pair while $v_2$ is matched to $u_2$ in an $A\succ C$ pair, we can replace $(u_2,v_2)$ with $(u_2,v_1)$. This is feasible for the $f$ constraint since $W_{CA}$ cannot increase in this process, and $W_{AC}$ is preserved.

Iterating this process, we obtain a new $(A,C)$ matching that also has $W_{AC}$ pairs satisfying $A\succ C$. Additionally, in this new $(A,C)$ matching, these pairs come from pairing the $W_{AC}$ mass of highest $X(u) \geqslant 0$ (the \textbf{prefix} of $AC$) with the $W_{AC}$ mass with highest $X(v) < 0$ (the \textbf{suffix} of $CA$) counter-monotonically, meaning that between these two blocks of mass $W_{AC}$, the highest positive $X(u)$ is matched to the lowest $X(v) < 0$, and so on. This is shown in \Cref{fig:matching-A-only}.

\begin{figure}[htbp]
    \begin{center}
        \resizebox{0.85\linewidth}{!}{\begin{tikzpicture}[x=0.6cm,y=0.6cm,>=stealth,line cap=round,font=\small, yscale=0.75]
    \fill[cyan!10] (0,2) rectangle (2,7);
    \fill[cyan!10] (5,0) rectangle (7,5);
    
    \draw[thick] (0,0) rectangle (2,7);
    \draw[thick] (5,0) rectangle (7,10);
    
    \draw[thick] (0,2)--(2,2);
    \draw[thick] (5,5)--(7,5);
    
    \node at (1,7.7) {$AC$};
    \node at (6,10.7) {$CA$};
    
    \foreach \yL/\yR in {2.5/0.5,3.5/1.5,4.5/2.5,5.5/3.5,6.5/4.5}{
        \draw[thick] (2,\yL) -- (5,\yR);
    }
    \foreach \yL/\yR in {0.5/5.5,1.5/6.5}{
        \draw[dashed] (2,\yL) -- (5,\yR);
    }
    
    \draw[decorate,decoration={brace,amplitude=6pt,raise=0pt}]
    (0,0)--(0,2) node[midway,xshift=-8pt,anchor=east] {irrelevant};
    \draw[decorate,decoration={brace,amplitude=6pt,raise=0pt}]
    (0,2)--(0,7) node[midway,anchor=east] {\begin{tabular}{c} prefix of $AC$ \\ used to enforce $A\succ C$ pairs \end{tabular}};
    \node[anchor=east] at (-0.1,7.2) {\textbf{most pro-$A$ (largest $X(v)$)}};
    
    \draw[decorate,decoration={brace,amplitude=6pt,raise=0pt}]
    (7,10)--(7,5) node[midway,xshift=8pt,anchor=west] {irrelevant};
    \draw[decorate,decoration={brace,amplitude=6pt,raise=0pt}]
    (7,5)--(7,0) node[midway,anchor=west] {\begin{tabular}{c} suffix of $CA$ \\ used to enforce $A\succ C$ pairs \end{tabular}};
    
    \node[anchor=west] at (8.2,10.2) {\textbf{most pro-$C$ (most negative $X(v)$)}};
    \node[anchor=west] at (8.2,0.5) {\textbf{most indifferent ($X(v)$ close to zero)}};
\end{tikzpicture}}
    \end{center}
    \caption{The prefix-suffix structure of $A\succ C$ pairs.}
    \label{fig:matching-A-only}
\end{figure}
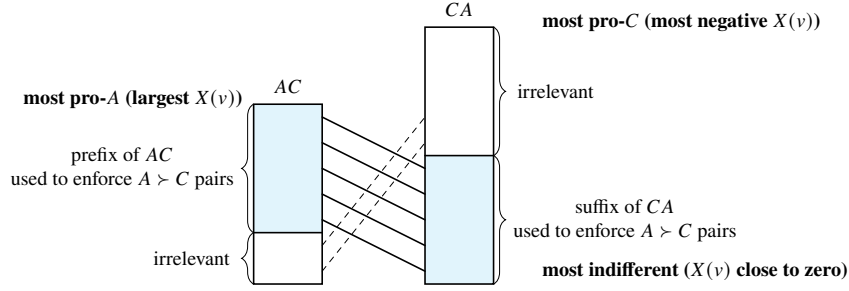

An identical result can be shown for $(C,B)$, so that any $C\succ B$ matchings of mass $W_{CB}$ can be assumed to satisfy the prefix property. In particular, by starting with an optimal matching and repeating the procedure above, we arrive at the following conclusion.

\begin{lemma}[Prefix Property of Optimal Matchings]
    \label{lem:prefix}
    Fix an instance $I$ and an $A$-optimal matching for candidates $(A,C)$. Suppose this matching admits a mass $W_{AC}$ of $A\succ C$ pairs. Then, there exists another $A$-optimal $(A,C)$ matching (hence also with mass $W_{AC}$ of $A\succ C$ pairs) such that:

    \begin{itemize}
        \item[(i)] It takes place between the $W_{AC}$ masses of highest $X(u)\geqslant 0$ and highest $X(v)<0$; and
        \item[(ii)] It couples the two blocks counter-monotonically: highest $X(u)$ with lowest $X(v)$, and so on.
    \end{itemize}

    An equivalent version holds for the $(C,B)$ matching, where if $W_{CB}$ is the largest admissible mass of $C\succ B$ pairs, then they can be assumed to be coupled counter-monotonically between the $W_{CB}$ mass with highest $Y(u)\geqslant 0$ and the $W_{CB}$ mass with highest $Y(v) < 0$.
\end{lemma}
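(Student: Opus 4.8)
The plan is to prove the lemma by a finite sequence of local exchange moves applied to the given $A$-optimal $(A,C)$ matching, each of which preserves the number $W_{AC}$ of $A\succ C$ pairs --- hence $f(AC)$, hence $A$-optimality --- while driving those pairs toward the claimed prefix--suffix, counter-monotone form. I would reuse verbatim the three elementary sign-checks already recorded in the discussion preceding the statement: a \emph{rearrangement} move (if $X(u_1)\geqslant X(u_2)\geqslant 0\geqslant X(v_1)\geqslant X(v_2)$ and $(u_1,v_1),(u_2,v_2)$ are both $A\succ C$ pairs, then so are $(u_1,v_2)$ and $(u_2,v_1)$); a \emph{prefix-promotion} on the $AC$ side (if $X(u_1)\geqslant X(u_2)\geqslant 0$, $u_2$ lies in an $A\succ C$ pair $(u_2,v_2)$, and $u_1$ lies in no $A\succ C$ pair, replace $(u_2,v_2)$ by $(u_1,v_2)$); and the mirror \emph{suffix-promotion} on the $CA$ side, promoting a $v_1$ whose $X(v_1)$ is closer to $0$ in place of a more negative $v_2$. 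In each promotion the displaced voter either becomes unmatched or inherits the partner the promoted voter previously had; in the latter case that new pair cannot itself be $A\succ C$, since it would produce a matching with strictly more $A\succ C$ pairs, contradicting $A$-optimality. Thus every move keeps a valid maximum matching on the disagreeing voters and leaves $W_{AC}$, and therefore $f(AC)$, unchanged.

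I would then run the argument in two phases. \textbf{Phase 1 (promotions).} Apply the two promotion moves as long as one is available. A prefix-promotion strictly increases $\sum_u X(u)$ over the $AC$-endpoints of the $A\succ C$ pairs and leaves the $CA$-endpoints untouched; a suffix-promotion strictly increases $\sum_v X(v)$ over the $CA$-endpoints and leaves the $AC$-endpoints untouched. Both sums are monotone and bounded above (by the sum of the top-$W_{AC}$ $X$-mass of $AC$, resp.\ of $CA$), so the phase terminates. When no promotion remains, a pigeonhole argument forces the $AC$-endpoints to be exactly the highest-$X$ mass $W_{AC}$ of $AC$ (the prefix) and the $CA$-endpoints to be exactly the highest-$X$ --- i.e.\ least negative --- mass $W_{AC}$ of $CA$ (the suffix): otherwise some high-$X$ voter outside the endpoints would coexist with an endpoint of smaller $X$, enabling a further promotion. \textbf{Phase 2 (counter-monotone sorting).} With the two blocks now frozen, repeatedly apply the rearrangement move to any inversion, meaning two $A\succ C$ pairs $(u_1,v_1),(u_2,v_2)$ with $X(u_1)\geqslant X(u_2)$ and $X(v_1)\geqslant X(v_2)$; its hypothesis is automatically satisfied because all these endpoints obey $X(u)\geqslant 0\geqslant X(v)$. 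A standard inversion-count monovariant shows this terminates, and no voter leaves its block, so at the end the $A\succ C$ pairs couple the prefix of $AC$ to the suffix of $CA$ counter-monotonically (as in \Cref{fig:matching-A-only}) --- exactly conditions (i) and (ii). The claim for $(C,B)$ is the same argument with $Y$ replacing $X$, the $C$-optimal matching replacing the $A$-optimal one, $W_{CB}$ replacing $W_{AC}$, and the pair $CB$/$BC$ replacing $AC$/$CA$.

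The part I expect to need the most care is not any single inequality --- those are exactly the one-line checks from before the lemma --- but the bookkeeping required to run everything in the continuum of voters, where ``swapping two voters'' must be read as ``transporting equal probability mass'', as flagged in the paper's exchange-argument remark. The clean route is to invoke the assumed finiteness of the metric space: $X$ and $Y$ take only finitely many values, so one first partitions the electorate into the finitely many level sets of $(X,Y)$ and performs all moves on that finite object, treating voters with identical $(X,Y)$ as interchangeable --- which also disposes of the ties that the pigeonhole and inversion arguments quietly assume away. A secondary check is that the two phases do not interfere: phase-2 rearrangements fix the endpoint multisets and so cannot re-enable a promotion; alternatively one interleaves the moves freely and uses the lexicographic potential (the two block-sums, then minus the inversion count) to guarantee termination.
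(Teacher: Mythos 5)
Your proposal is correct and follows essentially the same route as the paper: the paper's own proof is exactly the iterated exchange argument using the three local moves (counter-monotone swap, prefix-promotion on $AC$, suffix-promotion on $CA$) stated in the discussion preceding the lemma. Your additional bookkeeping (monovariants for termination, level sets of $(X,Y)$ for the continuum, and the check that displaced pairs cannot become $A\succ C$) only makes explicit what the paper leaves implicit.
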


Recall from \Cref{rmk:XY-interval} that we may view $X,Y$ as monotonic functions on $[0,1]$. The previous observations yield a four-interval decomposition: two blocks of size $W_{AC}$, one for each of $AC, CA$, and two complementary blocks. A similar decomposition follows for $(C, B)$.

We focus only on the cases where $\lvert AC\rvert \leqslant \lvert CA\rvert$ and $\lvert BC\rvert \leqslant \lvert CB\rvert$, but the other cases also admit a similar partition, the major difference being the location of the unmatched voters.

\begin{lemma}[\texorpdfstring{Four-interval partition of $[0,1]$ by $X$ and $Y$}{Four-interval partition of [0,1] by X and Y}]
    \label{lem:partition}
    When $\lvert AC\rvert \leqslant \lvert CA\rvert$, the range $[0,1]$ can be partitioned into four consecutive, possibly empty intervals that describe the $(AC, CA)$ matching, as shown in \Cref{tab:interval}. By the structure of the $A$-optimal matching, the $A$-win blocks always lie on the leftmost of the $[0,\lvert AC\rvert]$ ($AC$ block) and $[\lvert AC\rvert, 1]$ ($CA$ block).

    Similarly, when $\lvert BC\rvert \leqslant \lvert CB\rvert$, we can perform the same partition based on $Y$ which describes the $(BC, CB)$ matching.

    Since we analyze the $C$-optimal matching and $Y$ is increasing, the $C$-win blocks lie on the \emph{rightmost} of the $[0, \lvert BC\rvert]$ and $[\lvert BC\rvert, 1]$ blocks.
\end{lemma}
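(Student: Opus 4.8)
The plan is to derive \Cref{lem:partition} as an essentially immediate consequence of three ingredients already in hand: the counter-monotone coupling of \Cref{prop:CM-coupling}, the interval viewpoint of \Cref{rmk:XY-interval}, and the prefix property of optimal matchings (\Cref{lem:prefix}). The only real work is to track orientations carefully, since $X$ and $Y$ run in opposite directions along $[0,1]$.

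First I would fix the counter-monotone coupling and relabel the voters so that, viewed as functions on $[0,1]$, $X$ is nonincreasing and $Y$ is nondecreasing (both piecewise-constant step functions, since the metric space is finite). Because $v\in AC$ exactly when $X(v)\geqslant 0$ and $X$ is nonincreasing, the set $AC$ is a prefix $[0,\abs{AC}]$ and $CA$ is the complementary suffix $[\abs{AC},1]$. Symmetrically, since $v$ prefers $C$ to $B$ exactly when $Y(v)\geqslant 0$ and $Y$ is nondecreasing, $BC$ is a prefix $[0,\abs{BC}]$ and $CB$ is the suffix $[\abs{BC},1]$. This gives the two "super-block" splittings of $[0,1]$.

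Next I would specialize to the case $\abs{AC}\leqslant \abs{CA}$, where the maximum matching saturates $AC$, so every $AC$ voter is matched and exactly $\abs{CA}-\abs{AC}$ voters of $CA$ are unmatched. Applying \Cref{lem:prefix} to the $A$-optimal $(A,C)$ matching, I may assume the $A\succ C$ pairs occupy the $W_{AC}$-mass of largest $X(u)\geqslant 0$, which — as $X$ is nonincreasing — is the leftmost sub-block $[0,W_{AC}]$ of the $AC$ super-block, coupled counter-monotonically with the $W_{AC}$-mass whose $X$-values are closest to $0$ inside $CA$, which — again by monotonicity of $X$ — is the leftmost sub-block $[\abs{AC},\abs{AC}+W_{AC}]$ of the $CA$ super-block. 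The complements within the two super-blocks fill in the remaining consecutive intervals, and since all unmatched voters lie in $CA$ they occupy its far right end; this is precisely the four-interval layout of \Cref{tab:interval}, with both $A$-win blocks at the left ends of $[0,\abs{AC}]$ and $[\abs{AC},1]$.

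Finally, the $(C,B)$ statement follows by the mirror argument, the sole difference being that $Y$ is nondecreasing rather than nonincreasing: for the $C$-optimal $(C,B)$ matching the $C\succ B$ pairs want the largest $Y(u)\geqslant 0$ in $CB$ and the $Y(v)$ closest to $0$ in $BC$, and both of these now sit at the \emph{right} ends of $[\abs{BC},1]$ and $[0,\abs{BC}]$ respectively, so the $C$-win blocks land on the rightmost portions. The $(AC,CA)$ and $(BC,CB)$ partitions are controlled by the independent coordinates $X$ and $Y$ but live on the common domain $[0,1]$, and it is their common refinement that feeds into \Cref{subsect:LP}. I expect the only friction to be this orientation bookkeeping — converting the "highest $X(v)<0$" / "least-negative $Y(v)$" language of \Cref{lem:prefix} into the correct left-versus-right endpoints under a nonincreasing $X$ and a nondecreasing $Y$, and checking that in each of the two stated cases the unmatched mass is forced to the extreme end dictated by the relevant monotone coordinate (the remaining cases being identical up to the placement of that unmatched block).
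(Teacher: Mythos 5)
Your proposal is correct and follows essentially the same route as the paper, which states \Cref{lem:partition} as an immediate consequence of the prefix property (\Cref{lem:prefix}) and the counter-monotone mapping of $X,Y$ onto $[0,1]$ (\Cref{prop:CM-coupling}, \Cref{rmk:XY-interval}), with exactly the orientation bookkeeping you describe. The only (harmless) overstatement is that the unmatched $CA$ voters sit at the ``far right end'': the paper's fourth interval lumps $A$-loss pairs and unmatched voters into a single block, so their relative position within it is neither determined nor needed.
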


\begin{table}[h]
    \centering
    \renewcommand{\arraystretch}{1.3}
    \setlength{\tabcolsep}{4pt}
    \begin{tabular}{|c||c|c|c|c|c|}
        \hline
        \multirow{3}{*}{$X$} & Interval & $[0,W_{AC}]$ & $[W_{AC},\lvert AC\rvert]$ & $[\lvert AC\rvert,\lvert AC\rvert+W_{AC}]$ & $[\lvert AC\rvert+W_{AC}, 1]$ \\ \cline{2-6}
        & Role   & $AC$ $A$-win & $AC$ $A$-loss & $CA$ $A$-win & $CA$ $A$-loss/unmatched \\ \cline{2-6}
        & Length & $W_{AC}$ & $W_{CA}$ & $W_{AC}$ & $W_{CA} + (1-2\lvert AC\rvert)$ \\
        \hline\hline
        \multirow{3}{*}{$Y$} & Interval & $[0,W_{BC}]$ & $[W_{BC}, \lvert BC\rvert]$ & $[\lvert BC\rvert,1 - \lvert BC\rvert + W_{BC}]$ & $[1 - \lvert BC\rvert + W_{BC},1]$ \\ \cline{2-6}
        & Role   & $BC$ $C$-loss & $BC$ $C$-win & $CB$ unmatched/$C$-loss & $CB$ $C$-win \\ \cline{2-6}
        & Length & $W_{BC}$ & $W_{CB}$ & $(1 - 2\lvert BC\rvert) + W_{BC}$ & $W_{CB}$ \\
        \hline
    \end{tabular}
    \caption{\label{tab:interval}Two different partitions of $[0,1]$ induced by $X$ and $Y$.}
\end{table}

\subsubsection[Property 2: Tightness of f(AC) >= 1-lambda and f(CB) >= lambda]
{Property 2: Tightness of $f(AC) \geqslant 1-\lambda$ and $f(CB) \geqslant \lambda$}

Since the constraints for $f(AC)$ and $f(CB)$ in our mathematical program are made most slack by choosing the $A$-optimal $(A, C)$ matching and $C$-optimal $(C, B)$ matchings, we will now restrict our $f$ values to correspond to the optimal matchings.

Furthermore, we define $M_{AC}$ and $M_{CB}$ to be the optimal $(A, C)$ and $(C, B)$ matchings which satisfy the prefix property in \Cref{lem:prefix}. Similarly $W_{AC}$ and $W_{CB}$ correspond to the number of wins in the optimal matching.

We now show a key result, that without loss of generality, we can assume that both the $f(AC) \geqslant 1 - \lambda$ and $f(CB) \geqslant \lambda$ constraints are tight.

\begin{lemma}
    \label{lem:tight-AC}
    Let $I$ be an instance under which $f(AC) \geqslant 1-\lambda$ and $f(CB) \geqslant \lambda$. Then there exists an instance $\hat{I}$ such that $\hat{f}(AC) = 1-\lambda, \hat{f}(CB) = \lambda$, and $\Phi_R(X', Y') \leqslant \Phi_R(X,Y)$. Consequently, we may assume without loss of generality that $f(AC) = 1-\lambda$ and $f(CB) = \lambda$ in \programCref{eq:program}.
\end{lemma}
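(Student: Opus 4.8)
The plan is to leave $Y$ (hence also the $C$-optimal $(C,B)$ matching, and therefore $f(CB)$) completely untouched, so that $f(CB')=f(CB)$ holds automatically by \Cref{prop:marginal}, and to modify only the multiset of $X$-values. After any such modification we re-couple $X$ with $Y$ counter-monotonically; by \Cref{prop:CM-coupling} this only lowers $\Phi_R$, and it changes neither the marginal of $X$ nor that of $Y$, hence neither $f(AC)$ nor $f(CB)$. Since $\mathbb{E}Y$ is now a fixed constant and $\Phi_R(X,Y)=\mathbb{E}X+(R+1)\,\mathbb{E}Y+R\cdot\mathbb{E}[Z_{\min}(X,Y)]$ with $Z_{\min}$ given in closed form by \Cref{eq:Zmin}, the lemma reduces to the following: starting from $f(AC)>1-\lambda$, continuously deform the $X$-multiset so that $f(AC)$ decreases to exactly $1-\lambda$ while neither $\mathbb{E}X$ nor $\mathbb{E}[Z_{\min}(X,Y)]$ increases.

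The deformation I would use is a \emph{degradation of $A$}: progressively lower the $X$-values of the least polar pro-$A$ voters. By the prefix/partition structure of the $A$-optimal matching (\Cref{lem:prefix}, \Cref{lem:partition}, \Cref{tab:interval}) the weakest $A$-win pairs lie at the inner ends of the $AC$ $A$-win and $CA$ $A$-win intervals, and lowering $X$ on the pro-$A$ coordinate of such a pair while keeping it $\geqslant 0$ (so the voter stays in $AC$) turns the pair into a tie/loss; this keeps $\abs{AC}$ fixed, strictly decreases $W_{AC}$, and hence decreases $f(AC)=(\abs{AC}+w\,W_{AC})/(1+w\abs{AC})$, which is increasing in $W_{AC}$. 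The move does not increase $\mathbb{E}X$, and for $\mathbb{E}[Z_{\min}]$ one reads off from \Cref{eq:Zmin} that lowering $X(v)$ with $X(v)$ kept $\geqslant 0$ cannot raise $\|X\|_{\infty}$ and pointwise lowers the two terms $(\|X\|_{\infty}+X(v))/2$ and $(\|X+Y\|_{\infty}+X(v)-Y(v))/2$ at $v$, as long as $\|X+Y\|_{\infty}$ does not grow; I would enforce the latter by lowering each modified coordinate $v$ only down to $\max\{0,\,-\|X+Y\|_{\infty}-Y(v)\}$. If this exhausts all $A$-win pairs ($W_{AC}=0$) while $f(AC)$ is still above $1-\lambda$, I continue by moving the least polar pro-$A$ voters across into $CA$: now $f(AC)=\abs{AC}/(1+w\abs{AC})$ is strictly increasing in $\abs{AC}$, so shrinking $\abs{AC}$ drives $f(AC)$ continuously down toward $0\leqslant 1-\lambda$, with the same $\ell_\infty$ bookkeeping. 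Along the whole continuous path $\Phi_R$ is nonincreasing and $f(AC)$ sweeps from its initial value down to $0$, so by the intermediate value theorem (valid in the continuum model) there is a first point where $f(AC)=1-\lambda$; taking the instance at that point as $I'$ gives $\Phi_R(X',Y')\leqslant\Phi_R(X,Y)$, $f(AC')=1-\lambda$, and $f(CB')=f(CB)$. (If the original instance lies outside the regime $\abs{AC}\leqslant\abs{CA}$ of \Cref{tab:interval}, one first absorbs the unmatched $AC$ mass into $CA$; the argument is otherwise identical, as is the symmetric statement for the $f(CB)$-constraint obtained by modifying $Y$ instead.)

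The step I expect to be the main obstacle is the bookkeeping on $\|X\|_{\infty}$ and especially $\|X+Y\|_{\infty}$ inside $Z_{\min}$: lowering $X(v)$ pushes $X(v)+Y(v)$ downward, and by the counter-monotone coupling the pro-$A$ voters being modified can carry fairly negative $Y$-values, so one must argue — using the four-interval lengths $W_{AC},W_{CA},\abs{AC}$ of \Cref{tab:interval} and the analogous quantities for $Y$ — that there is always enough slack to flip every needed $A$-win margin (and, afterwards, to shrink $\abs{AC}$) before either $\ell_\infty$ term would begin to increase. Everything else — monotonicity of $f(AC)$ in $W_{AC}$ and in $\abs{AC}$, continuity of the path, and invariance of $f(CB)$ — is routine.
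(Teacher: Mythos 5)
There is a genuine gap, and it is exactly the step you flag as the ``main obstacle'': your degradation move lowers $X$ only at selected pro-$A$ voters, and nothing in the proposal guarantees that this can be carried far enough to reach $f(AC)=1-\lambda$. Lowering $X(v)$ at one voter makes $X(v)+Y(v)$ more negative, and once it passes $-\|X+Y\|_\infty$ the third term $\tfrac12\bigl(\|X+Y\|_\infty+X(u)-Y(u)\bigr)$ of $Z_{\min}$ in \Cref{eq:Zmin} grows at \emph{every other} voter $u$ where it is active, with no compensating decrease there; so without your floor the move can strictly increase $\Phi_R$ (the loss $-p_v\delta$ in $\mathbb{E}X$ can be outweighed by $R$ times the norm growth spread over a large mass). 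With the floor $X(v)\geqslant\max\{0,\,-\|X+Y\|_\infty-Y(v)\}$ the objective is safe, but the process can stall: e.g.\ if $X(v)+Y(v)$ is (nearly) constant and small across voters, the floor permits lowering each $X(v)$ by only a tiny amount, far less than the margins $X(u)+X(v)$ of the $A$-win pairs, and far less than what is needed to push any voter into $CA$ in your second phase (which requires $X(v)<0$, violating the floor outright). So neither variant of your path provably sweeps $f(AC)$ down to $1-\lambda$ while keeping $\Phi_R$ nonincreasing, and that is the heart of the lemma, not a bookkeeping detail.

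The paper's proof avoids this entirely by using a \emph{uniform} shift $X_t(v)=X(v)-t$ for all $v$ (with $Y$ unchanged): then $\|X_t\|_\infty$ and $\|X_t+Y\|_\infty$ can grow by at most $t$, while every $X_t(v)$ drops by exactly $t$, so $\|X_t\|_\infty+X_t(v)\leqslant\|X\|_\infty+X(v)$ and $\|X_t+Y\|_\infty+X_t(v)-Y(v)\leqslant\|X+Y\|_\infty+X(v)-Y(v)$ pointwise, giving $Z_{\min}(X_t,Y)\leqslant Z_{\min}(X,Y)$ and hence $\Phi_R(X_t,Y)\leqslant\Phi_R(X,Y)$ for every $t$, with no case analysis on which voters are matched or where the sup-norms are attained. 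For $t>\|X\|_\infty$ one has $f(AC(t))=0$, and an intermediate-value argument finishes. Note also that the continuity you call ``routine'' is not: since $X$ is a step function, whole masses of voters or matched pairs hit ties simultaneously, and $|AC(t)|$ and $W_{AC}(t)$ can jump unless ties are apportioned gradually; the paper devotes a careful tie-handling argument (splitting the tied masses $S_{t'}$ and $P_{t'}$ continuously and using optimality of $W_{AC}$ to bound its change) to make $f(AC(t))$ continuous. If you want to salvage your approach, the cleanest fix is to replace the local degradation by this global shift; otherwise you must supply the missing slack argument, which in general is false without further modification of the instance.
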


\begin{proof}
    The proof consists of two separate, analogous transformations: tightening $f(AC)$ to $1-\lambda$ and tightening $f(CB)$ to $\lambda$.

    We first show that the constraint $f(AC) \geqslant 1-\lambda$ can be made tight while preserving the other constraint and the objective. For $t\geqslant 0$, we define a parameterized instance $I_t$ with variables $X_t(v) = X(v) - t$ and keep $Y_t = Y$ unchanged for all $t$. Since $f(CB)$ is determined by $Y$, $f(CB)$ also remains unchanged. Define $AC(t)$, $CA(t)$, $M_{AC}(t)$, and $W_{AC}(t)$ parametrically to be the values of $AC$, $CA$, $M_{AC}$, and $W_{AC}$ induced by $X_t(v)$.

    As $X(v)$ decreases by $t$, $\|X\|_{\infty}$ and $\|X+Y\|_{\infty}$ increase by no more than $t$, so overall,
    \[
        \|X_t\|_{\infty} + X_t(v) \leqslant \|X\|_{\infty} + t + X(v) - t = \|X\|_{\infty} + X(v),
    \]
    and similarly $\|X_t + Y\|_{\infty} + X_t(v) - Y_t(v) \leqslant \|X+Y\|_{\infty} + X(v) - Y(v)$. Therefore, $Z_{\min_{} }(X_t, Y_t) \leqslant Z_{\min_{} }(X,Y)$. Consequently, $\Phi_R(X_t,Y_t) = \mathbb{E}X_t + (R+1)\cdot \mathbb{E}Y_t +R \cdot\mathbb{E}Z_{\min_{} }(X_t, Y_t) \leqslant \Phi_R(X,Y)$ for all $t$.

    It remains now to find a $t^*$ that attains the equality $f(AC(t^*)) = 1-\lambda$. Observe that at $t=0$, we have $f(AC(0)) \geqslant 1-\lambda$ by assumption; on the other hand, if $t > \|X\|_{\infty}$ then $X_t<0$ everywhere, making every voter prefer $C$ over $A$, at which point $f(AC(t)) = 0$. We argue that if ties are distributed appropriately as $t$ increases, then $f(AC(t))$ is a continuous function. Since $f(AC(0)) \geqslant 1-\lambda$ and $f(AC(t)) = 0$ for $t > \|X\|_{\infty}$, this would imply that there is some value of $t^*$ where $f(AC(t^*)) = 1 - \lambda$.

    Note by \Cref{eq:fw} that a discontinuity of $f(AC(t))$ is either caused by a discontinuity in $\lvert AC(t)\rvert$ or $W_{AC}(t)$. A discontinuity in $\lvert AC(t)\rvert$ happens precisely when $S_t = \{v \mid X_t(v) = 0\}$ has nonzero mass. A discontinuity in $W_{AC}(t)$ can occur when either $S_t$ has nonzero mass, or $P_t = \{(u, v) \in M_{AC}(t) \mid X_t(u) + X_t(v) = 0\}$ has nonzero mass. Fix an arbitrary $t$. Since $X$ is a piecewise step function, let $t' > t$ be the earliest time step after $t$ at which either $S_{t'}$ or $P_{t'}$ has nonzero mass.

    We first handle the case where $P_{t'}$ has nonzero mass. Since $t' > t$, the deliberation for every pair in $P_{t'}$ initially prefers $A$. Now arbitrarily select a subset of pairs $P' \subseteq P_{t'}$ with mass $\varepsilon_1 > 0$. We set the deliberation for every pair in $P'$ to prefer $C$. Clearly $W_{AC}(t')$ decreases by at most $\varepsilon_1$, implying that $W_{AC}(t')$, and hence $f(AC(t'))$, changes continuously.

    \begin{figure}[ht]
	\centering
	\definecolor{green}{HTML}{5FA526}
	\definecolor{pink}{HTML}{F08CE2}
	\definecolor{blue}{HTML}{4381FA}
	\definecolor{purple}{HTML}{9161F9}
	
	\begin{subfigure}[t]{0.27\textwidth}
		\centering
		\begin{tikzpicture}
			\begin{axis}[
				width=\linewidth, height=3.2cm,
				xmin=-0.1, xmax=3.1,
				ymin=-1.2, ymax=1.2,
				axis x line=middle,
				axis y line=none,
				ticks=none,
				axis line style={draw=black},
				clip=false,
                xscale = 1.5, yscale = 0.75
				]
				\addplot[green, line width = 2pt] coordinates {(0,1) (1,1)};
				\addplot[blue,  line width = 2pt] coordinates {(1,0) (2,0)};
				\addplot[green, line width = 2pt] coordinates {(2,-1) (3,-1)};
			\end{axis}
		\end{tikzpicture}
		\caption{Original: green segments ($P_{t'}$) are $A\succ C$ pairs; blue segment ($S_{t'}$) is in $AC$.}
	\end{subfigure}%
	\hfill
	\begin{subfigure}[t]{0.27\textwidth}
		\centering
		\begin{tikzpicture}
			\begin{axis}[
				width=\linewidth, height=3.2cm,
				xmin=-0.15, xmax=3.15,
				ymin=-1.2, ymax=1.2,
				axis x line=middle,
				axis y line=none,
				ticks=none,
				axis line style={draw=black},
				clip=false,
                xscale = 1.5, yscale = 0.75
				]
				\addplot[green, line width = 2pt] coordinates {(0,1) (0.67,1)};
				\addplot[red,   line width = 3pt] coordinates {(0.67,1) (1,1)};
				\addplot[blue,  line width = 2pt] coordinates {(1,0) (2,0)};
				\addplot[green,   line width = 2pt] coordinates {(2,-1) (2.67,-1)};
				\addplot[red, line width = 3pt] coordinates {(2.67,-1) (3,-1)};
				
				\node[inner sep=0, outer sep=0, scale=0.9, anchor=south]
				at (axis cs:0.77,1.18) {\scalebox{0.9}{$\phantom{\leftarrow} \varepsilon_1$ }};  
				\node[inner sep=0, outer sep=0, scale=0.9, anchor=north]
				at (axis cs:2.77,-1.2) {\scalebox{0.9}{$\phantom{\leftarrow} \varepsilon_1$ }}; 
			\end{axis}
		\end{tikzpicture}
		\caption{Handling $P_{t'}$. Carve out $P'$ (red) of mass $\varepsilon_1$ (greens) and attribute them to $C\succ A$ pairs.}
	\end{subfigure}%
	\hfill
	\begin{subfigure}[t]{0.28\textwidth}
		\centering
		\begin{tikzpicture}
			\begin{axis}[
				width=\linewidth, height=3.2cm,
				xmin=-0.15, xmax=3.15,
				ymin=-1.2, ymax=1.2,
				axis x line=middle,
				axis y line=none,
				ticks=none,
				axis line style={draw=black},
				clip=false,
                xscale = 1.5, yscale = 0.75
				]
				\addplot[red,    line width = 2pt] coordinates {(0,1) (1,1)};
				\addplot[blue,   line width = 2pt] coordinates {(1,0) (1.67,0)};
				\addplot[purple, line width = 3pt] coordinates {(1.67,0) (2,0)};
				\addplot[red,    line width = 2pt] coordinates {(2,-1) (3,-1)};
				
				\node[inner sep=0, outer sep=0, scale=0.9, anchor=south]
				at (axis cs:1.7,0.22) {$\phantom{\leftarrow} \varepsilon_2$}; 
			\end{axis}
		\end{tikzpicture}
		\caption{Handling $S_{t'}$. Carve out $S'$ of mass $\varepsilon_2$ (purple) and set it to $CA$. Note $W_{AC}(t')$ may change.}
	\end{subfigure}%
    \caption{A visualization of continuous tie-handling on $X$. Left (a): At time step $t'$, both $P_{t'}$ and $S_{t'}$ have positive mass. Middle (b): We handle $P_{t'}$ by continuously allocating increasing subsets $P'\subseteq P_{t'}$ to $C\succ A$ pairs. Right (c): We then handle $S_{t'}$ by continuously allocating increasing subsets $S'\subseteq S_{t'}$ to $CA$ and argue that the change in $W_{AC}(t')$ is also continuous.}
    \label{fig:fAC_cont}
\end{figure}
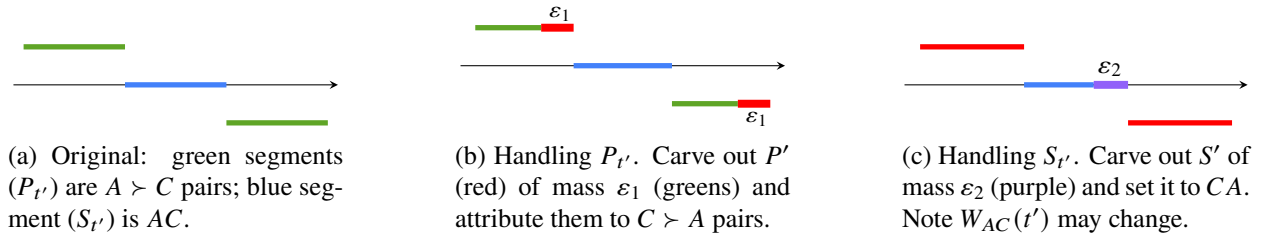

    We now handle the case where $S_{t'}$ has nonzero mass. Since $t' > t$, every voter in $S_{t'}$ is initially in $AC$. We arbitrarily choose a subset $S' \subseteq S_{t'}$ with mass $\varepsilon_2 > 0$ and set the ties so that every voter in $S'$ is in $CA$. Clearly $\lvert AC\rvert$ decreases by exactly $\varepsilon_2$. To bound the change in $W_{AC}(t')$, first note that because $S_{t'}$ is initially the unique set on which $X_t =0$, no $(A\succ C)$ pairs can take place on it. Consequently, our operation cannot decrease $W_{AC}(t')$. On the other hand, if $W_{AC}(t')$ increased by more than $\varepsilon_2$, then we could remove the set $S'$ and achieve a matching for the original instance with a higher value of $W_{AC}(t')$, which contradicts the original optimality of $W_{AC}(t')$.

    Thus $\lvert AC(t')\rvert$ and $W_{AC}(t')$ are both continuous in the change, so by \Cref{eq:fw}, $f(AC(t'))$ changes continuously for this step as well. This implies there is some value of $t^*$ where $f(AC(t^*)) = 1 - \lambda$.

    The argument for tightening $f(CB)$ is highly analogous. For $t\geqslant 0$, we keep $X_t = X$ unchanged for all $t$ and let $Y_t(v) = Y(v) - t$. Then $\|Y\|_{\infty}$ and $\|X+Y\|_{\infty}$ increase by no more than $t$ as $Y(v)$ decreases by $t$, so
    \[
        \|Y_t\|_{\infty} - Y_t(v) \leqslant \|Y\|_\infty + t - (Y(v)-t) = \|Y\|_{\infty} - Y(v) + 2t,
    \]
    and similarly $\|X_t+Y_t\|_{\infty} + X_t(v) - Y_t(v) \leqslant \|X+Y\|_{\infty} + X(v) - Y(v) + 2t$. By \Cref{eq:Zmin} these imply $Z_{\min_{} }(X_t,Y_t) \leqslant Z_{\min_{} }(X,Y) + t$. Then $\Phi_R(X_t, Y_t) \leqslant \Phi_R(X,Y)$, as the coefficient of $\mathbb{E}Y$ is $R+1$, greater than that of $\mathbb{E}Z$. The rest of the proof mirrors the argument above by noting that when $t > \|Y\|_{\infty}$, we have $f(CB(t)) = 0$ because $Y_t<0$ everywhere. We omit the tie-handling details. \qedhere
\end{proof}

With the two properties of optimal matchings in place, we now restate \programCref{eq:program} below.

\begin{equation}
    \begin{array}{rl}
        \text{Minimize} \qquad& \Phi_R(X,Y)=\mathbb{E}X+(R+1)\cdot\mathbb{E}Y+R\cdot\mathbb{E}Z\\
        \text{over} \qquad& X,Y \text{ on } V,\quad Z=Z_{\min}(X,Y)\ \text{ from \Cref{eq:Zmin}}\\[1em]
        \text{Subject to} \qquad& f(AC)\ \text{is induced by an $A$-optimal $(AC,CA)$ matching}; \\
        & f(CB)\ \text{is induced by a $C$-optimal $(CB,BC)$ matching};\\
        & f(AC)= 1-\lambda,\ \ f(CB)= \lambda.
    \end{array}
    \label{eq:program-reduced}
\end{equation}

We will next show that this program can be made to have a constant number of variables, and it is a bilinear program with two disjoint sets of variables and linear constraints over these.

    \subsection{Bilinear Program and the Distortion of $3$}

\label{subsect:LP}

For the remainder of the section, we fix $\lambda^* = (3 - \sqrt{3})/2 \approx 0.634$ and $w^* = \sqrt{3}-1 \approx 0.732$, and show that the deliberation-via-matching protocol under these parameters has distortion at most $3$. We use this setting, since the construction in \Cref{sec:main_lb} shows that all choices of $(\lambda, w)$ have worst case distortion $\geqslant 3$ for our protocol, thereby proving the optimality of $(\lambda^*, w^*)$. Further, assuming these parameters reduces the number and complexity of cases we need to consider below.

We first show that this setting of parameters, combined with the tightness of the $f$ constraints implies the sizes of $\lvert AC\rvert$ and $\lvert CB\rvert$ satisfy some nice properties. This leads to a collection of instances that we then simplify using the convexity of the variable $Z$ and the max norms into bilinear programs with a constant number of variables, which we can easily solve via vertex enumeration.

\subsubsection[Bounding the Sizes of $\lvert AC\rvert$ and $\lvert CB\rvert$]{Bounding the Sizes of $\lvert AC\rvert$ and $\lvert CB\rvert$}

We begin by characterizing the range of possible sizes of $\lvert AC\rvert$ and $\lvert CB\rvert$ when $f(AC) = 1 - \lambda^*$ and $f(CB) = \lambda^*$. We defer the algebraic derivations to \Cref{app:algebra} to streamline the flow.

\begin{lemma}[Proved in \Cref{app:algebra}]
    \label{lem:mn_mx}
    When $f(AC) = 1 - \lambda^*$, we have $0.25 \leqslant \lvert AC\rvert \leqslant 0.50$. Similarly, when $f(CB) = \lambda^*$, we have $0.50 \leqslant \lvert CB\rvert \leqslant 0.75$. In particular, for instances where $f(AC) = 1 - \lambda^*$ and $f(CB) = \lambda^*$, we always have $\lvert AC\rvert \leqslant \lvert CA\rvert$ and $\lvert BC\rvert \leqslant \lvert CB\rvert$.
\end{lemma}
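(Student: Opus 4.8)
The plan is to convert each of the two equality constraints $f(AC)=1-\lambda^*$ and $f(CB)=\lambda^*$ into an explicit one-variable inequality in $\abs{AC}$ (respectively $\abs{CB}$), by pushing the deliberation-win masses $W_{AC}$ and $W_{CB}$ to the extremes of their feasible ranges, and then to read off the four bounds from the special numerology of $(\lambda^*,w^*)$. The only structural inputs I need are: (i) the identity $f(AC)=\dfrac{\abs{AC}+w\,W_{AC}}{1+w\min\{\abs{AC},\abs{CA}\}}$ (and symmetrically for $f(CB)$), which follows from \eqref{eq:fw} together with the total-score formula $\mathrm{score}(AC)+\mathrm{score}(CA)=1+w\min\{\abs{AC},\abs{CA}\}$ once the total voter mass is normalized to $1$; and (ii) the bounds $0\le W_{AC}\le\min\{\abs{AC},\abs{CA}\}$ and $0\le W_{CB}\le\min\{\abs{CB},\abs{BC}\}$, which hold simply because $W_{AC}$ (resp. $W_{CB}$) is a submass of a matching of size $\min\{\abs{AC},\abs{CA}\}$ (resp. $\min\{\abs{CB},\abs{BC}\}$). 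Note these facts are independent of which maximum matching is chosen, so the conclusion will hold for any matching realizing the stated $f$-values, in particular the optimal ones used in \eqref{eq:program-reduced}.

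Next I would record the two monotonicity facts that make the numerology work: both $t\mapsto\dfrac{t}{1+w(1-t)}$ and $t\mapsto\dfrac{(1+w)t}{1+wt}$ are strictly increasing on $(0,1)$ — a one-line quotient-rule check in each case, the numerator of the derivative collapsing to $1+w>0$ — and, evaluated at the relevant endpoint under the chosen parameters, they hit exactly the target values: $\dfrac{1}{2+w^*}=\dfrac{1+w^*}{4+w^*}=1-\lambda^*$ and $\dfrac{1+w^*}{2+w^*}=\dfrac{3}{4+w^*}=\lambda^*$, all immediate from $1+w^*=\sqrt3$ and $\lambda^*=(3-\sqrt3)/2$. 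With these in hand the four bounds are routine. For $\abs{AC}\le\tfrac12$: if $\abs{AC}>\tfrac12$ then $\min\{\abs{AC},\abs{CA}\}=1-\abs{AC}$, and discarding the nonnegative $wW_{AC}$ term gives $f(AC)\ge\dfrac{\abs{AC}}{1+w(1-\abs{AC})}>\dfrac{1/2}{1+w/2}=1-\lambda^*$, contradicting the equality. For $\abs{AC}\ge\tfrac14$: we may now assume $\abs{AC}\le\tfrac12$, so $\min=\abs{AC}$ and $W_{AC}\le\abs{AC}$, hence $f(AC)\le\dfrac{(1+w)\abs{AC}}{1+w\abs{AC}}$, which is increasing and equals $1-\lambda^*$ exactly at $\abs{AC}=\tfrac14$, forcing $\abs{AC}\ge\tfrac14$. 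The bounds $\tfrac12\le\abs{CB}\le\tfrac34$ are entirely symmetric: $\abs{CB}<\tfrac12$ would give $f(CB)\le\dfrac{(1+w)\abs{CB}}{1+w\abs{CB}}<\dfrac{1+w^*}{2+w^*}=\lambda^*$, and $\abs{CB}>\tfrac34$ would give $\min=1-\abs{CB}$ and $f(CB)\ge\dfrac{\abs{CB}}{1+w(1-\abs{CB})}>\dfrac{3/4}{1+w/4}=\lambda^*$, both contradictions. Finally, $\abs{AC}\le\tfrac12$ yields $\abs{CA}=1-\abs{AC}\ge\abs{AC}$ and $\abs{CB}\ge\tfrac12$ yields $\abs{BC}=1-\abs{CB}\le\abs{CB}$, which is the closing assertion.

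I do not expect a genuine obstacle here — the lemma is essentially a restatement of the defining identities for $(\lambda^*,w^*)$. The only point that needs a little care is justifying the matching-size bound $W_{AC}\le\min\{\abs{AC},\abs{CA}\}$, which is immediate from the protocol's definition of a maximum matching on disagreeing voters; and, if one wants the displayed ranges to be closed and tight rather than merely one-sided, one should also exhibit instances where the extreme win masses ($W_{AC}\in\{0,\abs{AC}\}$, and analogously for $W_{CB}$) are metrically realizable, so that the endpoints $\tfrac14,\tfrac12,\tfrac34$ are attained — a one-line construction (e.g.\ make all matched $AC$-voters uniformly strongly pro-$A$, or all barely pro-$A$). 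Neither point is needed for the one-sided inequalities that the subsequent analysis actually relies on.
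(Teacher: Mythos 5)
Your proof is correct and follows essentially the same route as the paper: bound $W_{AC}$ (resp.\ $W_{CB}$) between $0$ and the matching size $\min\{\abs{AC},\abs{CA}\}$, turn the equality $f(AC)=1-\lambda^*$ (resp.\ $f(CB)=\lambda^*$) into monotone one-variable inequalities in $\abs{AC}$ (resp.\ $\abs{CB}$), and evaluate at the endpoints $0.25,0.5,0.75$ using the specific values of $(\lambda^*,w^*)$. Your explicit case split on the $\min$ and the closing derivation of $\abs{AC}\leqslant\abs{CA}$, $\abs{BC}\leqslant\abs{CB}$ match the paper's argument in substance.
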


\begin{lemma}[Proved in \Cref{app:algebra}]
    \label{lem:midpoint}
    When $f(AC) = 1 - \lambda^*$, we have $\lvert AC\rvert + W_{AC} = 0.5$. Similarly, when $f(CB) = \lambda^*$, we have $\lvert BC\rvert + W_{BC} = 0.5$.
\end{lemma}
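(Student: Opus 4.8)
The plan is a short direct computation that exploits the two scalar identities satisfied by the chosen parameters. First, I would invoke \Cref{lem:mn_mx}: since $f(AC) = 1 - \lambda^*$ forces $\abs{AC} \leqslant 1/2 \leqslant \abs{CA}$, the smaller side of the $(A,C)$ matching is $AC$, so $W_{AC} + W_{CA} = \min(\abs{AC}, 1 - \abs{AC}) = \abs{AC}$, and therefore
\[
\mathrm{score}(AC) + \mathrm{score}(CA) = \big(\abs{AC} + w^* W_{AC}\big) + \big(\abs{CA} + w^* W_{CA}\big) = 1 + w^*\abs{AC}.
\]
From the tightness $f(AC) = 1-\lambda^*$ one reads off $\mathrm{score}(AC) = (1-\lambda^*)\big(1 + w^*\abs{AC}\big)$, and since $\mathrm{score}(AC) = \abs{AC} + w^* W_{AC}$ this yields $W_{AC} = \big(\mathrm{score}(AC) - \abs{AC}\big)/w^*$.

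Next I would substitute this expression into $\abs{AC} + W_{AC}$ and collect terms:
\[
\abs{AC} + W_{AC} = \frac{(w^*-1)\,\abs{AC} + (1-\lambda^*)\big(1 + w^*\abs{AC}\big)}{w^*} = \frac{(1-\lambda^*) + \abs{AC}\big[w^*(2-\lambda^*) - 1\big]}{w^*}.
\]
The crux is that the chosen constants satisfy $w^*(2-\lambda^*) = 1$ and $1-\lambda^* = w^*/2$, both of which follow by plugging in $\lambda^* = (3-\sqrt{3})/2$ and $w^* = \sqrt{3}-1$ (for instance $w^*(2-\lambda^*) = (\sqrt{3}-1)(1+\sqrt{3})/2 = (3-1)/2 = 1$). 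The first identity annihilates the coefficient of $\abs{AC}$, and the second gives $\abs{AC} + W_{AC} = (1-\lambda^*)/w^* = 1/2$, as claimed.

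For the $(C,B)$ statement I would run the mirror-image argument: \Cref{lem:mn_mx} now gives $\abs{CB} \geqslant 1/2 \geqslant \abs{BC}$, so the smaller side is $BC$, whence $W_{CB} + W_{BC} = \abs{BC}$ and $\mathrm{score}(CB) + \mathrm{score}(BC) = 1 + w^*\abs{BC}$. The tightness $f(CB) = \lambda^*$ then gives $\mathrm{score}(BC) = (1-\lambda^*)\big(1 + w^*\abs{BC}\big)$, and substituting $W_{BC} = \big(\mathrm{score}(BC) - \abs{BC}\big)/w^*$ into $\abs{BC} + W_{BC}$ and collecting terms exactly as above again produces $\abs{BC} + W_{BC} = (1-\lambda^*)/w^* = 1/2$. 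There is no genuine obstacle here beyond bookkeeping: the only point requiring care is correctly identifying the smaller side of each matching (so that the denominator normalization $1 + w^*\cdot(\text{smaller side})$ is right), which is precisely what \Cref{lem:mn_mx} supplies, after which the conclusion is forced by the two scalar identities for $(\lambda^*, w^*)$.
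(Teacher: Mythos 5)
Your proof is correct and follows essentially the same route as the paper: invoke \Cref{lem:mn_mx} to identify the smaller side of each matching (so the normalization $1+w^*\cdot\min(\cdot)$ is right), impose the tight $f$-constraint, and solve using the specific values of $(\lambda^*,w^*)$. The only cosmetic difference is that for the $(C,B)$ pair you work directly with $\mathrm{score}(BC)$ and $W_{BC}$ via $f(BC)=1-\lambda^*$, while the paper writes the equation in terms of $\abs{CB}, W_{CB}$ and then substitutes $\abs{CB}=1-\abs{BC}$, $W_{CB}=\abs{BC}-W_{BC}$; both amount to the same computation.
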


We now consider the partitions induced on the number line as given in \Cref{tab:interval}. Our goal will be to write a bilinear programming relaxation of \programCref{eq:program-reduced}, where we have variables for each interval in the partition. There are two cases based on the sizes of $\lvert AC\rvert$ and $\lvert BC\rvert$.

\subsubsection[Case 1: $\lvert AC\rvert \leqslant \lvert BC\rvert$]{Bilinear Program Case $1$: $\lvert AC\rvert \leqslant \lvert BC\rvert$}

We first consider the case where $\lvert AC\rvert \leqslant \lvert BC\rvert$. We know from \Cref{lem:mn_mx} that $\lvert AC\rvert \leqslant \lvert CA\rvert$ and $\lvert BC\rvert \leqslant \lvert CB\rvert$, so the partitions induced by $X$ and $Y$ are shown in \Cref{tab:interval}. As shown in \Cref{fig:ac_small_LP}, we partition the range into $9$ intervals labeled $1$ through $9$. The top row of the figure depicts how the $9$ intervals relate to the partition induced by $X$, while the bottom row depicts how the same $9$ intervals relate to the partition induced by $Y$. For the $X$ partition, intervals $1$ and $2$ correspond to $[0, W_{AC}]$, interval $3$ corresponds to $[W_{AC}, \lvert AC\rvert]$, intervals $4$ and $5$ correspond to $[\lvert AC\rvert, \lvert AC\rvert + W_{AC}]$, and intervals $6$ through $9$ correspond to $[\lvert AC\rvert + W_{AC}, 1]$. For the $Y$ partition, interval $1$ corresponds to $[0, W_{BC}]$, intervals $2$ through $4$ correspond to $[W_{BC}, \lvert BC\rvert]$, intervals $5$ and $6$ correspond to $[\lvert BC\rvert,1 - \lvert BC\rvert + W_{BC}]$, and intervals $7$ through $9$ correspond to $[1 - \lvert BC\rvert + W_{BC}, 1]$. The interpretations of each segment of the $X$ and $Y$ partitions are given in \Cref{tab:interval}.

\begin{figure}[htbp]
    \centering
    \scalebox{0.85}{%
  \def\svgwidth{\columnwidth}
  \import{./figs/}{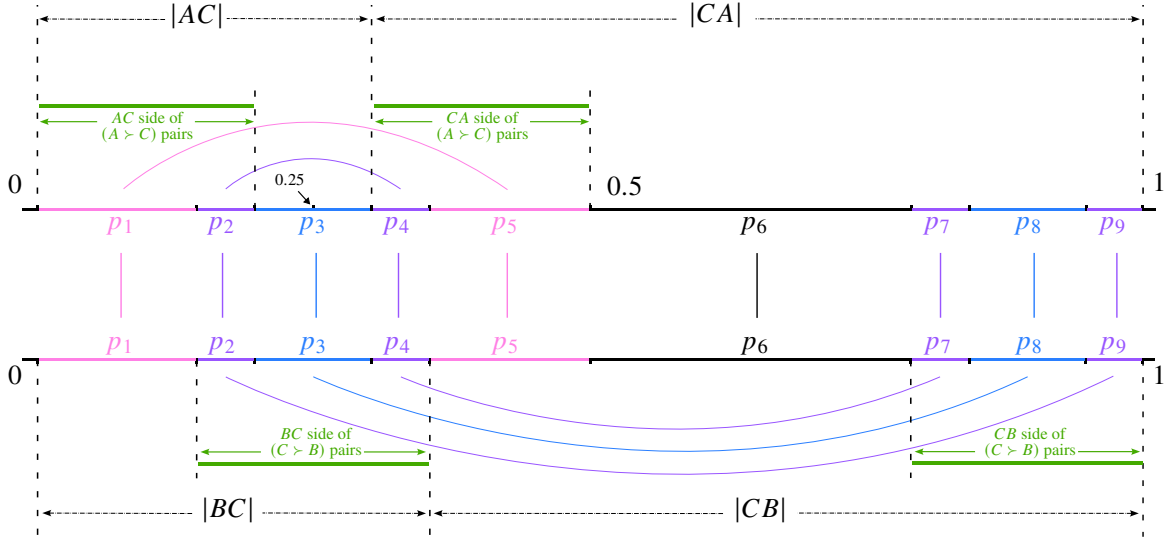}
}
    \caption{The lines go from $0$ to $1$, capturing cumulative voter mass. The top line represents $X$ values in decreasing order and the bottom line represents $Y$ values in increasing order. A voter appears at the same position in both lines. The pink masses $p_1$ and $p_5$ represent a set of $A \succ C$ matching pairs. This means $p_1 = p_5$. The same holds for the purple masses $p_2$ and $p_4$. The masses $p_1, p_2, p_3$ correspond to non-negative $X$ values, hence together capture $\lvert AC\rvert$. The masses $p_1, \ldots, p_4$ have non-positive $Y$ values and together capture $\lvert BC\rvert$. The $C \succ B$ pairs are captured by the pairs of masses $(p_2, p_9)$, $(p_3, p_8)$ and $(p_4,p_7)$.}
    \label{fig:ac_small_LP}
\end{figure}

We now show that it suffices to consider instances where for each interval, the values of $X$ and $Y$ are constant across it. In particular, we can replace the voters in an interval with a weighted voter whose $X$ (resp. $Y$) value is equal to the average of the $X$ (resp. $Y$) values of the voters in that interval, and the weight being the mass of voters in that interval. This follows from the lemma below.

\begin{lemma}
    \label{lem:compact}
    For any two voters $u,v$ with values $(X(u), Y(u))$ and $(X(v), Y(v))$, let $\mu = (X(u) + X(v))/2$ and $\nu = (Y(u) + Y(v))/2$. Then if we replace $u$ and $v$ with two voters with identical values $(\mu,\nu)$, the objective in \Cref{eq:linear-obj} does not increase.
\end{lemma}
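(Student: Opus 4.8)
The plan is to first identify which parts of the objective can move. Writing $\Phi_R(X,Y)=\mathbb{E}X+(R+1)\,\mathbb{E}Y+R\,\mathbb{E}Z$ with $Z=Z_{\min}(X,Y)$ as in \Cref{eq:Zmin}, note that the two replacement voters carry the same total mass as $u,v$ and that their $X$-values (resp.\ $Y$-values) average to exactly $\mu$ (resp.\ $\nu$); hence $\mathbb{E}X$ and $\mathbb{E}Y$ are unchanged. Since $R>0$, it therefore suffices to show that $\mathbb{E}Z$, i.e.\ the mass-weighted sum $\sum_v \rho(v)\,Z_{\min}(v)$, does not increase.

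The key structural fact I would use is that $Z_{\min}(\cdot)$ is, for fixed values of the three $\ell_\infty$-norms $N_1:=\|X\|_\infty$, $N_2:=\|Y\|_\infty$, $N_3:=\|X+Y\|_\infty$, a maximum of affine functions of $(X(v),Y(v))$ (see \Cref{eq:Zmin}), hence convex; and moreover $Z_{\min}(v)$ is nondecreasing in each $N_i$. The first step is then to observe that the replacement cannot enlarge any $N_i$: by the triangle inequality for absolute values, $|\mu|\le\max(|X(u)|,|X(v)|)$, $|\nu|\le\max(|Y(u)|,|Y(v)|)$, and $|\mu+\nu|\le\max(|X(u)+Y(u)|,|X(v)+Y(v)|)$, so deleting $u,v$ and inserting the two averaged voters only removes candidates for the maxima and adds ones no larger. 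Write $N_1',N_2',N_3'$ for the post-replacement norms, so $N_i'\le N_i$.

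Given this, for every voter $v\notin\{u,v\}$ each of the three terms in $Z_{\min}(v)$ has the form $\tfrac12(N_i+\ell_i(X(v),Y(v)))$ with $\ell_i$ linear and unchanged; since $N_i'\le N_i$, $Z_{\min}(v)$ weakly decreases, so the untouched voters' contribution to $\mathbb{E}Z$ weakly decreases. It remains to compare the combined contribution of $u,v$ before the swap, $Z_{\min}(u)+Z_{\min}(v)$, with that of the two averaged voters after, $2\,Z_{\min}'(\mu,\nu)$ (primes denoting use of the new norms). Using $N_i'\le N_i$,
\[
2\,Z_{\min}'(\mu,\nu)=\max\{N_1'+\mu,\ N_2'-\nu,\ N_3'+\mu-\nu\}\ \le\ \max\{N_1+\mu,\ N_2-\nu,\ N_3+\mu-\nu\}.
\]
On the other hand, from the pointwise lower bounds in \Cref{eq:Zmin} we have $Z_{\min}(u)\ge\tfrac12(N_1+X(u))$ and $Z_{\min}(v)\ge\tfrac12(N_1+X(v))$, so $Z_{\min}(u)+Z_{\min}(v)\ge N_1+\mu$; analogously $Z_{\min}(u)+Z_{\min}(v)\ge N_2-\nu$ and $\ge N_3+\mu-\nu$. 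Taking the maximum of these three lower bounds gives $Z_{\min}(u)+Z_{\min}(v)\ge 2\,Z_{\min}'(\mu,\nu)$. Combining with the previous step shows $\mathbb{E}Z$ does not increase, hence $\Phi_R$ does not increase. If $u$ and $v$ carry unequal mass, one first splits the heavier of the two so that the argument applies to equal-mass voters (or, in the continuum, shifts mass accordingly). I do not anticipate a real obstacle here; the only point needing care is precisely that $Z_{\min}$ is not a pointwise function of a voter's values but depends on the global norms, which is exactly why the monotonicity-in-$N_i$ observation together with the fact that averaging shrinks the norms is doing the work.
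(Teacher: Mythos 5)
Your proposal is correct and takes essentially the same route as the paper's proof: you first observe that averaging cannot increase $\|X\|_{\infty}$, $\|Y\|_{\infty}$, or $\|X+Y\|_{\infty}$, and then exploit that $Z_{\min}$ is a maximum of affine functions of $(X(v),Y(v))$ (convex, and monotone in the norms) to conclude $\mathbb{E}Z$ does not increase while $\mathbb{E}X,\mathbb{E}Y$ are preserved. Your explicit midpoint inequality for the replaced pair is exactly the Jensen step the paper invokes, just written out term by term, and your handling of the untouched voters via monotonicity in the norms makes explicit what the paper leaves implicit.
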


\begin{proof}
    First, note that
    \begin{align*}
        \max\left(X(u) + Y(u),X(v) + Y(v)\right) &\geqslant \frac{X(u) + Y(u) + X(v) + Y(v)}{2}\\
        &= \mu + \nu \geqslant \min\left(X(u) + Y(u), X(v) + Y(v)\right),
    \end{align*}
    which means $\| X+Y \|_{\infty}$ cannot increase. A similar argument shows that $\| X \|_{\infty}, \| Y \|_{\infty}$ cannot increase. Next, fixing the values of these norms, $Z_{\min}$ in \Cref{eq:Zmin} is the maximum of three linear functions, and is therefore a convex function of $X,Y$. By Jensen's inequality, this means $\mathbb{E}Z$ cannot increase when we replace values by their means. Finally, $\mathbb{E}X, \mathbb{E}Y$ are preserved by this transformation.
\end{proof}

Using this lemma in each interval, we can replace the voters in each interval with a weighted voter with $X,Y$ values equal to the means of $X,Y$ values of voters in the interval. This does not increase the objective in \Cref{eq:linear-obj}. We now show that the $f$ constraints are preserved. Consider for example the intervals $(1,5)$ that define a set of $A\succ C$ matchings, each with non-negative sum of $X$ values. Simple averaging over the pairs of matched voters shows that the sum of the mean values of $X$ in the two intervals is non-negative, so that the new weighted voters also define an $A\succ C$ matching. Further, voters who initially preferred $A$ to $C$ map to a weighted voter with the same preference. This shows the $f$ constraints are preserved in this process.

For each interval $i \in [1, 9]$ let $X_i$ and $Y_i$ denote the uniform value of $X$ and $Y$ respectively on that interval. Also let $p_i$ denote the length of interval $i$. We first show that $p_2 = p_4$. By \Cref{lem:midpoint}, we have $W_{AC} + \lvert AC\rvert = 0.5$, so the midpoint of interval $3$ must be at $0.25$. Similarly, we have $W_{BC} + \lvert BC\rvert = 0.5$ so the midpoint of intervals $2$, $3$, and $4$ collectively must also be at $0.25$. Together, this implies $p_2 = p_4$. Since intervals $2$ through $4$ collectively are centered around $0.25$, and interval $5$ ends at $W_{AC} + \lvert AC\rvert = 0.5$, we also have $p_1 = p_5$. Finally, we define intervals $7$ through $9$ to be the intervals that match with intervals $2$ through $4$ in the $(B, C)$ matching. Thus we have $p_2 = p_9$, $p_3 = p_8$, and $p_4 = p_7$. In total we have the constraints $p_1 = p_5$, $p_2 = p_4 = p_7 = p_9$, and $p_3 = p_8$.

We now describe the constraints on $X$ and $Y$ induced by the matching constraints. Recall that intervals $1$ and $2$ correspond to the section of $AC$ where $A$ wins the deliberation, and intervals $4$ and $5$ correspond to the section of $CA$ where $A$ wins the deliberation. The $A$-optimal matching pairs interval $1$ with interval $5$ and interval $2$ with interval $4$. Thus we have the constraints $X_1 + X_5 \geqslant 0$ and $X_2 + X_4 \geqslant 0$. Similarly for $Y$, we have the constraints $Y_2 + Y_9 \geqslant 0$, $Y_3 + Y_8 \geqslant 0$, and $Y_4 + Y_7 \geqslant 0$. We note that our relaxation will not need to enforce the constraints that $X(u) + X(v) \leqslant 0$ for a deliberation where $C$ wins against $A$ (or the corresponding constraint for $Y$).

By the counter-monotonic coupling of $X$ and $Y$, we have $X_i \geqslant X_{i+1}$ and $Y_i \leqslant Y_{i+1}$ for all $i \in [8]$. Finally, since the section $AC$ corresponds to positive $X$ values and the section $CB$ corresponds to positive $Y$ values, we have $X_3 \geqslant 0$ and $Y_5 \geqslant 0$. Combining everything, we obtain the following relaxation of \programCref{eq:program-reduced}:

\begin{align}
    \label{eq:final_lp}
    \min \qquad & \mathbb{E}X+(R+1)\cdot\mathbb{E}Y+R\cdot\mathbb{E}Z \,, &\\
    \textrm{s.t.} \qquad & \mathbb{E}X = \sum_{i=1}^{9} p_i \cdot X_i, \qquad \mathbb{E}Y = \sum_{i=1}^{9} p_i \cdot Y_i \qquad \mbox{and}
    \qquad \mathbb{E}Z = \sum_{i=1}^{9} p_i \cdot Z_i  \notag \\
    & Z_i \geqslant Z_{\min}(X_i,Y_i) \qquad \forall i  \in [9]  \tag{Set of linear constraints} \\
    & X_i \geqslant X_{i+1} \qquad \mbox{and} \qquad Y_i \leqslant Y_{i+1} \qquad \forall \, i \in [8]
    \tag{Counter-monotonicity} \\
    & X_3 \geqslant 0 \qquad \mbox{and} \qquad Y_5 \geqslant 0 \notag \\
    & X_1 + X_5 \geqslant 0 \qquad \mbox{and} \qquad X_2 + X_4 \geqslant 0 \qquad \tag{$A\succ C$ matchings in $X$}\\
    & Y_2 + Y_9 \geqslant 0, \qquad Y_3 + Y_8 \geqslant 0 \qquad \mbox{and} \qquad Y_4 + Y_7 \geqslant 0 \tag{$C\succ B$ matchings in $Y$} \\
    & \sum_{i=1}^{9} p_i = 1 \qquad \mbox{and} \qquad \sum_{i=1}^{5} p_i = 0.5  \tag{$\lvert AC\rvert + W_{AC} = 0.5$} \\
    & p_1 = p_5, \qquad p_2 = p_4 = p_7 = p_9 \qquad \mbox{and} \qquad p_3 = p_8 \tag{Coupling of masses} \\
    & Z_i, p_i \geqslant 0 \,, \ \forall \, i \in [9] \,. \notag
\end{align}

We note that the constraint $Z_i \geqslant Z_{\min}(X_i, Y_i)$ corresponds to a set of linear inequalities by \Cref{eq:Zmin}, which we write out explicitly in \Cref{app:dual}. Since $\|X\|_{\infty} + X(v) \geqslant 0$, we must have $Z_{i} \geqslant 0$. We include this constraint explicitly in the relaxation to aid in our analysis.

Since the above program contains a multiplicative term when computing the expectation of each variable, it is a bilinear program, where the objective multiplies the $p_i$ variables with the $(X_i,Y_i,Z_i)$ variables, and there are separate linear constraints for the $p_i$ and the $(X_i,Y_i,Z_i)$. In order to solve this program efficiently, we separate the constraints into two parts, where the first one has variables for each $p_i$ and the second one has the remaining variables. If we absorb the $\mathbb{E}X, \mathbb{E}Y, \mathbb{E}Z$ constraints into the objective, the two sets of constraints are disjoint. Since for any fixed $(X_i,Y_i,Z_i)$ variables, the bilinear program is linear in the $p_i$ variables, its optimum is achieved at a vertex of the polytope of the $p_i$. This means the overall optimum is also achieved at such a point, and it therefore suffices to enumerate all extreme points of the first set of constraints (the ones capturing $p_i$) and solve the bilinear program at every such extreme point.

Isolating the $p_i$ variables, we have a polytope defined by the following constraints:
\[
    2p_1 + 4p_2 + 2p_3 + p_6 = 1, \qquad 2p_1 + 2p_2 + p_3 = 0.5, \qquad
    p_1, p_2, p_3, p_6 \geqslant 0
\]
Eliminating $p_3,p_6$, the above reduces to the interior of a triangle on $(p_1,p_2)$ with vertices given by the point set $\{(0,0), (0,0.25), (0.25,0)\}$. Therefore, the $3$ extreme points of the polytope are given by \[(p_1, p_2, p_3, p_6) = \{(0, 0, 0.5, 0), (0, 0.25, 0, 0), (0.25, 0, 0, 0.5)\}.\] For each of the $3$ extreme points, we substitute the $p_i$ variables into \programCref{eq:final_lp} and solve the resulting LP. For $R = 2$, the optimal objective value at each such extreme point is exactly $0$, implying that the maximum distortion is at most $3$. We present the verifiable dual certificates in \Cref{app:dual}.

\subsubsection[Case 2: $\lvert AC\rvert > \lvert BC\rvert$]{Bilinear Program Case 2: $\lvert AC\rvert > \lvert BC\rvert$}

This case uses the same ideas as the previous one. We again have $\lvert AC\rvert \leqslant \lvert CA\rvert$ and $\lvert BC\rvert \leqslant \lvert CB\rvert$, so the partitions induced by $X$ and $Y$ are the same as before. We show the $(A, C)$ and $(C, B)$ matchings and the corresponding set of intervals in \Cref{fig:ac_large_LP}. For the $X$ partition, interval $1$ corresponds to $[0, W_{AC}]$, intervals $2$ through $4$ correspond to $[W_{AC}, \lvert AC\rvert]$, interval $5$ corresponds to $[\lvert AC\rvert, \lvert AC\rvert + W_{AC}]$, and intervals $6$ and $7$ correspond to $[\lvert AC\rvert + W_{AC}, 1]$. For the $Y$ partition, intervals $1$ and $2$ correspond to the segment $[0, W_{BC}]$, interval $3$ corresponds to $[W_{BC}, \lvert BC\rvert]$, intervals $4$ through $6$ correspond to $[\lvert BC\rvert,1 - \lvert BC\rvert + W_{BC}]$, and interval $7$ corresponds to $[1 - \lvert BC\rvert + W_{BC}, 1]$. The interpretations of each segment are given in \Cref{tab:interval}.

\begin{figure}[htbp]
    \centering
    \scalebox{0.85}{%
  \def\svgwidth{\columnwidth}
  \import{./figs/}{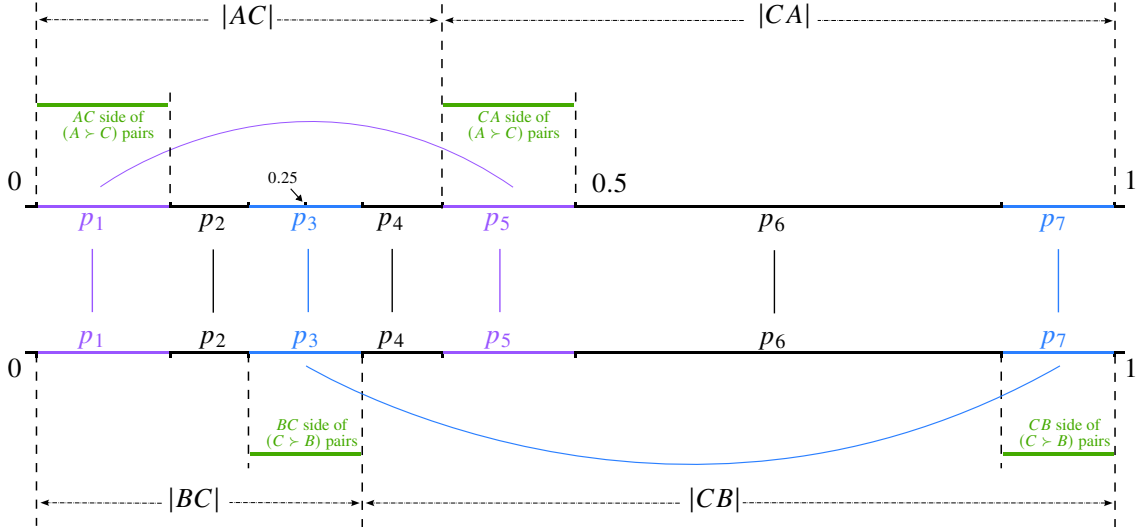}
}
    \caption{Interval split for Case 2. The interpretation of this figure is similar to \Cref{fig:ac_small_LP}. Note that analogous to that case, we have $p_2 = p_4$.}
    \label{fig:ac_large_LP}
\end{figure}

Define $X_i$, $Y_i$, and $p_i$ as in the previous case. We first show that $p_2 = p_4$. By \Cref{lem:midpoint}, we have $W_{BC} + \lvert BC\rvert = 0.5$, so the midpoint of interval $3$ must be at $0.25$. Similarly, we have $W_{AC} + \lvert AC\rvert = 0.5$ so the midpoint of intervals $2$, $3$, and $4$ collectively must also be at $0.25$. This implies $p_2 = p_4$. Since intervals $2$ through $4$ collectively are centered around $0.25$, and interval $5$ ends at $\lvert AC\rvert + W_{AC} = 0.5$, we also have $p_1 = p_5$. Finally, we define interval $7$ to be the interval that matches with interval $3$ in the $(B, C)$ matching. Thus $p_3 = p_7$.

For the matching constraints, interval $1$ is mapped with interval $5$ in the $(A, C)$ matching and interval $3$ is mapped with interval $7$ in the $(B, C)$ matching, so we have the constraints $X_1 + X_5 \geqslant 0$ and $Y_3 + Y_7 \geqslant 0$. By the counter-monotonic coupling of $X$ and $Y$, we have $X_i \geqslant X_{i+1}$ and $Y_i \leqslant Y_{i+1}$ for all $i \in [6]$. Finally, since the section $AC$ corresponds to positive $X$ values, and the section $CB$ corresponds to positive $Y$ values, we have $X_4 \geqslant 0$ and $Y_4 \geqslant 0$. Putting everything together, we obtain the following bilinear relaxation:

\begin{align}
    \label{eq:final_lp2}
    \min \qquad & \mathbb{E}X+(R+1)\cdot\mathbb{E}Y+R\cdot\mathbb{E}Z \,, &\\
    \textrm{s.t.} \qquad & \mbox{First three lines of constraints in \programCref{eq:final_lp}} \notag \\
    & X_4 \geqslant 0 \qquad \mbox{and} \qquad Y_4 \geqslant 0 \notag\\
    & X_1 + X_5 \geqslant 0 \qquad \mbox{and} \qquad Y_3 + Y_7 \geqslant 0 \tag{$AC, CB$ matchings} \\
    & \sum_{i=1}^{7} p_i = 1 \qquad \mbox{and} \qquad \sum_{i=1}^{5} p_i = 0.5  \tag{$\lvert AC\rvert + W_{AC} = 0.5$} \\
    & p_1 = p_5, \qquad p_2 = p_4 \qquad \mbox{and} \qquad p_3 = p_7 \tag{Coupling of masses} \\
    & Z_i, p_i \geqslant 0 \,, \ \forall \, i \in [7] \,. \notag
\end{align}

As before, the polytope over $p$ is given by:
\[
    2p_1 + 2p_2 + 2p_3 + p_6 = 1, \qquad 2p_1 + 2p_2 + p_3 = 0.5, \qquad
    p_1, p_2, p_3, p_6 \geqslant 0.
\]

Eliminating $p_3,p_6$, the above again reduces to the interior of a triangle on $(p_1,p_2)$ with vertices $\{(0,0), (0,0.25), (0.25,0)\}$. The polytope therefore has vertices given by \[(p_1, p_2, p_3, p_6) = \{(0, 0, 0.5, 0), (0, 0.25, 0, 0.5), (0.25, 0, 0, 0.5)\}.\] Solving the resulting linear programs again shows that for $R=2$, the objective is at least zero at each extreme point, hence showing the distortion is at most $3$. Again, we present the dual certificates in \Cref{app:dual}. This completes the proof of \Cref{thm:main-informal}.

    \subsection{Complementary Lower Bounds for the Protocol}

\label{sec:main_lb}

To complete the full picture, we now show that our analysis is tight by presenting a few families of parametric instances that yield a lower bound on distortion for any $(\lambda,w)$. This resulting lower bound function attains a minimum value of $3$ at $(\lambda^*,w^*)\approx (0.634, 0.732)$ as defined in \Cref{subsect:LP}, which coincides with the upper bound established in \Cref{thm:main-informal}. Therefore, our specific parameter choices are optimal. 

Three families of instances suffice for this purpose, and we provide a visualization of how their distortions behave as a function of $(\lambda, w)$. We defer the exact algebraic expressions to \Cref{app:algebra}.

\begin{definition}[Permissible Ranges for $\lvert AC\rvert, \lvert CB\rvert$]
        \label{def:min-max}
		For  $\lambda\in (1 /2,1)$ and $w > 0$, with $f(AC) = 1-\lambda$, one must have $AC_{\min} \leqslant \lvert AC\rvert  \leqslant AC_{\max}$, and with $f(CB) = \lambda$, $CB_{\min} \leqslant \lvert CB\rvert  \leqslant CB_{\max}$, where

        		\begin{align*}
						AC_{\min_{} }(\lambda, w) &= \frac{1-\lambda}{1+\lambda w} && CB_{\max_{} }(\lambda, w) = \frac{\lambda(1+w)}{1 + \lambda w}
        				\\[1em]
						CB_{\min_{} }(\lambda, w) &=
        				\begin{cases}
        						\displaystyle \frac{\lambda - (1-\lambda) w}{1 - (1-\lambda)w} & \displaystyle \text{ if }  w \leqslant  \frac{2\lambda-1}{1-\lambda} \\[1em]
        						\displaystyle \displaystyle \frac{\lambda}{1 + (1-\lambda)w} & \displaystyle \text{ if }w > \frac{2\lambda-1}{1-\lambda}
        				\end{cases}
        				&&
						AC_{\max_{} }(\lambda, w) =
        				\begin{cases}
        					\displaystyle \frac{1-\lambda}{1-(1-\lambda)w} & \displaystyle  \text{ if }w \leqslant \frac{2\lambda-1}{1-\lambda} \\[1em]
        					\displaystyle \frac{(1-\lambda)(1+w)}{1+(1-\lambda)w} & \displaystyle \text{ if } w > \frac{2\lambda-1}{1-\lambda}.
        				\end{cases}
        		\end{align*}

\end{definition}

These are the quantities that allow the $f(\boldsymbol\cdot)$ constraints to be satisfied by winning all deliberations (lower bounding the set sizes) or winning zero deliberation (upper bounding the set sizes). Observe that $AC_{\min_{} } + CB_{\max_{} } = AC_{\max_{} } + CB_{\min_{} } = 1$, regardless of $\lambda, w$. The four quantities are found by solving equations. For example, $AC_{\min}$ and $AC_{\max_{} }$ are found by respectively solving:

\begin{minipage}{0.47\textwidth}
		\[
				\begin{aligned}
						\text{find}\quad & \lvert AC\rvert = AC_{\min}  \\
						\text{s.t.}\quad 
						& m_{AC} = \min_{} \{\lvert AC\rvert , \lvert CA\rvert \}, \\
						& 0 \leqslant \lvert AC\rvert  \leqslant 1,\qquad  W_{AC} = m_{AC},\\
						& \lvert AC\rvert  + w \cdot W_{AC} = (1-\lambda) (1 + w\cdot m_{AC})
				\end{aligned}
		\]
\end{minipage}
\hfill
\begin{minipage}{0.47\textwidth}
		\[
				\begin{aligned}
						\text{find}\quad & \lvert AC\rvert = AC_{\max} \\
						\text{s.t.}\quad 
						& m_{AC} = \min_{} \{\lvert AC\rvert , \lvert CA\rvert \}, \\
						& 0 \leqslant \lvert AC\rvert \leqslant 1, \qquad W_{AC} = 0, \\
						& \lvert AC\rvert  + w \cdot W_{AC} = (1-\lambda) (1 + w\cdot m_{AC})
				\end{aligned}
		\]
\end{minipage}
\vspace{10pt} 

The quantities $CB_{\max}, CB_{\min}$ can be computed similarly. We now describe three types of instances. For all three examples, we assume $V$ has unit mass.

\begin{example}[Collinear Points $A-B-C$]
        \label{ex:collinear}
		Embed $V \cup \{A,B,C\}$ on $\mathbb{R}$. Put $A=0, B=1$, and $C=2$. Place voter $v_B$ of mass $AC_{\max_{} }$ at $B$ and $v_C$ with the remaining mass $CB_{\min_{} } = 1 - AC_{\max_{} }$ at $C$. Then:
		\begin{itemize}
				\item $A$ vs. $C$. Arbitrate $v_B$ in favor of $A$. Then $f(AC) = 1-\lambda$ is satisfied by $\lvert AC\rvert  = AC_{\max_{} }$, with $A$ winning zero deliberations.
				\item $C$ vs. $B$. All $(C,B)$ deliberations are ties, and we arbitrate all of them into $C\succ B$ pairs. Then $f(CB) = \lambda$ with $\lvert CB\rvert  = CB_{\min_{} }$ and $C$ winning every deliberation matching.
		\end{itemize}
		This instance has distortion $SC(A) /SC(B) = (AC_{\max_{} } + 2 CB_{\min_{} }) / (CB_{\min_{} })$. This generalizes \Cref{thm:raw-WUS}.
\end{example}

\begin{example}[Co-located $B$ and $C$]
        \label{ex:colocate}
		Embed $V \cup \{A,B,C\}$ on $\mathbb{R}$. Put $A=0$ and $B=C=1$. Place voter $v_A$ of mass $AC_{\min_{} }$ at $A$, and $v_{BC}$ of remaining mass $CB_{\max_{} } = 1 - AC_{\min_{} }$ at $B$ (equivalently $C$). Then:
		\begin{itemize}
				\item $A$ vs. $C$. All $(A,C)$ deliberations are ties and we arbitrate as $A\succ C$. Then $f(AC) = 1-\lambda$ is satisfied by $\lvert AC\rvert  = AC_{\min_{} }$ along with $A$ winning all deliberations.
				\item $C$ vs. $B$. All $(C,B)$ deliberations are also ties; we arbitrate in favor of $B\succ C$. Then $f(CB) = \lambda$ by $\lvert CB\rvert  = CB_{\max_{} }$, with $C$ winning zero deliberation.
		\end{itemize}
		This instance has distortion $SC(A) /SC(B) = CB_{\max_{} } / AC_{\min_{} }$.
\end{example}

\begin{example}[Triangular Instance]
        \label{ex:triangle}
		Embed $A,B,C$ on an equilateral triangle with side length $2$, and partition voters into three point masses of ordinal preferences $ACB, CBA$, and $BAC$, respectively. Define their voter-candidate distances by the following table, 
		where $\eta = 1 - CB_{\min_{} } - AC_{\min_{} } = AC_{\max_{} } - AC_{\min_{} } = CB_{\max_{} } - CB_{\min_{} }$. 
        \begin{table}[htbp!]
				\centering
                \renewcommand{\arraystretch}{1.05}
				\begin{tabular}{|c|c||c|c|c|}
						\hline
						Cluster & Mass & $d(v,A)$ & $d(v,B) $ & $d(v,C) $ \\
						\hline 
						\hline
						$ACB$ & $\eta $ & $1$ & $1$ & $1$ \\
						\hline
						$CBA$ & $CB_{\min_{} }$ & $3$ & $1$ & $1$ \\
						\hline
						$BAC$ & $AC_{\min_{} }$ & $2$ & $0$ & $2$ \\
						\hline
				\end{tabular}
		\end{table}
        
        We note that this instance can be embedded in $(\mathbb{R}^2, \ell_1)$ by placing $A = (0, 0)$, $B=(1, 1)$, $C = (2, 0)$, $ACB = (1, 0)$, $CBA = (2, 1)$, and $BAC = (1, 1)$.
        
		\begin{itemize}
				\item In this instance, $\lvert AC\rvert  = AC_{\max_{} }$ and $\lvert CB\rvert  = CB_{\max_{} }$. 
				\item $A$ vs. $C$. In the $(A,C)$ deliberation, $A$ is unable to win any: either $ACB, BAC$ when paired with $CBA$ results in $C\succ A$. However, because $\lvert AC\rvert  = AC_{\max_{} }$, this is exactly enough to ensure $f(AC) = 1-\lambda$.
				\item $C$ vs. $B$. By the same token, $BAC$ beats both $ACB, CBA$ in the $(C,B)$ deliberation, so every pair outputs $(B\succ C)$. Still, as $\lvert CB\rvert  = CB_{\max_{} }$ we nevertheless reach $f(CB) = \lambda$.
		\end{itemize}
		This instance has distortion
		\[
				\frac{SC(A)}{SC(B)} = \frac{(AC_{\max_{} } - AC_{\min_{} }) + 3 \cdot CB_{\min_{} } + 2 \cdot AC_{\min_{} }}{(AC_{\max_{} } - AC_{\min_{} }) + CB_{\min_{} }}.
		\] 
\end{example}

\noindent\paragraph{The Distortion Lower Bound Over $(\lambda, w)$. } Aggregating \Cref{ex:collinear,ex:colocate,ex:triangle}, we obtain a (piecewise) lower bound of the distortion of our rule with parameters $(\lambda, w)$. 
For each $(\lambda, w)$, we compute the distortions $d_1(\lambda, w)$ from \Cref{ex:collinear}, $d_2(\lambda, w)$ from \Cref{ex:colocate}, and $d_3(\lambda, w)$  from \Cref{ex:triangle}. We then set $\mathcal{D}(\lambda, w) = \max_i d_i(\lambda, w)$ and plot it in \Cref{fig:heatmap}. This creates a $2$D plane of lower bounds of the $(\lambda, w)$ deliberation-via-matching protocol, with global minimizer $(\lambda^*, w^*)$ attaining value $\mathcal{D}(\lambda^*, w^*) = 3$. By \Cref{thm:main-informal}, we conclude that our parameter choice $(\lambda^*, w^*)$ is tight and uniquely optimal. The exact algebraic expressions are provided in \Cref{app:algebra}.

\begin{figure}[htbp!]
    \centering
    \scalebox{0.95}{\input{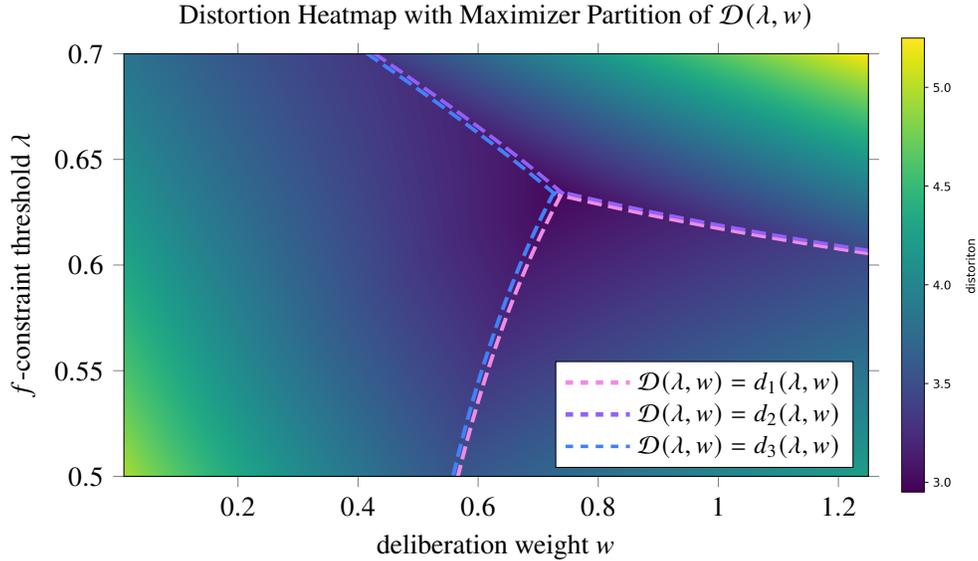}}
    \caption{Distortion heatmap and the maximizer ($\argmax$) partition of the $(\lambda, w)$-plane induced by $d_1, d_2, d_3$. Each color shows the decision region $\argmax_i d_i(\lambda, w)$; dashed curves represent the decision boundaries $d_i = d_j$. As $d_i$ quickly blows up, we only plot $(\lambda, w)$ over $[0.5,0.7] \times [0,1.25]$. The unique global minimum of $\mathcal{D}(\lambda, w)$ is $3$, attained by $(\lambda^*, w^*)$ (this is also the unique intersection of all three decision boundaries).}
    \label{fig:heatmap}
\end{figure}

Finally, observe that across all three examples, the lower bounds hold for every maximum cardinality matching, so long as we apportion preference and deliberation ties as described. Therefore, the tightness of \Cref{thm:main-informal} is robust to the \textit{choice} of matchings, which justifies using an \textit{arbitrary} maximum matching in our protocol.

    \section{A Sampling-Based Implementation}
\label{sec:sampling}

In \Cref{sec:general}, we fully analyzed the deliberation-via-matching protocol and proved that it attains distortion $3$ under suitable parameters. From a practical point of view, however, the protocol is admittedly demanding. When $m$ candidates are present, there are $O(m^2)$ distinct pairs of candidates, and the protocol in its current form may require each voter to participate in $O(m^2)$ deliberations.

In this section, we show that the deliberation-via-matching protocol has small sample complexity. We propose a sampling-based implementation that yields a high-probability approximation to the deterministic guarantee (of distortion $3$) with arbitrary accuracy and low per-voter complexity.

In slightly more detail, we demonstrate that for each pair $(A,B)$ of candidates, if we sample a subset of voters whose size grows only logarithmically in the number of candidates, the resulting scores $f(\cdot)$ concentrate around their expectations. We then use these scores in the weighted uncovered set tournament rule in \Cref{sec:protocol} with parameters $(\lambda^*, w^*)$.

\subsection{Estimating $f(\boldsymbol\cdot)$ via the Fractional Matching Score}

The functions $f(\cdot)$ originally depend on the matching (\Cref{eq:f}), but here we need a single population quantity that the sampled sampling is trying to estimate. We begin by replacing the matching-based score with a canonical version obtained by averaging over all maximum matchings. This is what we call the \textbf{fractional matching score}.

Formally, we fix two distinct candidates $A,B$. For any pair of voters $u\in AB$ and $v\in BA$, define
\[
    \mathcal{D}_{u,v} = \mathbf{1} [d(u,A) + d(v,A) \leqslant d(u,B) + d(v,B)],
\]
so $\mathcal{D}_{u,v} = 1$ if the pair $(u,v)$ favors $A$ over $B$ under the deliberation rule. We now define
\begin{equation}
    \label{eq:W-cal}
    \mathcal{W}_{AB} = \frac{1}{\lvert AB\rvert \cdot \lvert BA\rvert} \sum_{u\in AB, v\in BA} \mathcal{D}_{u,v},
\end{equation}
the fraction of disagreeing voter pairs in $AB\times BA$ whose deliberation favors $A$.

\begin{lemma}
    \label{lem:sampling-w}
    Let $M$ be a uniformly random maximum matching between $AB$ and $BA$. Then the expected number of matched pairs in $M$ whose deliberation favors $A$ is equal to $\min \{\lvert AB\rvert, \lvert BA\rvert\} \cdot \mathcal{W}_{AB}$.
\end{lemma}

\begin{proof}
    Let $M$ be a uniformly random maximum matching. For each pair $(u,v)\in AB \times BA$, let $Q_{u,v}$ be the indicator that $(u,v)$ are matched together in $M$. Then the number of $A$-winning matched pairs is $\sum_{u\in AB, v\in BA} \mathcal{D}_{u,v} Q_{u,v}$, so it is enough to estimate $\mathbb{E} [Q_{u,v}]$.

    Fix $u\in AB$ and $v\in BA$. By symmetry, every pair in $AB\times BA$ is matched with the same probability, so it must follow that $\sum_{u\in AB, v\in BA} \mathbb{E} [Q_{u,v}] = \min \{\lvert AB\rvert, \lvert BA\rvert\}.$ There are $\lvert AB\rvert \cdot \lvert BA\rvert$ possible pairs, so
    \[
        \mathbb{E} [Q_{u,v}] = \frac{\min \{\lvert AB\rvert, \lvert BA\rvert\}}{\lvert AB\rvert \cdot \lvert BA\rvert}.
    \]
    Therefore,
    \[
        \mathbb{E} \bigg[ \sum_{u\in AB, v\in BA}  \mathcal{D}_{u,v} Q_{u,v}\bigg] = \sum_{u\in AB, v\in BA} \mathcal{D}_{u,v} \mathbb{E}[Q_{u,v}] = \frac{\min \{\lvert AB\rvert, \lvert BA\rvert\}}{\lvert AB\rvert \cdot \lvert BA\rvert} \sum_{u\in AB, v\in BA} \mathcal{D}_{u,v},
    \]
    from which the claim follows.
\end{proof}

With this convention, we keep everything else analogous to \Cref{eq:f}, except we replace $W_{AB}$ with a quantity related to $\mathcal{W}_{AB}$. With abuse of notation, we now redefine $W_{AB}$ to be the expected mass of $A\succ B$ pairs under a uniformly random maximum matching, so $W_{AB} = \min \{\lvert AB\rvert, \lvert BA\rvert\} \cdot \mathcal{W}_{AB}$. Then, we let
\[
    \mathrm{score}(AB;w) = \frac{\lvert AB \rvert + w \cdot W_{AB}}{n}, \qquad f(AB;w) = \frac{\mathrm{score}(AB;w)}{\mathrm{score}(AB;w) + \mathrm{score}(BA;w)}
\]
so the notation does not change. The only difference is that $W_{AB}$ (resp. $W_{BA}$) no longer represents the number of $A\succ B$ (resp. $B\succ A$) pairs of any particular matching, but the average over all maximum matchings.

This definition gives us a canonical target for sampling. Under this new definition, $f(AB)$ now becomes a fixed quantity determined entirely by the population and the deliberation rule.

Finally, replacing the integral matching count by its fractional version does not change the distortion guarantee proved earlier. The reason is that $W_{AB}$ is additive over matched edges, and any fractional matching can be written as a convex combination of integer matchings. Therefore, if the fractional version satisfies the constraints used in the main analysis, then some integer matching satisfies those same constraints as well. This means \Cref{thm:main-informal} continues to hold if we use the fractional matching score in the protocol in \Cref{sec:protocol}.

\subsection{The Sampling Algorithm}

Fix an ordered pair of candidates $(A,B)$. We now describe how to estimate $f(AB)$. Once this is done for an ordered pair, we apply the same procedure to every candidate pair to obtain an estimate of the entire weighted tournament.

\begin{itemize}
    \item[(1)] Uniformly sample a set $\mathcal{T}$ of $T$ voters from $V$ without replacement.
    \item[(2)] Partition $\mathcal{T}$ into $\mathcal{T}_{AB}$ and $\mathcal{T}_{BA}$ based on their ordinal preferences.
    \item[(3)] Form a uniformly random maximum matching $M_{\mathcal{T}}$ on $\mathcal{T}_{AB}\times \mathcal{T}_{BA}$, for example by uniformly permuting both sets and pairing the first $\min (\lvert \mathcal{T}_{AB}\rvert, \lvert \mathcal{T}_{BA} \rvert)$ pairs.
    \item[(4)] Compute the empirical scores $\hat{f}(AB)$ and $\hat{f}(BA)$ using the outcomes of these deliberations, and use these scores in the weighted uncovered set rule with parameters $(\lambda^*,w^*)$ in \Cref{sec:protocol}. Concretely, let $\widehat{W}_{AB}$ be the number of matched pairs in $M_{\mathcal{T}}$ whose deliberation favors $A$, set $\widehat{\mathrm{score} }(AB;w) = (\lvert \mathcal{T}_{AB}\rvert  + w \cdot \widehat{W}_{AB}) / T$, and compute $\widehat{f}(AB;w) = \widehat{\mathrm{score}}(AB;w) / [\widehat{\mathrm{score} }(AB;w) + \widehat{\mathrm{score} }(BA;w)] $.
\end{itemize}

We show the following theorem. Note that unlike randomized metric distortion bounds \cite{AnshelevichP17,CharikarWRW24}, our bound holds with arbitrarily high probability, and not just in expectation. In other words, the theorem below should be treated as the practical implementation of a deterministic protocol.

\begin{theorem}
    \label{thm:sampling}
    For any $\epsilon, \delta > 0$, let $T = O\left(\frac{1}{\epsilon^2} \log \frac{m}{\delta}\right)$, where $m$ is the number of candidates. Then, with probability at least $1-\delta$, we have $\lvert \hat{f}(AB) - f(AB)\rvert \leqslant \epsilon$ for all pairs of candidates $(A,B)$.
\end{theorem}

The overall sample complexity is $O(m^2 T) = O(m^2 \log  m)$ for constant $\epsilon, \delta$, and is independent of the number of voters $n$. Further, if $n = \omega(m^2 T)$, then any given voter participates in the process (ranking and/or deliberation) for at most one pair $(A,B)$ with high probability. In particular, for constant $\epsilon, \delta$, it suffices that $n = \omega (m^2 \log m)$. This supports our assertion that the protocol has low cognitive complexity.

\begin{proof}[Proof of \Cref{thm:sampling}]
    In the proof below, we will approximate the fractional matching score by sampling. Let $k = \lvert \mathcal{T}_{AB}\rvert$ and $\ell = \lvert \mathcal{T}_{BA}\rvert$ be the realized sizes of the preference groups in the sample. Conditioned on $k$ and $\ell$, the number of deliberative wins in the sample is $\widehat{W} = \sum_{i\in AB, j\in BA}^{} \mathcal{D}_{i,j} Q_{i,j}$, where we recall that $\mathcal{D}_{i,j}$ is the indicator that the pair $(i,j)$ jointly prefers $A$, and $Q_{i,j}$ is the indicator that the pair $(i,j)$ is matched against each other in the sample matching $M_\mathcal{T}$. The empirical score is then given by $\widehat{\mathrm{score} }(AB;w) = (k + w \cdot \widehat{W}) /T$. By symmetry of uniform random matching, for any $i\in AB, j\in BA$, we have
    \[
        \mathbb{E} [Q_{i,j} \mid k,\ell] = \frac{k}{\lvert AB\rvert } \cdot \frac{\ell}{\lvert BA\rvert} \cdot \frac{\min_{} (k,\ell)}{k\cdot \ell} = \frac{\min_{} (k,\ell)}{\lvert AB\rvert \cdot \lvert BA\rvert }
    \]
    where $\lvert AB\rvert, \lvert BA\rvert$ are population sizes. Using \Cref{eq:W-cal}, we see $\mathbb{E}[\widehat{W}\mid k,\ell] = \min_{} (k,\ell) \cdot \mathcal{W}_{AB}$. (Recall $\mathcal{W}_{AB}$ is the fraction of disagreeing voter pairs in $AB\times BA$ whose deliberation favors $A$.)

    Conditioned on group sizes $k$ and $\ell$, and without loss of generality assuming that $k \leqslant \ell$, the quantity $\widehat{W}$ is the sum of the $\mathcal{D}_{i,j}$ values of a random matching of size $k$. This can be obtained by sorting the voters in $AB$ and $BA$ in random order, and pairing the first $k$ voters. We apply Azuma's inequality \cite{Boucheron} to the resulting Doob martingale. At step $t \leqslant k$, consider two executions that are identical till step $t-1$, but the first execution matches voters $(u,v)$ and the second execution matches voters $(u', v')$ at step $t$. We can couple the future executions by considering a random permutation of the unmatched voters in the first execution, and replacing the occurrence of $u'$ (resp. $v'$) with $u$ (resp. $v$) to obtain a coupled run of the second execution. In this coupling, the two executions differ in the $\mathcal{D}_{i,j}$ values of at most three matched edges, so the martingale differences are bounded by $3$. By Azuma's inequality, for any $\epsilon > 0$, we have:
    \[
        \mathbb{P} \left( \bigg \lvert \frac{\widehat{W}}{T} - \frac{\mathbb{E} [\widehat{W}\mid k,\ell]}{T} \bigg \rvert \geqslant  \epsilon \;\Big\vert\; k, \ell \right)  \leqslant  2 \exp  \left( - \frac{(\epsilon T)^2}{18k} \right)  \leqslant 2 \exp  \left( - \frac{\epsilon^2 T}{9} \right)
    \]
    where the final inequality follows from the assumption that $k \leqslant T /2$.

    Next, the group size $k = \lvert \mathcal{T}_{AB}\rvert$ follows a hypergeometric distribution with mean $\mathbb{E}[k] = T /n \cdot \lvert AB\rvert$. By Hoeffding's inequality \cite{hoeffding1963} for sampling without replacement, for any $\epsilon > 0$,
    \[
        \mathbb{P} \left(  \bigg \lvert \frac{k}{T} - \frac{\lvert AB\rvert }{n} \bigg \rvert \geqslant  \epsilon \right) \leqslant  2 \exp  (-2 \epsilon^2 T).
    \]

    Similarly, $\ell /T$ concentrates around $\lvert BA\rvert /n$. Let $\psi  = \min_{} \{\lvert AB\rvert, \lvert BA\rvert \} /n$. Since
    \[
        \bigg \lvert  \frac{\min_{} (k,\ell)}{T} - \psi \bigg \rvert \leqslant  \max_{} \left( \bigg \lvert \frac{k}{T} - \frac{\lvert AB\rvert }{n} \bigg \rvert , \bigg \lvert \frac{\ell}{T} - \frac{\lvert BA\rvert }{n} \bigg \rvert  \right) ,
    \]
    it follows that the expected contribution in the sample, $\mathbb{E} [\widehat{W} /T \mid k, \ell] = \min_{} (k,\ell) / T \cdot \mathcal{W}_{AB}$, concentrates around its population counterpart $\psi \cdot \mathcal{W}_{AB}$ with the same exponential rate:
    \[
        \mathbb{P} \left( \bigg \lvert \frac{\min_{} (k,\ell)}{T} \cdot \mathcal{W}_{AB} - \psi \cdot \mathcal{W}_{AB} \bigg \rvert \geqslant \epsilon \right)  \leqslant  4 \exp  (-2\epsilon^2 T).
    \]

    Combined with the concentration of $\widehat{W}$ around its conditional mean, this ensures that the empirical score $\widehat{\mathrm{score} }(AB)$ is within $O(\epsilon)$ of the population score with high probability. The same holds for $\widehat{\mathrm{score} }(BA)$. Because the denominator of $\widehat{f}(AB)$ (resp. $\widehat{f}(BA)$) is at least $1$, it follows that the empirical $\widehat{f}(AB)$ is within $O(\epsilon)$ of the population $f(AB)$, and the same for $f(BA)$.

    Finally, for a set of $m$ candidates, there are $\binom{m}{2}$ candidate pairs. To ensure that every pairwise score $\widehat{f}$ is within $\epsilon' = O(\epsilon)$ of its expectation, we apply a union bound over all edges in the tournament graph. To achieve this with probability at least $1-\delta$, the required sample size is $T = O(\frac{1}{\epsilon ^{'2}} \log  \frac{m}{\delta}) = O(\frac{1}{\epsilon^2} \log  \frac{m}{\delta})$. This completes the proof of \Cref{thm:sampling}.
\end{proof}

\noindent\paragraph{Approximate Distortion Guarantee.} \Cref{thm:sampling} also yields an approximate distortion bound for a relaxed notion of distortion. Let $\tilde{D}(\lambda_1, \lambda_2,w)$ be the upper bound on metric distortion for a relaxed protocol in \Cref{sec:protocol} where we choose a candidate $A$ such that for all $B$, either:
\begin{itemize}
    \item $f(AB) \geqslant 1 -\lambda_1$, or
    \item there exists $C$ such that $f(AC) \geqslant 1 - \lambda_1$ and $f(CB) \geqslant \lambda_2$.
\end{itemize}

We showed in \Cref{sec:general} that $\tilde{D}(\lambda^*, \lambda^*,w^*) = 3$. Further, from \cite{MunagalaW19}, it is known that such a candidate always exists if $\lambda_1 \geqslant \lambda_2$, since it subsumes the $\lambda_2$-WUS. The proof of \Cref{thm:sampling} shows that with probability $1-\delta$, for the candidate $A$ chosen by $\lambda^*$-WUS for the sample, in the population, for every other candidate $B$, there exists a candidate $C$ such that $f(AC) \geqslant 1 - \lambda^* - \epsilon'$ and $f(CB) \geqslant \lambda^* - \epsilon'$. Therefore, the distortion is at most $\tilde{D}(\lambda^* + \epsilon', \lambda^* - \epsilon', w^*)$. Therefore, the sampling protocol yields a metric distortion bound of $3 + g(\epsilon)$ with probability $1-\delta$, where $g(\cdot)$ is a function that depends on the Lipschitz constant of the function $\tilde{D}(\lambda_1, \lambda_2, w^*)$ around the point $\lambda_1 = \lambda_2 = \lambda^*$.

    \section{Open Questions}
\label{sec:conclusion}

From the distortion perspective, the immediate open question is to close the gap between the distortion upper bound of $3$ achieved by our protocol and the universal lower bound of $2$ (see \Cref{sec:lb}) for deterministic social choice rules in the same input model, namely, one that takes as input individual preferences together with pairwise deliberation outcomes. Further, can the bounds be improved via a randomized social choice rule? Similarly, our analysis crucially uses the $\lambda$-WUS tournament rule in order to restrict the analysis to three candidates. Can our bilinear relaxation extend to other types of tournament rules, for instance, those considered in \cite{CharikarRT025}?

Stepping back, how can a protocol analogous to matching voters with opposing preferences be extended to deliberating groups of size more than two, and will such an extension also be amenable to bilinear relaxations? Even more broadly, our model assumes opinions do not change during deliberations, and further assumes a specific model of deliberation where the outcome with closer average distance to the voters wins the deliberation. Can our model be extended to handle opinion change in the population via deliberation? Similarly, can our analysis technique be extended to other models of deliberation besides averaging, for instance, the randomized outcome models in \cite{GoelGM25,goyal2025metricdistortionprobabilisticvoting}, or models of deliberation as bargaining with a disagreement (or default) outcome \cite{FainGMS17}?

In sum, we believe that social choice with small-group deliberation is an exciting research direction with the potential for not only deep mathematical analysis, but also the design of novel social choice protocols with practical significance.

    \paragraph{Acknowledgment:} We used GPT-5 to assist with paraphrasing and clarifying portions of the text, generating code for the optimization routines, and drafting \Cref{app:dual} based on the corresponding dual solution. All content was reviewed, verified, and finalized by the authors, who take full responsibility for the paper’s accuracy and contributions.

    \bibliographystyle{plain}
    \bibliography{refs}

    \appendix
    \addtocontents{toc}{\protect\setcounter{tocdepth}{1}}
    \section{Omitted Algebraic Derivations from \Cref{sec:general}}
\label[appendix]{app:algebra}

\subsection{Proof of \Cref{lem:mn_mx}}

Recall that since $W_{AC} + W_{CA} = \min(\lvert AC\rvert, 1 - \lvert AC\rvert)$, and $\lvert AC\rvert + \lvert CA\rvert = 1$, we have
\[
    \mbox{score}(AC) = \frac{\lvert AC\rvert + w^* \cdot W_{AC}}{n} \qquad \mbox{and} \qquad \mbox{score}(AC) + \mbox{score}(CA) = \frac{1 + w^* \cdot \min(\lvert AC\rvert, 1 - \lvert AC\rvert)}{n}.
\]
Since $W_{AC} \leqslant \min(\lvert AC\rvert, 1 - \lvert AC\rvert)$, this means
\[
    1-\lambda^* = f(AC) = \frac{\mbox{score}(AC)}{\mbox{score}(AC) + \mbox{score}(CA)} \leqslant \frac{\lvert AC\rvert + w^* \cdot \min(\lvert AC\rvert, 1 - \lvert AC\rvert)}{1 + w^* \cdot \min(\lvert AC\rvert, 1 - \lvert AC\rvert)}.
\]
The RHS is strictly increasing as a function of $\lvert AC\rvert$ when $0 \leqslant \lvert AC\rvert \leqslant 1$. When setting $\lvert AC\rvert = 0.25$, we can verify that the RHS is $1 - \lambda^*$, implying that $0.25$ is the smallest possible value of $\lvert AC\rvert$ to achieve $f(AC) = 1-\lambda^*$.

Similarly, we have $W_{AC} \geqslant 0$, so that
\[
    1-\lambda^* = f(AC) \geqslant \frac{\lvert AC\rvert}{1 + w^* \cdot \min(\lvert AC\rvert, 1 - \lvert AC\rvert)}.
\]
Again, the RHS is strictly increasing as a function of $\lvert AC\rvert$ when $0 \leqslant \lvert AC\rvert \leqslant 1$. Setting $\lvert AC\rvert = 0.5$, we have the RHS is $1 - \lambda^*$, implying that $0.5$ is the largest possible value of $\lvert AC\rvert$.

An identical argument for $\lvert CB\rvert$ shows that $0.5 \leqslant \lvert CB\rvert \leqslant 0.75$.

\subsection{Proof of \Cref{lem:midpoint}}

By \Cref{lem:mn_mx}, we have $\lvert AC\rvert \leqslant \lvert CA\rvert$, so
\[
    1 - \lambda^* = f(AC) = \frac{\lvert AC\rvert + w^* \cdot W_{AC}}{1 + w^* \cdot \lvert AC\rvert}.
\]
Solving for $\lvert AC\rvert$ in terms of $W_{AC}$, we have $\lvert AC\rvert + W_{AC} = 0.5$ as desired. For $f(CB)$, we have by \Cref{lem:mn_mx} that $\lvert CB\rvert \geqslant \lvert BC\rvert$, so
\[
    \lambda^* = f(CB) = \frac{\lvert CB\rvert + w^* \cdot W_{CB}}{1 + w^* \cdot (1 - \lvert CB\rvert)}.
\]
Solving for $W_{CB}$ in terms of $\lvert CB\rvert$, we have $W_{CB} = 1.5 - 2 \lvert CB\rvert$. Substituting $\lvert CB\rvert = 1 - \lvert BC\rvert$ and $W_{CB} = \lvert BC\rvert - W_{BC}$ gives $\lvert BC\rvert + W_{BC} = 0.5$.

\subsection{Closed-form Distortion Lower Bounds in \Cref{sec:main_lb}}

We now derive closed-form lower bounds on the distortion of the $(\lambda, w)$ deliberation-via-matching protocol \\[0.5em] based on \Cref{ex:collinear,ex:colocate,ex:triangle}. Throughout this section, it is convenient to define $\displaystyle \tau(\lambda) = \frac{2\lambda-1}{1-\lambda}$.

\noindent\paragraph{Distortion of \Cref{ex:collinear}.} These instances have distortion $(AC_{\max_{} } + 2 CB_{\min_{} }) / CB_{\min_{} } = 1 + 1 / CB_{\min_{} }$. Hence
\[
    d_1(\lambda, w) =
    \begin{cases}
        1 + \displaystyle \frac{1 - (1-\lambda)w}{\lambda - (1-\lambda)w} = \frac{1 + \lambda - 2 (1-\lambda)w}{\lambda - (1-\lambda) w} & \qquad \text{ if } w \leqslant \tau(\lambda) \\[1.5em]
        1 + \displaystyle \frac{1 + (1-\lambda)w}{\lambda} = \frac{\lambda + 1 + (1-\lambda)w}{\lambda} & \qquad \text{ if } w > \tau(\lambda).
    \end{cases}
\]

\noindent\paragraph{Distortion of \Cref{ex:colocate}.} In these instances, $d_2(\lambda, w) = CB_{\max_{} } / AC_{\min_{} } = \displaystyle \frac{\lambda(1+w)}{1-\lambda}$.

\noindent\paragraph{Distortion of \Cref{ex:triangle}.} We rewrite the fraction by eliminating $CB_{\min}$ using the identity $AC_{\max} + CB_{\min} = 1$:
\begin{align*}
    d_3 &= \frac{\displaystyle 0.5 \cdot (AC_{\max_{} } - AC_{\min_{} }) + 1.5 \cdot CB_{\min_{} } + AC_{\min_{} }}{0.5 \cdot (AC_{\max_{} } - AC_{\min_{} }) + 0.5 \cdot CB_{\min_{} }} \\
    &= \frac{ 3 - 2 AC_{\max_{} } + AC_{\min_{} }}{1 - AC_{\min_{} }}  = \frac{3 - 2AC_{\max_{} } + AC_{\min_{} }}{CB_{\max_{} }},
\end{align*}
Then,
\[
    d_3(\lambda, w) =
    \begin{cases}
        \displaystyle \frac{2 + \lambda + (\lambda^2 + 6\lambda - 4)w - 3\lambda(1-\lambda)w^2}{\lambda(1+w)(1-(1-\lambda)w)} & \qquad \text{ if } w \leqslant \tau(\lambda) \\[1.5em]
        \displaystyle \frac{2 + \lambda + (3 \lambda^2 - 2\lambda + 2)w + (\lambda - \lambda^2)w^2}{\lambda(1+w)(1+(1-\lambda)w)} & \qquad \text{ if } w > \tau(\lambda).
    \end{cases}
\]

Finally, the piecewise analytic lower bound is given by $\mathcal{D}(\lambda, w) = \max_{} \{d_1(\lambda, w), d_2(\lambda, w), d_3(\lambda, w)\}$, and plotting this yields \Cref{fig:heatmap}.

    \section{Explicit Dual Construction for \Cref{subsect:LP}}
\label[appendix]{app:dual}

In this section we complement the results established in \Cref{subsect:LP} by providing the dual certificates of optimality of the corresponding LPs. We first note that by \Cref{eq:Zmin}, the term $Z_i \geqslant Z_{\min}(X_i, Y_i)$ can be captured by the following set of linear constraints:
\begin{align*}
    & M_X \geqslant X_i,\quad M_X \geqslant -X_i, \quad M_Y \geqslant Y_i,\quad M_Y \geqslant -Y_i & \forall\, i \in [9] \notag , \\
    & M_{X+Y} \geqslant X_i+Y_i,\quad M_{X+Y} \geqslant -(X_i+Y_i) & \forall\, i \in [9] \notag , \\
    & Z_i \geqslant \tfrac12(M_X + X_i), \quad Z_i \geqslant \tfrac12(M_Y - Y_i), \quad Z_i \geqslant \tfrac12\big(M_{X+Y} +X_i-Y_i\big) & \forall\, i \in [9] \notag,
\end{align*}
The variables $M_X$, $M_Y$, and $M_{X+Y}$ represent $\|X\|_{\infty}$, $\|Y\|_{\infty}$, and $\|X + Y\|_{\infty}$ respectively.

\subsection{Dual Certificates for Case 1}

\subsubsection[Vertex (p1,p2,p3,p6)=(0,0,0.5,0)]{Vertex $(p_1,p_2,p_3,p_6)=(0,0,0.5,0)$}

We consider the LP obtained by substituting $p_3=p_8=0.5$ into \programCref{eq:final_lp}. The primal objective (for $R=2$) is
\[
    \Phi_2 \;=\; \tfrac12(X_3+X_8) \;+\; \tfrac32(Y_3+Y_8) \;+\; (Z_3+Z_8).
\]

A valid dual certificate is given by the following nonnegative multipliers on the displayed constraints:
\[
    \begin{array}{rcl}
        0.5 &\text{on}& M_{X+Y} + X_8 + Y_8 \geqslant 0,\\[2pt]
        1   &\text{on}& Z_3 - \tfrac12 M_{X+Y} + \tfrac12 Y_3 - \tfrac12 X_3 \geqslant 0,\\[2pt]
        1   &\text{on}& Z_8 \geqslant 0,\\[2pt]
        1   &\text{on}& Y_3 + Y_8 \geqslant 0,\\[2pt]
        1   &\text{on}& X_3 \geqslant 0.
    \end{array}
\]

Adding the weighted inequalities (left-hand sides minus right-hand sides) with these multipliers gives
\[
    \begin{aligned}
        &\;0.5\big(M_{X+Y}+X_8+Y_8\big)
        +1\big(Z_3-\tfrac12 M_{X+Y}+\tfrac12 Y_3-\tfrac12 X_3\big)\\
        &\qquad\qquad +1\cdot Z_8 + 1\big(Y_3+Y_8\big) + 1\cdot X_3
        \;\geqslant\; 0.
    \end{aligned}
\]
Collecting terms on the left-hand side, all $M$-terms cancel and the sum simplifies exactly to
\[
    \tfrac12 X_3 + \tfrac12 X_8 + \tfrac32 Y_3 + \tfrac32 Y_8 + Z_3 + Z_8 \;=\; \Phi_2.
\]
Hence $\Phi_2 \geqslant 0$ for every primal feasible point.

\subsubsection[Vertex (p1,p2,p3,p6)=(0,0.25,0,0)]{Vertex $(p_1,p_2,p_3,p_6)=(0,0.25,0,0)$}

For $V_2$ the mass pattern places $p=0.25$ on indices $2,4,7,9$. The objective (for $R=2$) becomes
\[
    \Phi_2 \;=\; 0.25(X_2+X_4+X_7+X_9)
    \;+\; 0.75(Y_2+Y_4+Y_7+Y_9)
    \;+\; 0.5(Z_2+Z_4+Z_7+Z_9).
\]

A valid choice of multipliers (all nonnegative) on the primal constraints is:
\[
    \begin{array}{rcl}
        0.25 &\text{on}& M_Y + Y_4 \geqslant 0,\\[2pt]
        0.25 &\text{on}& M_{X+Y} + X_7 + Y_7 \geqslant 0,\\[2pt]
        0.25 &\text{on}& M_{X+Y} + X_9 + Y_9 \geqslant 0,\\[2pt]
        0.5  &\text{on}& Z_2 - \tfrac12 M_{X+Y} + \tfrac12 Y_2 - \tfrac12 X_2 \geqslant 0,\\[2pt]
        0.5  &\text{on}& Z_4 - \tfrac12 M_{X+Y} + \tfrac12 Y_4 - \tfrac12 X_4 \geqslant 0,\\[2pt]
        0.5  &\text{on}& Z_7 - \tfrac12 M_Y + \tfrac12 Y_7 \geqslant 0,\\[2pt]
        0.5  &\text{on}& Z_9 \geqslant 0,\\[2pt]
        0.5  &\text{on}& X_2 + X_4 \geqslant 0,\\[2pt]
        0.5  &\text{on}& Y_2 + Y_9 \geqslant 0,\\[2pt]
        0.25 &\text{on}& Y_4 + Y_7 \geqslant 0.
    \end{array}
\]

Summing these weighted inequalities yields on the left hand side
\[
    0.25 X_2 + 0.25 X_4 + 0.25 X_7 + 0.25 X_9
    + 0.75 Y_2 + 0.75 Y_4 + 0.75 Y_7 + 0.75 Y_9
    + 0.5 Z_2 + 0.5 Z_4 + 0.5 Z_7 + 0.5 Z_9,
\]
which is precisely $\Phi_2$. Thus $\Phi_2\geqslant 0$.

\subsubsection[Vertex (p1,p2,p3,p6)=(0.25,0,0,0.5)]{Vertex $(p_1,p_2,p_3,p_6)=(0.25,0,0,0.5)$}

For $V_3$ we use the mass assignment $p_1=0.25,\ p_5=0.25,\ p_6=0.5$. The objective (for $R=2$) is
\[
    \Phi_2 \;=\; 0.25(X_1+X_5) + 0.5 X_6
    \;+\; 0.75(Y_1+Y_5) + 1.5 \,Y_6
    \;+\; 0.5(Z_1+Z_5) + 1.0\,Z_6.
\]

A valid set of nonnegative multipliers is
\[
    \begin{array}{rcl}
        0.5 &\text{on}& M_Y + Y_1 \geqslant 0,\\[2pt]
        0.5 &\text{on}& M_{X+Y} + X_6 + Y_6 \geqslant 0,\\[2pt]
        0.5 &\text{on}& Z_1 - \tfrac12 M_{X+Y} + \tfrac12 Y_1 - \tfrac12 X_1 \geqslant 0,\\[2pt]
        0.5 &\text{on}& Z_5 - \tfrac12 M_{X+Y} + \tfrac12 Y_5 - \tfrac12 X_5 \geqslant 0,\\[2pt]
        1   &\text{on}& Z_6 - \tfrac12 M_Y + \tfrac12 Y_6 \geqslant 0,\\[2pt]
        0.5 &\text{on}& Y_6 - Y_5 \geqslant 0,\\[2pt]
        0.5 &\text{on}& X_1 + X_5 \geqslant 0,\\[2pt]
        1   &\text{on}& Y_5 \geqslant 0.
    \end{array}
\]

Summing these weighted inequalities yields on the left hand side
\[
    0.25 X_1 + 0.25 X_5 + 0.5 X_6
    + 0.75 Y_1 + 0.75 Y_5 + 1.5 Y_6
    + 0.5 Z_1 + 0.5 Z_5 + 1.0 Z_6,
\]
which equals $\Phi_2$, and therefore $\Phi_2\geqslant0$.

\subsection{Dual certificates for Case 2}

\subsubsection[Vertex (p1,p2,p3,p6)=(0,0,0.5,0)]{Vertex $(p_1,p_2,p_3,p_6)=(0,0,0.5,0)$}

We use the mass assignment $p_3 = p_7 = 0.5$. The objective is
\[
    \Phi_2 = \tfrac12 X_3 + \tfrac12 X_7 + \tfrac32 Y_3 + \tfrac32 Y_7 + Z_3 + Z_7.
\]

A valid dual certificate is given by the following nonnegative multipliers on the displayed constraints:
\[
    \begin{array}{rcl}
        0.5 &\text{on}& M_X + X_7 \geqslant 0,\\[3pt]
        1   &\text{on}& Z_3 - \tfrac12 M_X - \tfrac12 X_3 \geqslant 0,\\[3pt]
        1   &\text{on}& X_3 - X_4 \geqslant 0,\\[3pt]
        1.5 &\text{on}& Y_3 + Y_7 \geqslant 0,\\[3pt]
        1   &\text{on}& X_4 \geqslant 0 ,\\[3pt]
        1   &\text{on}& Z_7 \geqslant 0 .
    \end{array}
\]

Multiply and sum these inequalities with the listed multipliers. On the left-hand side the $M$-terms cancel:
\[
    \begin{aligned}
        &\;0.5(M_X+X_7) \;+\; 1\big(Z_3 - \tfrac12 M_X - \tfrac12 X_3\big)
        \;+\; 1(X_3 - X_4) \\
        &\qquad\quad +\;1.5(Y_3+Y_7) \;+\; 1\cdot X_4 \;+\; 1\cdot Z_7 \;\geqslant\; 0.
    \end{aligned}
\]
Grouping and simplifying the left hand side yields exactly
\[
    \tfrac12 X_3 + \tfrac12 X_7 + \tfrac32 Y_3 + \tfrac32 Y_7 + Z_3 + Z_7
    \;=\; \Phi_2.
\]
Thus $\Phi_2\geqslant 0$.

\subsubsection[Vertex (p1,p2,p3,p6)=(0,0.25,0,0.5)]{Vertex $(p_1,p_2,p_3,p_6)=(0,0.25,0,0.5)$}

We use the mass assignment $p_2=p_4=0.25, p_6=0.5$. A valid dual certificate is given by the following nonnegative multipliers on the displayed constraints:
\[
    \begin{array}{rcl}
        0.5 &\text{on}& M_Y + Y_2 \;\geqslant\; 0,\\[3pt]
        0.5 &\text{on}& M_{X+Y} + X_6 + Y_6 \;\geqslant\; 0,\\[3pt]
        0.5 &\text{on}& Z_2 - \tfrac12 M_{X+Y} + \tfrac12 Y_2 - \tfrac12 X_2 \;\geqslant\; 0,\\[3pt]
        0.5 &\text{on}& Z_4 - \tfrac12 M_{X+Y} + \tfrac12 Y_4 - \tfrac12 X_4 \;\geqslant\; 0,\\[3pt]
        1        &\text{on}& Z_6 - \tfrac12 M_Y + \tfrac12 Y_6 \;\geqslant\; 0,\\[3pt]
        0.5 &\text{on}& X_2 - X_3 \;\geqslant\; 0,\\[3pt]
        0.5 &\text{on}& X_3 - X_4 \;\geqslant\; 0,\\[3pt]
        0.5 &\text{on}& Y_5 - Y_4 \;\geqslant\; 0,\\[3pt]
        0.5 &\text{on}& Y_6 - Y_5 \;\geqslant\; 0,\\[3pt]
        1        &\text{on}& X_4 \;\geqslant\; 0,\\[3pt]
        1        &\text{on}& Y_4 \;\geqslant\; 0.
    \end{array}
\]

Summing these weighted inequalities yields on the left hand side
\[
    0.25(X_2+X_4) + 0.5 X_6
    + 0.75(Y_2+Y_4) + 1.5 Y_6
    + 0.5(Z_2+Z_4) + 1.0 Z_6.
\]
This is exactly
\[
    \Phi_2 = \mathbb{E}X + 3\mathbb{E}Y + 2\mathbb{E}Z
\]
under the mass assignment $p_2=p_4=0.25,\ p_6=0.5$. Therefore $\Phi_2 \geqslant 0$ for this vertex.

\subsubsection[Vertex (p1,p2,p3,p6)=(0.25,0,0,0.5)]{Vertex $(p_1,p_2,p_3,p_6)=(0.25,0,0,0.5)$}

We use the mass assignment $p_1=p_5=0.25,\ p_6=0.5$. A valid dual certificate is given by the following nonnegative multipliers on the displayed constraints:
\[
    \begin{array}{rcl}
        0.5 &\text{on}& M_Y + Y_1 \;\geqslant\; 0,\\[3pt]
        0.5 &\text{on}& M_{X+Y} + X_6 + Y_6 \;\geqslant\; 0,\\[3pt]
        0.5 &\text{on}& Z_1 - \tfrac12 M_{X+Y} + \tfrac12 Y_1 - \tfrac12 X_1 \;\geqslant\; 0,\\[3pt]
        0.5 &\text{on}& Z_5 - \tfrac12 M_{X+Y} + \tfrac12 Y_5 - \tfrac12 X_5 \;\geqslant\; 0,\\[3pt]
        1        &\text{on}& Z_6 - \tfrac12 M_Y + \tfrac12 Y_6 \;\geqslant\; 0,\\[3pt]
        0.5 &\text{on}& Y_6 - Y_5 \;\geqslant\; 0,\\[3pt]
        0.5 &\text{on}& X_1 + X_5 \;\geqslant\; 0,\\[3pt]
        1        &\text{on}& Y_5 \;\geqslant\; 0.
    \end{array}
\]

Summing these weighted inequalities yields on the left hand side
\[
    0.25(X_1+X_5) + 0.5 X_6
    + 0.75(Y_1+Y_5) + 1.5 Y_6
    + 0.5(Z_1+Z_5) + 1.0 Z_6,
\]
which equals the objective
\[
    \Phi_2 = \mathbb{E}X + 3\mathbb{E}Y + 2\mathbb{E}Z
\]
for the mass assignment $p_1=p_5=0.25,\ p_6=0.5$. Thus $\Phi_2\geqslant0$ for this vertex.

Therefore, for each of the six extreme points of the $p$-polytope in Case 1 and 2, the nonnegative dual multipliers above satisfy
\[
    \sum_j \lambda_j(\text{LHS}_j-\text{RHS}_j) = \Phi_2,
\]
hence each yields an analytic dual certificate proving $\Phi_2 \geqslant 0$ and therefore a distortion bound of at most $3$.

\end{document}